\theoremstyle{plain}
\newtheorem*{theorem*}{Theorem}
\newtheorem{theorem}{Theorem}
\newenvironment{customthm}[1]
  {\innercustomthm}
  {\endinnercustomthm}
\tikzset{
    >=stealth',
    punkt/.style={
           rectangle,
           rounded corners,
           draw=black, very thick,
           text width=6.5em,
           minimum height=2em,
           text centered},
    pil/.style={
           ->,
           thick,
           shorten <=2pt,
           shorten >=2pt,}
}
  \providecommand\BibTeX{{%
    \normalfont B\kern-0.5em{\scshape i\kern-0.25em b}\kern-0.8em\TeX}}}
\definecolor{success}{rgb}{0.6,0.9,0.6}
\definecolor{failure}{rgb}{0.9,0.6,0.6}
 \newcommand{\pn}{Recorp\xspace}
\newcommand{\pullop}{\textsf{pull}\xspace}
\newcommand{\sleepop}{\textsf{sleep}\xspace}
\newcommand{\pullops}{\textsf{pulls}\xspace}
\newcommand{\srv}[1]{$srv(#1)$\xspace}
\newcommand{\servicelist}{\textsf{service list}\xspace}
\newcommand{\activelist}{\textsf{active list}\xspace}
\newcommand{\activeflows}{\relax\ifmmode \mathcal{A}\else $\mathcal{A}$\xspace\fi}
\newcommand{\tmat}[1]{\relax\ifmmode \mathcal{M}_{#1}\else $\mathcal{M}_{#1}$\xspace\fi}
\newcommand{\tmatlb}[1]{\relax\ifmmode \mathcal{\widehat{M}}_{#1}\else $\mathcal{\widehat{M}}_{#1}$\xspace\fi}
\newcommand{\allstates}{\relax\ifmmode \Psi \else $\Psi$\xspace\fi}
\newcommand{\reliability}[1]{\relax\ifmmode R_{#1} \else $R_{#1}$\xspace\fi}
\newcommand{\reliabilitylb}[1]{\relax\ifmmode \widehat{R}_{#1} \else $\widehat{R}_{#1}$\xspace\fi}
\newcommand{\linkq}[1]{\relax\ifmmode LQ_{#1} \else $LQ_{#1}$\xspace\fi}
\newcommand{\linkqt}[1]{\relax\ifmmode LQ_{#1}(t) \else $LQ_{#1}(t)$\xspace\fi}
\newcommand{\snapshot}[1]{\relax\ifmmode \boldsymbol{LQ}_{#1} \else $\boldsymbol{LQ}_{#1}$\xspace\fi}
\newcommand{\pstate}[1]{\relax\ifmmode \boldsymbol{P}_{#1} \else $\boldsymbol{P}_{#1}$\xspace\fi}
\newcommand{\pstatelb}[1]{\relax\ifmmode \boldsymbol{\widehat{P}}_{#1} \else $\boldsymbol{\widehat{P}}_{#1}$\xspace\fi}
\newcommand{\pull}[2]{PL$_{#1}$(#2)}
\newcommand{\tx}[2]{TX$_{#1}$(#2)}
\newcommand{\flowpath}[1]{\relax\ifmmode \Gamma_{#1} \else $\Gamma_{#1}$\xspace\fi}
\newcommand{\builder}{\textsf{builder}\xspace}
\newcommand{\evaluator}{\textsf{evaluator}\xspace}
\newcommand{\policy}{\relax\ifmmode \pi \else$\pi$\xspace\fi}
\newcommand{\success}{\textsf{S}\xspace}
\newcommand{\failure}{\textsf{F}\xspace}
\newcommand{\psuccess}{\relax\ifmmode P_{s}\else $P_{s}$\xspace\fi}
\newcommand{\pfailure}{\relax\ifmmode P_{f} \else $P_{f}$\xspace\fi}
\newcommand{\pmin}{\relax\ifmmode m \else$m$\xspace\fi}
\newcommand{\pminactual}{\relax\ifmmode \bar{P}_{m} \else$\bar{P}_{m}$\xspace\fi}
\newcommand{\nodes}{\relax\ifmmode \mathcal{N} \else$\mathcal{N}$\xspace\fi}
\newcommand{\edges}{\relax\ifmmode \mathcal{E} \else$\mathcal{E}$\xspace\fi}
\newcommand{\flows}{\relax\ifmmode \mathcal{F} \else$\mathcal{F}$\xspace\fi}
\begin{document}

\title{\pn: Receiver-Oriented Policies for Industrial Wireless Networks}

\author{Ryan Brummet}
\email{ryan-brummet@uiowa.edu}
\orcid{1234-5678-9012}
\author{Md Kowsar Hossain}
\email{mdkowsar-hossain@uiowa.edu}
\author{Octav Chipara}
\email{octav-chipara@uiowa.edu}
\author{Ted Herman}
\email{ted-herman@uiowa.edu}
\author{David E.~Stewart}
\email{david-e-stewart@uiowa.edu}
\affiliation{%
  \institution{University of Iowa}
}

%
%
%
%
%
%

\renewcommand{\shortauthors}{Brummet, et al.}

\begin{abstract}
      Future Industrial Internet-of-Things (IIoT) systems will require wireless solutions to connect sensors, actuators, and controllers as part of high data rate feedback-control loops over real-time flows. A key challenge in such networks is to provide predictable performance and adaptability in response to  link quality variations. We address this challenge by developing RECeiver ORiented Policies (Recorp), which leverages the stability of IIoT workloads by combining offline policy synthesis and run-time adaptation. Compared to schedules that service a single flow in a slot, Recorp policies share slots among multiple flows by assigning a coordinator and a list of flows that may be serviced in the same slot. At run-time, the coordinator will execute one of the flows depending on which flows the coordinator has already received. A salient feature of Recorp is that it provides predictable performance: a policy meets the end-to-end reliability and deadline  of flows when the link quality exceeds a user-specified threshold. Experiments show that across IIoT workloads, policies provided a median increase of 50\% to 142\% in real-time capacity and a median decrease of 27\% to 70\% in worst-case latency when schedules and policies are configured to meet an end-to-end reliability of 99\%.
\end{abstract}

\begin{CCSXML}
<ccs2012>
	<concept>
	<concept_id>10003033.10003039</concept_id>
	<concept_desc>Networks~Network protocols</concept_desc>
	<concept_significance>500</concept_significance>
	</concept>
	<concept>
	<concept_id>10003033.10003068</concept_id>
	<concept_desc>Networks~Network algorithms</concept_desc>
	<concept_significance>500</concept_significance>
	</concept>
	<concept>
	<concept_id>10003033.10003083.10003094</concept_id>
	<concept_desc>Networks~Network dynamics</concept_desc>
	<concept_significance>500</concept_significance>
	</concept>
	<concept>
	  <concept_id>10003033.10003083.10003095</concept_id>
  <concept_desc>Networks~Network reliability</concept_desc>
  <concept_significance>100</concept_significance>
 </concept>
	<concept_id>10003033.10003106.10003112</concept_id>
	<concept_desc>Networks~Cyber-physical networks</concept_desc>
	<concept_significance>500</concept_significance>
	</concept>

</ccs2012>
\end{CCSXML}

\ccsdesc[500]{Networks~Network protocols}
\ccsdesc[500]{Networks~Network algorithms}
\ccsdesc[500]{Networks~Network dynamics}
\ccsdesc[100]{Networks~Network reliability}
\ccsdesc[500]{Networks~Cyber-physical networks}

\keywords{Wireless communication, TDMA, reliability}

\maketitle

\section{Introduction}
\label{sec:introduction}

Industrial Internet-of-Things (IIoT) systems are gaining rapid adoption in process control industries such as oil refineries, chemical plants, and factories. 
In contrast to prior work that has focused primarily on low-data rate or energy-efficient applications, we are interested in the next generation of smart factories expected to use sophisticated powered sensors such as cameras, microphones, and accelerometers (e.g., \cite{lynch2008implementation, correll2017wireless, hayat2016survey}). 
Since such applications will require higher data rates,  we need to develop a versatile wireless solution to connect them with actuators and controllers as part of feedback-control loops over multihop \emph{real-time flows}.
A practical solution must meet the following two requirements: (1) must support high data rates,  and (2) support real-time communication over multiple hops.
Both requirements must be met, notwithstanding significant variations in the quality of wireless links common in harsh industrial environments~\cite{candell2017industrial, emInterference2}.

\begin{wrapfigure}{r}{0.45\textwidth}
    \centering
    \includegraphics[width=\linewidth]{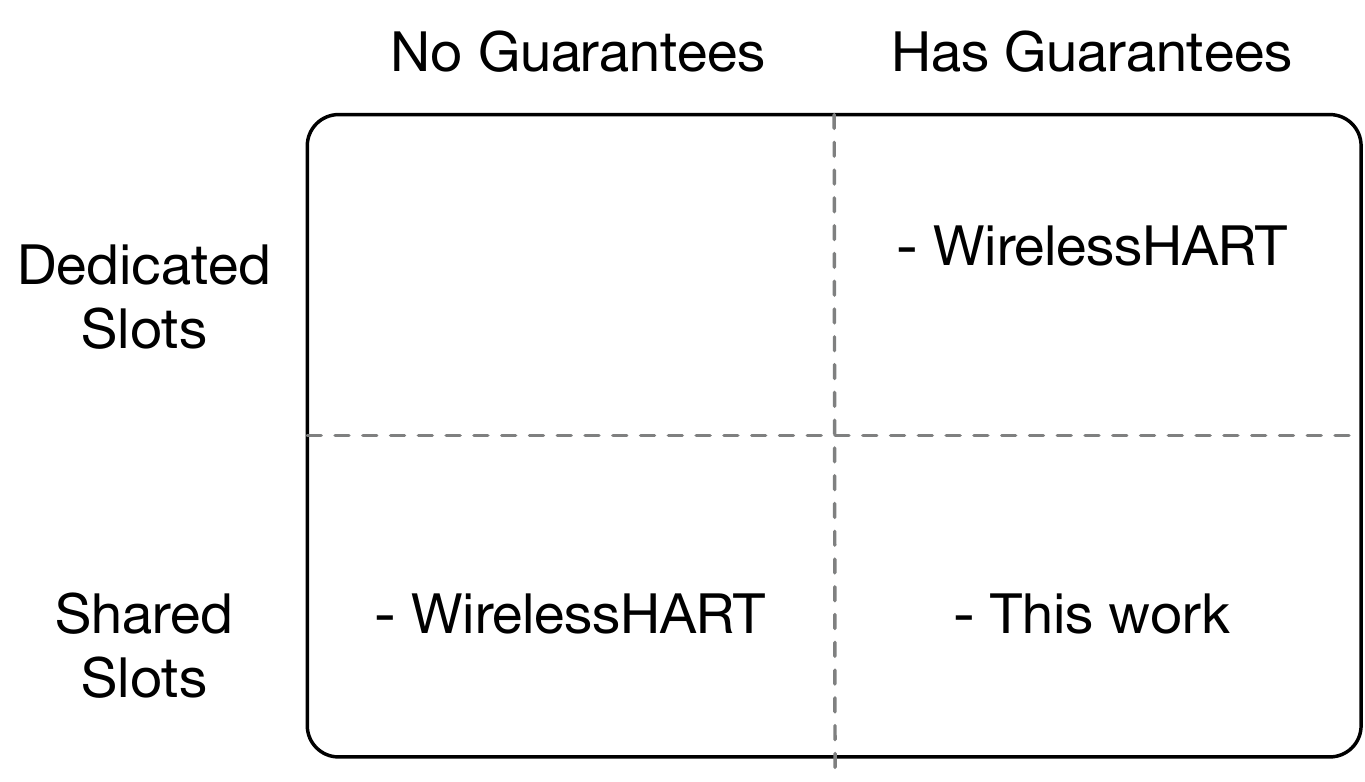}
    \caption{Design space of wireless control solutions.}
    \label{fig:related-work}
\end{wrapfigure}


Let's examine whether WirelessHART can meet the requirements of the next generation of IIoT applications. WirelessHART  is the state-of-the-art standard for industrial wireless communication and has successfully provided high reliability in a broad range of industrial settings. 
At the heart of WirelessHART is Time Slotted Channel Hopping (TSCH) ---  a MAC layer that combines Time Division Multiple Access (TDMA) and channel-hopping in a mesh network.
The TSCH data plane relies on a centralized network manager to generate routes and a transmission schedule for all the flows in the network.
The schedule is represented as a two-dimensional scheduling matrix that specifies the time and frequency of each transmission.
TSCH supports both real-time and best-effort traffic by using two scheduling strategies whose trade-offs are shown in Figure \ref{fig:related-work}.

To support real-time traffic, a transmission is assigned to a \emph{dedicated entry} in the scheduling matrix and, at run-time, the transmission is performed without contention.
The reliability of real-time traffic is ensured by using retransmissions and channel hopping.
The number of retransmissions allocated is usually determined based on the worst-case quality of a link as required to tolerate significant variations in link quality. 
Since the scheduling matrix cannot be updated at the rate at which the link quality varies, the only run-time adaption available is to cancel  a link's retransmissions when an acknowledgment is received.
As a result, it is common for a significant number of slots to remain unused when a packet is relayed successfully to the next hop before exhausting a link's allocated retransmissions.
As demonstrated by the experiments in Section \ref{sec:experiments}, protocols that use dedicated entries cannot handle higher data rates efficiently.

In contrast, best-effort traffic is supported by having multiple transmissions assigned to a \emph{shared entry} in the scheduling matrix.
At run-time, contention-based techniques are used to arbitrate which transmissions will be performed.
Shared entries provide more opportunities for locally adapting what transmissions may be performed, resulting in more efficient use of network resources.
Unfortunately, there are no current techniques to effectively analyze the network's performance when shared entries are used.
The open research question, and the focus of this paper, 
is whether \emph{it is possible to use shared entries to support higher throughput and respond more effectively to changes induced by link quality variations while providing performance guarantees.}

To answer this question, we propose \emph{\textbf{REC}eiver \textbf{OR}iented \textbf{P}olicies (\pn)} -- a new data plane that provides higher performance and agility than traditional scheduling solutions that do not use shared entries.
We exploit the typical characteristics of the industrial setting to obtain improvements in network capacity and latency while providing predictability under prescribed link variability. 
Specifically, our approach has the following features:
\begin{itemize}
    \item Since IIoT workloads consist of sets of real-time flows that are stable for long periods of time, we compute offline \pn policies and disseminate them to all nodes.
    \pn policies assign a coordinator and list of candidate flows for each entry in the scheduling matrix.
   Only one of the candidate flows will be executed at run-time, depending on which flows the coordinator has already received.
    The benefit of this approach is that it enables flows to be dynamically executed in an entry depending on the successes and failures of transmissions observed at run-time.
    As a consequence, \pn policies can handle variations in link quality more effectively than schedules.
    \item We propose a novel link model in which the quality of the links can vary \emph{arbitrarily} within an interval from slot to slot.
    Our model is motivated by current guidelines for deploying wireless IIoT networks (e.g., \cite{emerson-deployment-guide}), focusing on ensuring that communication links have a minimum link quality.
    The proposed model is well-suited to industrial settings where link quality may vary widely over short time scales.
    \item
    In contrast to best-effort entry sharing approaches that provide no performance guarantees, we ensure that a constructed \pn policy will meet a user-specified reliability and deadline constraint for each flow as long as the quality of all (used) links exceeds a minimum link quality as specified by our model. 
\end{itemize}

We demonstrate the effectiveness of \pn through testbed measurements and simulations.
When schedules and \pn policies are configured to meet the same target end-to-end reliability of 99\%,
empirical results show that \pn policies increased the median real-time capacity by 96\% for a data collection workload.
Furthermore, the performance bounds derived analytically were safe: \pn policies met all end-to-end reliability and deadline constraints when the minimum link quality exceeded a user-specified level of 70\%.  
Larger-scale multihop simulations that consider two topologies indicate that across typical IIoT workloads, policies provided a median increase of 50\% to 142\% in the real-time capacity as well as a median decrease of 27\% to 70\% in worst-case latency.

The remainder of the paper is organized as follows. Section \ref{sec:problem-formulation} describes the problem formulation and informally introduces \pn policies and the challenges associated with their synthesis. Section \ref{sec:system-model} describes \pn's system and network models, while Section \ref{sec:reliability-model} introduces \pn's reliability model. \pn is described in Section \ref{sec:design}. Simulation and testbed experiments evaluating \pn's performance against representative protocols using both dedicated and shared entries are included  in Section \ref{sec:experiments}. Section \ref{sec:discussion} describes how \pn handles network dynamics, aperiodic traffic, and energy efficiency. \pn is placed in the context of related work in Section \ref{sec:related-work}. We conclude the paper in Section \ref{sec:conclusions}.

\section{Problem Formulation}
\label{sec:problem-formulation}
In this section, we start by considering the problem of building real-time protocols from a fresh perspective, discuss how this perspective opens new opportunities for optimization, and then informally introduce \pn policies while highlighting the challenges of their synthesis.

\textbf{Optimization Problem:}
We consider supporting real-time and reliable communication as a sequential decision problem.
In each slot, the offline policy synthesis procedure uses the current estimate of the network state to select the actions performed in the current slot.
%
Then, the estimated network state is updated to reflect the impact of those actions.
In this paper, we limit our attention to myopic (or greedy) policies that maximize the number of flows that may be executed in a slot while providing prioritization based on the flows' statically assigned priorities.
A myopic policy selects the optimal actions over a time horizon of one slot, but those decisions may be suboptimal over longer horizons.
Our choice is motivated by the simplicity of myopic policies which can be synthesized efficiently. 
The unique aspects of \pn policies are what actions may be performed in a slot and how the network state is represented.

\textbf{Intuition:}
Schedules and policies differ in the information they use as part of the offline scheduling and synthesis process.
Consider a star topology with three nodes where the base station is the receiver of two incoming flows $F_0$ and $F_1$.
Both flows are released at the beginning of slot 0 with
flow $F_0$ having a higher priority than flow $F_1$.
Since $F_0$ and $F_1$ share the same receiver, only one of them can transmit in the first slot without conflict. 
In slot 0, both schedules and policies assign and execute (at run-time) $F_0$ to enforce prioritization.

Schedules and policies differ in how they account for the outcome of $F_0$'s transmission.
At run-time, the network is in one of two states, depending on the outcome of $F_0$'s transmission: 
either $F_0$'s data was relayed successfully to the base station, or it was not.
Scheduling approaches ignore this information and assign a fixed number of retransmissions for $F_0$, regardless of whether these retransmissions are successful or not at run-time.
However, when we capture both possible outcomes, there are new opportunities for optimization.
Ideally, we would like to transmit $F_1$ if $F_0$ has succeeded or otherwise retransmit $F_0$.
Surprisingly, we can achieve this behavior (which is impossible for scheduling approaches):
Offline, both flows $F_0$ and $F_1$ are assigned to be \emph{candidates} to be executed in slot $1$.
At run-time, the base station will track the flows from which it has received packets in the previous slots.
As a result, it will know whether $F_0$ was successful or not at the end of slot 0, i.e., the star network's precise state.
Using this information, in slot 1, the base station can request $F_1$'s packet if it has already received $F_0$ or, otherwise, it can request a retransmission for $F_0$.
We say that $F_0$ and $F_1$ \emph{share} slot 1 as either flow may execute at run-time depending on the observed successes and failures.

\textbf{Actions:}
This approach can be generalized to multi-hop scenarios by observing that any node with multiple flows routed through it can act as a coordinator for those flows, not just a base station in a star topology.
A \pn policy is represented as a matrix whose rows indicate channels and columns indicate slots. 
In each entry of the matrix, a \pn policy may include, at most, one \pullop.
A \emph{\pullop} has two arguments: a \emph{coordinator} and a \emph{service list}.
A \pullop is executed by a coordinator that can dynamically request data (i.e., a \texttt{pull}, henceforth) from a service list of flows depending on the outcome of previous transmissions.
The synthesis procedure determines the nodes that will be coordinators and the composition of the service list, both of which can change from slot to slot.
At run-time, a coordinator executing a \pullop requests the packet of the first flow in the service list from which it has not yet received the packet.
The adaptation mechanism is localized, lightweight, and does not require carrier sense.

\textbf{State Estimation:}
A challenge to synthesizing policies is to estimate the network's state as \pullops are performed.
Specifically, we need to know the likelihood that a flow's packet is located at a specific node in a given slot.
Knowing this information \emph{offline} is challenging because the quality of a link is probabilistic, and the likelihood of a successful transmission varies from slot to slot.
To address this challenge, we propose a Threshold Link Reliability (TLR) model.
We model the quality of a link \linkqt{i} in slot $t$ used by flow $i$ as a Bernoulli variable.
TLR allows the quality of the link to change arbitrarily from slot-to-slot as long as it exceeds a minimum value \pmin (i.e., $\linkq{i}(t) \ge \pmin~~~ \forall~t$).
We will show it is possible to provide guarantees on the performance of \pn when all links follow the TLR model.


\textbf{Scalability:}
Another significant challenge in synthesizing policies is avoiding the state explosion problem.
The critical decision is how to balance the trade-off between the expressiveness of policies, the performance improvements they provide, and the scalability of the synthesis procedure.
Cognizant of these trade-offs, we make two important design choices:
(1) We limit nodes to operating on their local states such that their decisions are independent of the state of other nodes.
As a consequence, the probabilities of packets being forwarded across links of a multi-hop flow are independent. 
This property reduces the number of states maintained during synthesis since it is sufficient to capture the interactions of flows locally at each node rather than globally across the network.
(2) The synthesis procedure incrementally constructs policies in a slot-by-slot manner using a \builder and an \evaluator.
The \builder casts the problem of determining the \pullops performed in the current slot as an Integer Linear Program (ILP).
In turn, the \evaluator applies each \pullop selected by the \builder to the system state and tracks the state as it evolves from slot to slot.
The iterative nature of the synthesis procedure improves its scalability as it suffices to maintain only the states associated with the current slot.

\section{System Model}
\label{sec:system-model}
We base our network model on WirelessHART as it is an open standard developed specifically for IIoT systems with stringent real-time and reliability requirements~\cite{wirelesshart}.
A network consists of a \emph{base station} and tens of \emph{field devices}. 
\pn is best-suited for applications that require high data rates and have a backbone of grid-powered nodes to carry this traffic (e.g., \cite{burns2018airtight,agarwal2011duty,chipara2010reliable}).

A centralized \emph{network manager} is responsible for synthesizing policies, evaluating their performance, and distributing them across the network.  
The field devices form a time-synchronized wireless mesh network that we model as a graph $G(\nodes, \edges)$, where \nodes and \edges represent the devices (including the base station) and wireless links.
The network can be   synchronized using a high-accuracy time synchronization protocol designed for wireless sensor networks (see \cite{timeSyncSurvey} for a survey).
We will initially assume that the communication graph remains fixed while each link has a minimum link quality.
In Section \ref{sec:discussion} we will discuss how \pn can handle node failures and topology changes, such as adding and removing nodes, by distributing new policies.
The network maintains two trees, an \emph{upstream tree} and \emph{downstream tree}, for packet routing to and from the base station, respectively. 
We assume that  both upstream and downstream trees are spanning trees consistent to source routing in WirelessHART.

At the physical layer, WirelessHART adopts the 802.15.4 standard with up to 16 channels.
This paper focuses on receiver-initiated communication, where a node requests data from a neighbor and receives a response within the same 10~$ms$ slot.

We use \emph{real-time flows} as a communication primitive.
The following parameters characterize a real-time flow $F_i$:
phase $\sigma_i$,  period $P_i$,  deadline $D_i$, end-to-end target reliability requirement $T_i$, and static priority $i$ where lower values have higher priority.
The k$^{th}$ instance of flow $F_i$, $J_{i, k}$, is released at time $r_{i,k} = \phi_i + k * P_i$ and has an absolute deadline $d_{i,k} = r_{i, k} + D_i$.
We assume $D_i \leq P_i$, which implies only one instance of a flow is released at a time. Consequently, to simplify the notation, we will use $J_i$ to refer to the instance of flow $F_i$ that is currently released.
The variable \flows denotes the set of flows in the network.
A flow $i$ has a forwarding path $\flowpath{i}$ that is used by all of its instances.
During the execution of an instance, only one of the links on the \flowpath{i} is active and considered for scheduling.
We will use the notation $\linkq{i}(t)$ to refer to the link quality of the currently active link at time $t$. 

A \pn policy \policy is a scheduling matrix whose number of slots is equal to the hyperperiod of the flow's periods.
The policy may be represented as a two-dimensional matrix such that the rows indicate channels, the columns indicate slots, and the entries that represent actions.
An action may be either a \pullop or a sleep.
A policy is well-formed if it satisfies the following constraints: 
(1) Each node transmits or receives at most once in an entry to avoid intra-network interference. 
(2) The hop-by-hop packet forwarding precedence constraints are maintained such that senders receive packets before forwarding them.
(3) Nodes do not perform consecutive transmissions using the same channel.
(4) Each flow instance is delivered to its destination before its absolute deadline and meets its reliability constraint.


\begin{table}[t!]
    \centering
    \begin{tabular}{c|c}
         Description &  Symbol \\
         \hline
         Set of nodes &  $\mathcal{N}$ \\
         \hline 
         Set of flows & $\mathcal{F}$ \\
         Flow i & $F_i$ \\ 
         Period of flow $i$ & $P_i$ \\
         Deadline of flow $i$ & $D_i$ \\
         Phase of flow $i$  & $\phi_i$ \\
         Target end-to-end reliability of flow $i$ & $T_i$ \\
         Path of flow $i$ & $\Gamma_i$ \\
         Quality of the active link of flow $i$ & \linkq{i} \\
         \hline
         Instance $k$ of flow $i$ & $J_{i,k}$  (or simply $J_i$) \\
         Release time of $J_{i,k}$ & $r_{i,k}$ \\
         Absolute deadline of $J_{i,k}$ & $d_{i,k}$ \\
         Link quality of the active link of $J_{i,k}$ & $\linkq{i}$ \\
         \hline 
         Policy & $\pi$ \\
         Service list of the \pullop in slot $t$ (and channel $c$)& \srv{t} (or \srv{t,c}) \\
         \hline
         Set of all possible states & \allstates \\
         Transition matrix & $\tmat{}$ \\ 
         Reliability of instance $J_{i}$ & \reliability{i} \\
         Lower-bound on reliability of $J_{i}$ & \reliabilitylb{i}
    \end{tabular}
    \caption{Summary of key notations.}
    \label{tab:notations}
\end{table}

\section{Reliability Model}
\label{sec:reliability-model}

The wireless communications community has developed a wide range of probabilistic models to model link quality (e.g., \cite{gilbert1960capacity,kamthe2009m}).
Examples span the complexity-accuracy trade-off from simple models such as Gilbert-Elliott \cite{gilbert1960capacity} to more complex models that use multi-level Markov Chains (e.g., \cite{kamthe2009m}) to distinguish between the short-term and long-term behavior of wireless links.
However, these models usually focus on the ``average-case'' behavior of links.
Guarantees on the end-to-end reliability of flows should hold even as links deviate from their average-case behavior.
Furthermore, a practical model must require little tuning, preferably having reasonable default values for its parameters that fit the rules-of-thumb engineers use to deploy real wireless networks. 

To address the above challenges, we propose the Threshold Link Reliability (TLR) model.
We model the likelihood that a single \pullop for flow $i$  is successful (including both the pull request and the response containing the data) as a Bernoulli variable \linkqt{i}.
We assume that consecutive pulls performed over the same or different links are independent.
Empirical studies suggest that this property holds when channel hopping is used~\cite{channelHopping1,channelHopping2}.
A minimum Packet Delivery Rate (PDR) \pmin lower bounds the values of \linkqt{i} such that $\pmin \le 
\linkqt{i}~~\forall i \in \flows, t \in \mathbb{N}$.
A strength of TLR is that aside from the lower bound $\pmin$ on link quality, we make \emph{no assumptions regarding how the quality of a link varies from slot to slot}.
This characteristic makes TLR widely applicable to networks experiencing significant link quality variations.
TLR can be integrated with existing guidelines for deploying IIoT wireless networks.
For example, Emerson engineers suggest that WirelessHART networks should be deployed to provide a minimum link quality between 60--70\%~\cite{emerson-deployment-guide}.
Accordingly, in this paper, we set \pmin to either 60\% or 70\%.


On a more technical note, it is important to note that TLR does not require the transmissions in an actual network deployment to be independent --
we only require that there is a TLR model that lower bounds the behavior of the deployed network.
Specifically, we require that a Bernoulli distribution \emph{lower bounds} the distribution of consecutive packet losses in the network.
Thus, by selecting an appropriate value for $\pmin$, it is possible to find a model for which the assumption of independence holds, albeit at the cost of increased pessimism regarding the quality of links.

The end-to-end reliability \reliability{i} of a flow $i$ depends on both the likelihood of successfully relaying a packet over the links of its path as well as the links of other flows it shares entries with.
For instance, returning to our running example, the probability the packet released by $F_1$ reaches its destination is dependent not only on the quality of its link but also $F_0$'s link since $F_1$ is conditionally attempted depending on the success of $F_0$.
One might assume that finding a lower bound on \reliability{i} under the TLR model only requires considering the case when all links exhibit their worst link quality in all slots (i.e., $\linkqt{i} = \pmin~~\forall i \in \flows, ~t \in \mathbb{N}$).
While we will show that this approach provides a safe lower bound for \pn policies, this property does not hold for \emph{all} policies that use shared slots.
Consider, for example, the two flows $F_0$ and $F_1$.
Suppose these flows are scheduled using the following simple (non-\pn) policy.
In the first slot, $F_0$ will be executed.
In the next slot, $F_1$ will be executed only if $F_0$ failed in the first slot; otherwise, the base station sleeps.
Under this policy, the probability that $F_1$ is attempted will decrease as the link quality increases since increasing the link's quality will increase the probability that $F_0$ is successful in the first slot.
As a consequence, the end-to-end reliability of $F_1$ will drop as the link becomes more reliable.
Therefore, for policies such as \pn that share slots, it is essential to prove that they do not exhibit such pathological behavior. 
Theorem \ref{th:monotonic} demonstrates that \pn policies do not exhibit this behavior.

\section{Design}
\label{sec:design}

 \begin{figure}[t!]
     \centering
     \includegraphics[width=.5\linewidth]{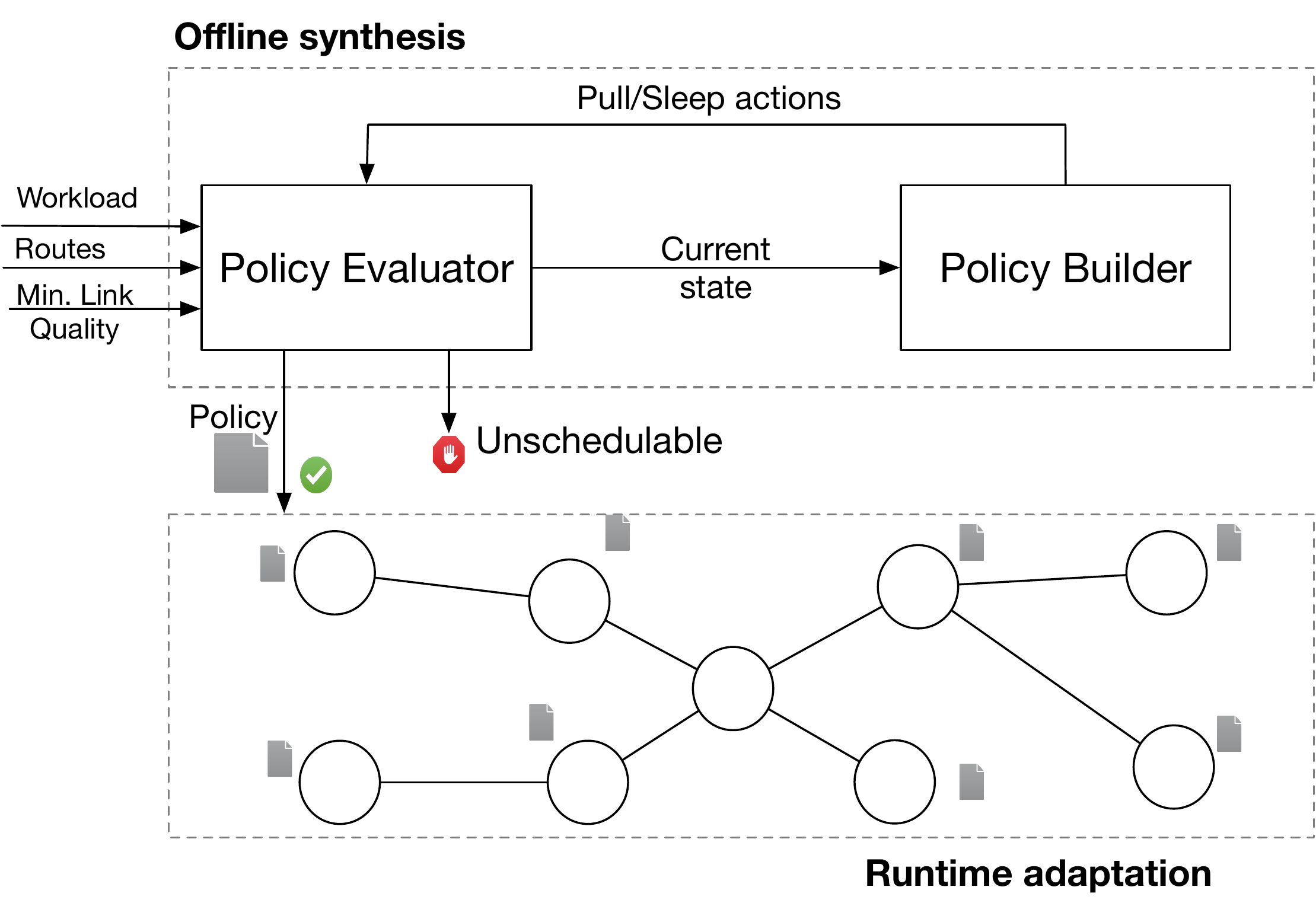}
     \caption{Design of \pn.}
     \label{fig:system-design}
 \end{figure}

\pn is a practical and effective solution for IIoT applications that require predictable, real-time, and reliable communication in dynamic wireless environments (see Figure~\ref{fig:system-design}).
Central to our approach is \pn policies.
The policy synthesis procedure runs on the network manager and has as inputs the workload, routing information, and a user-specified minimum link quality threshold \pmin.
If the synthesis procedure is successful, the constructed policy guarantees probabilistically that all flows will meet their real-time and reliability constraints as long as the quality of all links meets or exceeds \pmin.
The synthesis procedure fails when the workload is unschedulable, i.e., when a policy that meets both the real-time and reliability constraints of all flows cannot be found.
Note that this case is unlikely to arise in practice since an application's workload specification is known a priori, and the designer can validate that the workload remains unschedulable during the system's deployment.
If the synthesis procedure is successful, the manager disseminates the generated policy to all nodes.
During the operation of the network, some links may fall below the minimum link quality threshold \pmin. 
Since \pn provides no guarantees under this regime, a new policy should be constructed after either changing the flows' routes to avoid low-quality links or by lowering \pmin.

The separation between offline synthesis and run-time adaptation is essential to building agile networks.
The run-time adaptation is lightweight:
when a node is the coordinator of a \pullop, it can execute any of the flows included in its service list without requiring global consensus.
In contrast, policy synthesis is computationally expensive and ensures the global invariant that no transmission conflicts occur regardless of coordinators' local decisions.

We will formalize the semantics of \pn policies and discuss their run-time adaptation mechanism in Section \ref{sec:policies}.
After, we will consider synthesizing \pn policies in a scalable manner.
We will start by considering the problem of synthesizing policies for a data collection workload in a star topology in Section \ref{sec:single-hop}.
In Section \ref{sec:multi-hop}, we will extend our approach to handle general workloads and topologies.

\subsection{\pn Policies and Their Run-time Adaptation}
\label{sec:policies}

\begin{figure*}[ht!]
\centering
\begin{subfigure}[b]{0.15\linewidth}
\centering
\begin {tikzpicture}
\node[circle,draw=black] (A) at (.75, 1.2) {A};
\node[circle,draw=black] (B) at (0,  0) {B};
\node[circle,draw=black] (C) at (1.5, 0) {C};
\path[->]
(B) edge node[left] {$F_0$} (A);
\path[->]
(C) edge node[right] {$F_1$} (A);
\end{tikzpicture}
\caption{Topology}
\label{fig:topology-example}
\end{subfigure}%
\quad
\begin{subfigure}[b]{0.385\textwidth}
\centering
\includegraphics[width=\linewidth]{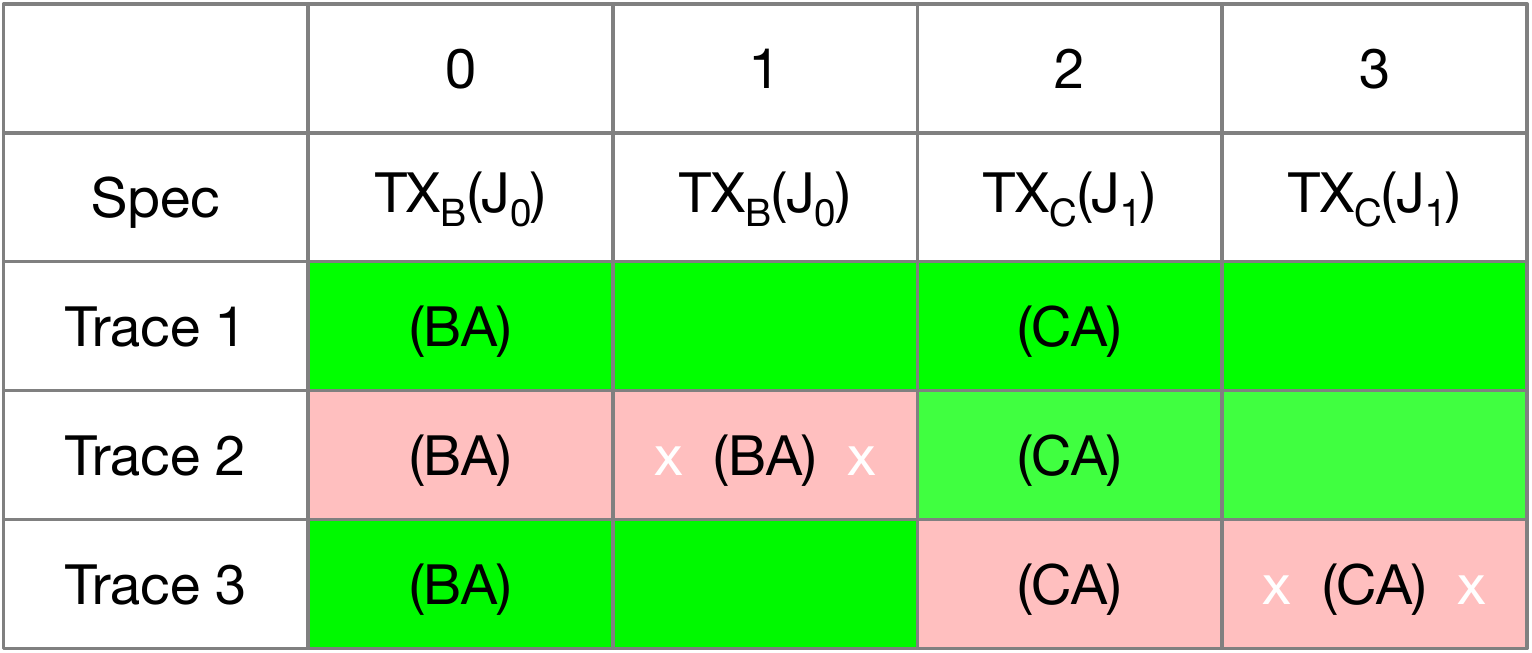}
\caption{Schedule}
\label{fig:example:schedule}
\end{subfigure}
\quad
\begin{subfigure}[b]{0.385\textwidth}
\includegraphics[width=\linewidth]{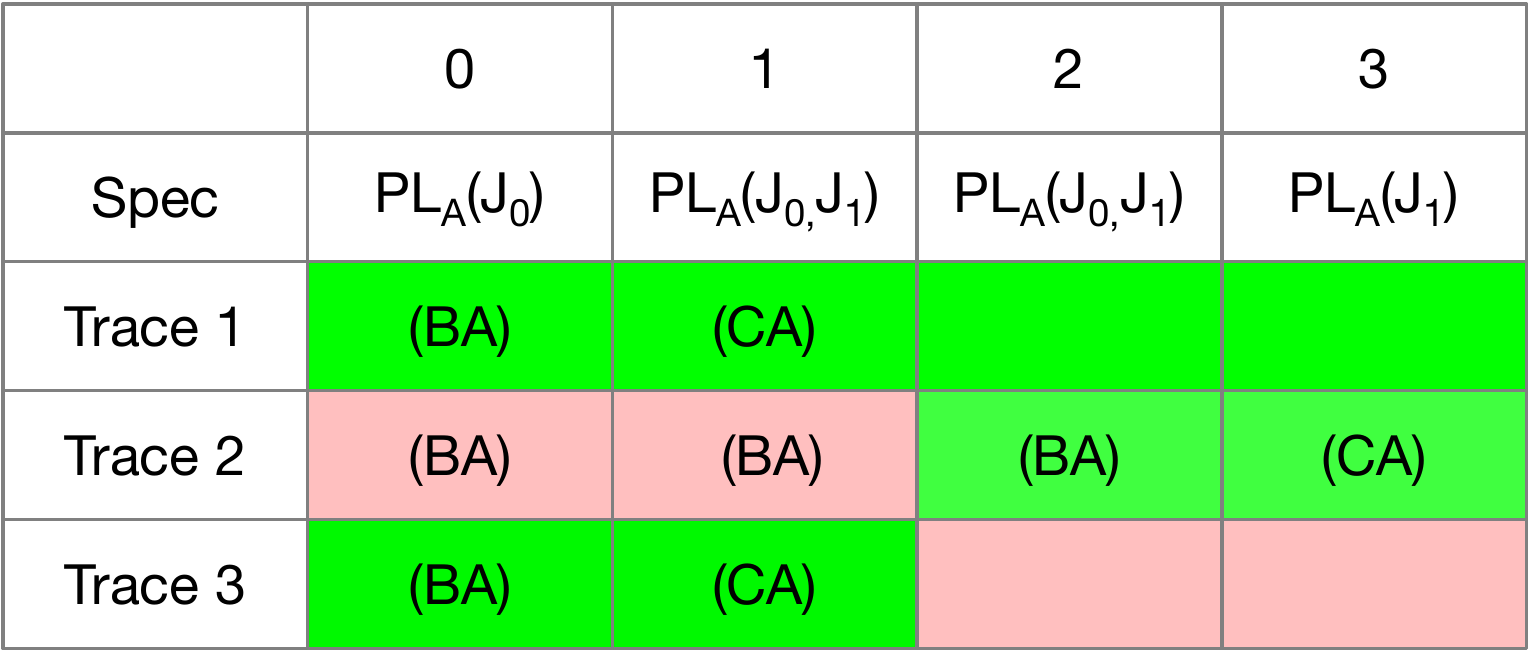}
\caption{\pn policy}
\label{fig:example:policy}
\end{subfigure}
\caption{A schedule and policy for the topology shown in Figure \ref{fig:topology-example} are constructed.
At run-time, schedules and policies behave differently depending on observed successes (green background) or failures (red background).
The traces show how schedules and policies adapt run-time behavior in response to successes and failures.
Notably, the schedule drops packets in traces 2 and 3 (indicated by white ``x''-es) while the policy drops no packets. 
%
%
%
}
\label{fig:example}
\end{figure*}

A \pn policy is represented as a scheduling matrix with a \sleepop or a \pullop action in each entry. 
A \sleepop action indicates that no action is taken in a slot and channel.
A \pullop has two arguments: a \emph{coordinator} and a \emph{service list}.
The coordinator is the node that executes the action at run-time, and the service list includes the instances that \emph{may} be executed in that slot and channel.
The instances in the service list are ordered according to the priority of their flows.
At run-time, only one of the candidate flows in the service list will be executed.
Any node can become a coordinator, and the coordinators can change from slot to slot. 
The execution of a policy is cyclic, with nodes returning to the policy's beginning upon reaching its end.


A coordinator executes a \pullop at run-time by considering the instances in the service list in priority order.
For each instance, the coordinator checks whether it has received the instance's packet.
If the coordinator has already received the instance, it will consider the next instance in the service list.
Otherwise, it will request the instance's packet
from the coordinator's neighbor through which the instance is routed.
Upon receiving a request, the neighbor may or may not have the packet (the latter case can happen when the packet was dropped at a previous hop).
If the neighbor has the packet, it includes it in its response to the coordinator. 
Otherwise, the neighbor marks the packet as dropped in its response.
In response to receiving either response, the coordinator marks the instance as successfully executed.
The invariant maintained by the execution of a \pullop is, \emph{at most, one instance from the service list is executed in a slot}.
Note that the request or the response may not be delivered since links are unreliable.
We account for this by having an instance be included in the service list of several \pullops performed by the coordinator.
As discussed in Section \ref{sec:single-hop}, an instance is included in the service list of sufficient \pullops to meet the flow's target end-to-end reliability, given the TLR's minimum reliability threshold.

The proposed adaptation mechanism is sufficiently lightweight to run within 10 ms slots, as specified by WirelessHART.
The memory usage is proportional to the number of flows routed through a node, which is small.
Equally important, the adaptation mechanism does not employ carrier sensing and, instead, relies on receiver-initiated pulls.  

To illustrate the differences between \pn policies and schedules, consider a star topology (see Figure \ref{fig:topology-example}).
In this example, two flows --  $F_0$ and $F_1$ -- relay data from $B$ and $C$ to the sink $A$.
In slot $0$, instances $J_0$ and $J_1$ are released from flows $F_0$ and $F_1$, respectively.
WirelessHART requires the construction of a schedule with two transmissions for each instance (see Figure \ref{fig:example:schedule}).
Three traces that differ in the pattern of packet losses observed at run-time are also included in the figure.
The only run-time adaptation mechanism
available in schedules is to cancel scheduled transmissions whose data has already been delivered.
The notation \tx{B}{$J_0$} indicates that $B$ transmits $J_0$'s packet to $A$.
The synthesized \pn policy is shown in Figure \ref{fig:example:policy} and uses the notation \pull{A}{$J_0, J_1$} to indicate a \pullop with $A$ as the coordinator and $\{J_0, J_1\}$ as the service list.


To highlight several differences between policies and schedules, consider trace 2, where there are failures in slots 0 and 1.
For this trace, the schedule included in Figure \ref{fig:example:schedule} cannot successfully deliver $J_0$'s packet because it is allocated only a single retransmission.
In contrast, the \pn policy included in Figure \ref{fig:example:policy} can successfully deliver $J_0$'s packet.
The policy includes $J_0$ in the service list of the \pullops in slots 0, 1, and 2.
At run-time, $J_0$'s transmission in slots 0 and 1 fails, but $J_0$ will be delivered in slot 2.
In slot 3, the policy successfully executes $J_1$.
A similar scenario is included in trace 3, where $J_1$'s packets cannot be delivered by schedules but are successfully delivered using a \pn policy.
Traces 2 and 3 highlight the flexibility of \pn policies to improve reliability by dynamically reallocating retransmissions based on the successes and failures observed at run-time.

A key property of the run-time adaptation mechanism that we will leverage during policy synthesis is the following:

\begin{theorem}
The execution of \pn actions on a node is not affected by the actions of other nodes.
\label{th:node-independence}
\end{theorem}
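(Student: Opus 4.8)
The plan is to show that the only inputs a coordinator needs to execute its \pullops are (a) the static policy matrix (which is globally disseminated offline) and (b) its own local record of which instances it has already received — and that neither of these is influenced by the run-time choices of any other node. I would argue this by induction on the slot index $t$, maintaining the invariant that at the start of slot $t$ each node's \localstate (the set of instances for which it holds a packet, plus the set it has marked as successfully executed) is a deterministic function of the policy and of the sequence of Bernoulli link outcomes for the \pullops in which that node participated, and is otherwise independent of how other nodes resolved their own service lists.

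First I would unpack the semantics of a single \pullop from Section~\ref{sec:policies}: the coordinator scans its service list in priority order, skips instances it already holds, and issues a pull for the first remaining one to the fixed upstream neighbor on that instance's forwarding path $\flowpath{i}$. The identity of that neighbor is fixed by the routing trees, not chosen at run-time; the request/response exchange completes (or fails) within the same slot as a single Bernoulli trial. So the action a coordinator takes in slot $t$ is fully determined by its own \localstate and the static service list \srv{t,c}. Next I would invoke the well-formedness constraints on policies from Section~\ref{sec:system-model}: constraint~(1) guarantees each node transmits or receives at most once per entry, so in any slot a node is involved in at most one \pullop, either as coordinator or as the queried neighbor, and there is never a run-time arbitration (no carrier sense, no contention) whose outcome could depend on another node's decision. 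This is the step I expect to matter most — it is where the design choice ``each node operates on its local state'' (Section~\ref{sec:problem-formulation}, Scalability) is actually cashed in, and the argument only goes through because the policy synthesis has already ruled out conflicting entries.

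Then I would close the induction: given the invariant at slot $t$, node $v$'s participation in slot $t$ is either (i) nothing, (ii) coordinating a \pullop, whose target neighbor and queried instance are determined by $v$'s own \localstate and the static \srv{t,c}, or (iii) being queried by some coordinator for a specific instance $i$ — and in case (iii) the only thing that changes is whether $v$'s packet for $i$ (which $v$ already held or did not, per the slot-$t$ invariant) gets relayed, governed by that slot's Bernoulli outcome. In none of these cases does the update to $v$'s \localstate depend on which instance some other coordinator chose to pull elsewhere; it depends only on the static policy, $v$'s prior \localstate, and the link outcomes of the (at most one) exchange $v$ took part in. Hence the invariant holds at slot $t+1$, and by induction the execution of \pn actions on $v$ is unaffected by the actions of other nodes. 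The one subtlety I would be careful to state explicitly is that ``unaffected'' here means unaffected \emph{as a decision rule} — the realized link outcomes are of course common randomness, but the mapping from (policy, local history of outcomes) to (action taken, local-state update) is the same regardless of other nodes' behavior — and that this is exactly the form of independence the later synthesis arguments (e.g. the per-link independence used for \reliability{i}) will rely on.
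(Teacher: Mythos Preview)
Your proposal is correct and follows the same core idea as the paper's proof: a coordinator's action depends only on the fixed service list and its own local state, and no other node ever modifies that state. The paper's proof is a terse two-sentence version of exactly this observation, without the explicit induction or case analysis you supply; your formalization is more thorough but not a different route.
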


\begin{proof}
Consider the execution of a \pullop by a node $R$.
A \pullop's behavior depends on what instances are included in the service list and the local state of the node.
Since the service list is fixed once the policy is constructed, the only way another node may affect $R$'s state is by directly modifying its state, which does not happen.
\end{proof}

\subsection{Synthesizing \pn Policies for Data Collection on Star Topologies}
\label{sec:single-hop}

As a starting point, let us consider the problem of constructing \pn policies for a star topology where all flows have the base station as the destination (see Figure \ref{fig:topology-example} for an example).
This setup simplifies the synthesis of policies in two regards:
(1) The base station will be the coordinator of all \pullops. Therefore, we only have to focus on determining the \servicelist of each \pullop.
(2) Since all flows have the base station as the destination, there will be no transmission conflicts, and a (different) single channel can be used in each slot.
We will generalize our approach to general multi-hop topologies and workloads in the next section.

The policy synthesis procedure involves two key components -- an \evaluator and a \builder (see Figure \ref{fig:system-design}).
The policy is synthesized incrementally by alternating the execution of the \builder and \evaluator in each slot.
\begin{itemize}
    \item  The \builder determines the \pullops that will be executed in each slot. The \builder maintains an \activelist that contains all of the instances that have been released but have \emph{not} yet met their end-to-end reliability. In a slot $t$, the builder checks whether an instance $J_{i,k}$ is released (i.e., when $r_{i,k} = t$) and, if this is the case, $J_{i,k}$ is added to the \activelist . If the \activelist is not empty in $t$, a \pullop having the base station as coordinator and the instances in the \activelist as its \servicelist is assigned in the entry $t$ of the matrix. 
    \item The \evaluator maintains the likelihood that each instance in the \activelist has been delivered to the base station.
    The probabilities are updated incrementally to reflect the execution of the \pullop provided by the \builder in slot $t$.
    \item At the end of slot $t$, the \builder removes all instances whose reliability exceeds their end-to-end reliability targets from the \activelist.
\end{itemize}


In the remainder of the section, we will answer the question of how to estimate the reliability of flows given the sequence of \pullops determined by the \builder.
This problem can be modeled at a high level as a Markov Decision Process (MDP) whose transitions depend on the likelihood of successfully executing \pullops. 
Let \allstates be the set of all possible states.
A state $s$ ($s \in \allstates$) is represented as a vector of size $|\flows|$, where the $i^{th}$ entry represents the state of instance $J_{i}$.
The state of an instance $J_i$ may be \success or \failure, indicating whether the base station requested $J_i$'s data and received a reply successfully.
The reply may either include a flow's packet or an indication that it has been dropped on a previous hop.
A direct encoding of this information would require $O(2^{|\flows|})$ states, which is not practical when there are numerous flows.
To avoid state explosion, we propose the following mechanism.
We bound the length of the \activelist maintained by a coordinator.
This requires a simple modification to the \builder:
an instance is added to the \activelist until it reaches the user-specified maximum size.
The additional instances that are released when the active list is full are added to an \texttt{inactive} list.
The \texttt{inactive} list includes instances that are released but not yet active.
When an instance completes, the size of the \activelist decreases by one, and the highest priority instance from the \texttt{inactive} list is moved to the \activelist.

With this modification, the maximum number of states a coordinator maintains is reduced to $O(2^{|\activelist|})$.
Additionally, we observe that the likelihood an instance is executed depends on its index in the \servicelist.
If the index of an instance in the \servicelist exceeds 3 or 4, then the instance is unlikely to be executed.
Accordingly, we also cap the maximum size of the \servicelist.
The \servicelist of a \pullop is then a subset of the \activelist.
In our experiments, we constrain $|\activelist|\le10$ and $|\servicelist|\le 4$ except where otherwise stated.

\textbf{End-to-end Reliability Using Instantaneous Link Quality:} 
Let us start by deriving a method for computing the end-to-end reliability of flows under the assumption that there is an omniscient oracle that can provide the instantaneous probability of a successful \pullop for all links in a slot $t$.
We will use the notation \snapshot{t} to represent the link quality of all links in slot $t$.
Later, we will relax this requirement by constraining links to follow the TLR model i.e., their link quality is lower bounded by $\pmin$ (i.e., $LQ_i(t) \ge \pmin$).
Under this assumption, we will show that the worst-case end-to-end reliability of a flow occurs when the quality of all links is equal to $\pmin$ in all slots.

\begin{algorithm}[t]
\footnotesize	
\LinesNumbered
\DontPrintSemicolon

\SetKwProg{buildmatrixproc}{Procedure}{}{}
\SetKwFunction{buildmatrix}{BuildTransitionMatrix}
\SetKwFunction{onSuccess}{onSuccess}

\buildmatrixproc{\buildmatrix{$srv(t)$, \snapshot{t}}}{
    \tmat{srv(t)} = $\boldsymbol{I}$ \\
    \For{current in \allstates} {
        \For{$J_i$ in $srv$} {
            Let $i$ be the flow id of $J_i$ \\
            \If{current[$i$] = \failure}{
                \tcc{the execution fails} 
                \tmat{srv}[current, current] = 1 - \linkqt{i} \label{algo:matrix-fail} \\
                \BlankLine
                \tcc{the execution is successful} 
                next = \onSuccess{current, $i$} \\
                \tmat{srv(t)}[current, next] = \linkqt{i} \\
                \textbf{break}
            }
        } 
    }
    \Return{\tmat{srv(t)}}

}

\buildmatrixproc{\onSuccess{state, $i$}} {
    \tcc{the next\_state is the same as current except for the entry for $J_i$ becomes \success \label{algo:matrix-succcess}}
    next\_state[$j$] = state[$j$] $~~~~~~~\forall j \neq i$ \\
    next\_state[$i$] = \success \\
    \Return{next\_state}
}

\caption{Computes the transition matrix \tmat{srv(t)} given the \servicelist $srv$ of a \pullop and a snapshot of current link \snapshot{t}}
\label{algo:matrix}
\end{algorithm}

The actions of the MDP are the \pullops that the \builder assigns in each slot.
Initially, the system is in a state $s_0$, in which the base station has not received the data from any of the flows.
Consider the execution of a \pullop with \servicelist $srv$ in slot $t$.
To account for the impact of executing the \pullop on the state of the system, we construct a transition matrix $\tmat{srv(t)}$ of size $2^{|\activelist|} \times 2^{|\activelist|}$ using Algorithm \ref{algo:matrix}.
Let $J_{i}$ be an instance included in the \servicelist $srv$ (not necessarily as the head of the list).
According to the semantics of \pullops, $J_{i}$ will be executed in any $current$ state where the $i^{th}$ entry of the vector is a failure (i.e., $current[i] = \failure$) and the execution of all instances $J_j$ with higher priority than $J_i$ in the \servicelist $srv$ have already succeeded (i.e., $current[j] = \success$).
From such a $current$ state, there are two possible outgoing transitions depending on whether the \pullop is successful or not.
If the execution of $J_{i}$ fails, then the system remains in the same state (see line \ref{algo:matrix-fail}, Algorithm \ref{algo:matrix}).
Accordingly, the entry $\tmat{srv(t)}[current, current]$ is set to 1 - $\linkqt{i}$, where $\linkqt{i}$ is the probability of performing a successful \pullop over the link used by flow $i$ in slot $t$.
Conversely, if the execution of $J_{i}$ succeeds, the system transitions from the $current$ state to a $next$ state.
The entries of the $current$ and the $next$ states are the same, except for the entry associated with the $J_i$ element for which $next[i] = \success$ (see line \ref{algo:matrix-succcess}, Algorithm \ref{algo:matrix}).
In this case, we set $\tmat{srv(t)}[current, next] = \linkqt{i}$.
If a \sleepop is assigned slot $t$, then the state of the system does not change.

After executing $t$ \pullops, the probability of each state is given by the vector $\pstate{t}$:
\begin{equation}
    \pstate{t} = \boldsymbol{s}_0^{T} \tmat{srv(0)} \tmat{srv(1)} \cdots \tmat{srv(t)} 
    \label{eq:state-transition}
\end{equation}
\noindent where $\boldsymbol{s}_0$ is the initial state of the system and $\tmat{srv(t')}$ is the transition matrix associated with the \pullop that has $srv(t')$ as its service list and is executed in slot $t'$ ($0 \le t' \le t$).
Equation \ref{eq:state-transition} describes the evolution of the system as a discrete-time Markov Chain (MC) that is parametric and time inhomogeneous. 
The structure of $\tmat{srv(t')}$ depends on the service list and its values depends on the quality of all links in slot $t'$.

The end-to-end reliability $\reliability{i}$ of  instance $J_i$ after executing $t$ \pullops is computed by summing up the probability of each state $s$ ($s \in \allstates$) such that $s[i]$ is \success. 
Leveraging the properties of matrix multiplication, $R_i$ may be written as:
\begin{equation}
    \reliability{i,t}= \pstate{t} \boldsymbol{\chi}_{i}
\end{equation}
\noindent where, $\boldsymbol{\chi}_{i}$ is a vector such that $\boldsymbol{\chi}_{i}[k] = 1$ for any state $s$ such that $s[k] = \success$ and $\boldsymbol{\chi}_{i}[k] = 0$ otherwise.

\textbf{End-to-end reliability under TLR:}
Computing $\reliability{i,t}$ requires that we know the instantaneous quality of all links in any slot $t$. 
It is infeasible to have access to this information during the synthesis of a policy.
In the following, we will derive a lower bound $\reliabilitylb{i,t}$ on $\reliability{i,t}$.
To this end, we will construct a new MC with transition matrix $\tmatlb{srv(t)}$ that is computed by considering each transition matrix $\tmat{srv(t)}$ and replacing each link quality variable $\linkqt{i}$ with its lower-bound \pmin.
We claim that a lower bound on the end-to-end reliability of a flow \reliability{i,t} is:
\begin{equation}
    \reliability{i,t} \geq
    \reliabilitylb{i, t} = 
                \pstatelb{t} \boldsymbol{\chi}_{i} = \boldsymbol{s}_0^{T} \tmatlb{srv(0)} \tmatlb{srv(1)} \cdots \tmatlb{srv(t)} \boldsymbol{\chi}_{i}
\end{equation}
\noindent 
The following theorem implies that to compute a lower-bound on the reliability of a flow, 
it is sufficient to consider only the case when all links perform their worst.

\begin{theorem}
Consider a star topology that has node $A$ as a base station and a set of flows $\flows = \{F_0, F_1, \dots F_N\}$ that have $A$ as destination.
Let \linkqt{0}, \linkqt{1}, \dots \linkqt{N} be the quality of the links used by each flow in slot $t$ such that $\pmin \le LQ_i(t) \le 1$ for all flows $F_i$ ($F_i \in \flows$) and all slots $t$ ($t \in \mathbb{N}$).
Under these assumptions, the reliability $\reliability{i,t}$ of an instance $J_i$ after executing $t$ \pullops of the \pn policy $\pi$ is lower bounded by $\reliabilitylb{i,t}.$
\label{th:monotonic}
\end{theorem}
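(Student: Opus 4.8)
I would prove the equivalent (and slightly stronger) monotonicity statement: if $LQ$ and $LQ'$ are two link-quality profiles with $LQ_j(t)\le LQ'_j(t)$ for every flow $F_j$ and slot $t$, then $\reliability{i,t}$ evaluated under $LQ$ is at most $\reliability{i,t}$ evaluated under $LQ'$. Since $\tmatlb{srv(t)}$ is exactly $\tmat{srv(t)}$ evaluated at the constant profile $LQ_j(t)\equiv\pmin$, and every actual profile satisfies $\pmin\le LQ_j(t)$, applying the monotonicity statement with that constant profile as the ``small'' one yields $\reliabilitylb{i,t}\le\reliability{i,t}$. The monotonicity statement I would establish by an explicit coupling of the two time-inhomogeneous Markov chains of Equation~\ref{eq:state-transition}, driven by a common sequence of uniform random variables.

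\textbf{Setup and coupling.} Identify each state $s\in\allstates$ with the set $\mathrm{succ}(s)=\{k:\,s[k]=\success\}$ of instances already in hand at the base station, and order states by inclusion of these sets; the target event $\{s:\,s[i]=\success\}$ whose probability is $\reliability{i,t}$ is then an \emph{up-set}. The policy $\pi$ — and hence the \servicelist $srv(t)$ used in each slot — is fixed by the offline synthesis, so it is identical in both chains; only the random outcomes differ. Draw i.i.d.\ uniforms $U_0,U_1,\dots$ on $[0,1]$. In slot $t$, if $\pi$ assigns a \sleepop nothing changes in either chain; if it assigns a \pullop with list $srv(t)$, let $J_{j}$ (resp.\ $J_{j'}$) be the highest-priority instance of $srv(t)$ not yet received in the $LQ$-chain (resp.\ the $LQ'$-chain), declare it successful in the $LQ$-chain iff $U_t\le LQ_j(t)$ and in the $LQ'$-chain iff $U_t\le LQ'_{j'}(t)$, and update the corresponding entry to \success. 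Marginally, in each chain the executed instance depends only on the current state (a function of $U_0,\dots,U_{t-1}$) and the outcome uses the fresh $U_t$, so each chain has the law prescribed by Algorithm~\ref{algo:matrix}; hence this is a valid coupling.

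\textbf{Invariant and inductive step.} I would prove by induction on $t$ that, along every realization of $U_0,U_1,\dots$, $\mathrm{succ}(s^{LQ}_t)\subseteq\mathrm{succ}(s^{LQ'}_t)$ (states never move from \success\ back to \failure, which is used throughout). The base case is immediate since both chains start in $\boldsymbol{s}_0$. For the step, the inductive hypothesis makes the set of not-yet-received instances in the $LQ'$-chain a subset of that in the $LQ$-chain, so $J_{j'}$ has priority no higher than $J_{j}$, i.e.\ $j\le j'$. If $j=j'$ the same link is pulled in both chains, and $LQ_j(t)\le LQ'_j(t)$ forces any success of the $LQ$-chain to be a success of the $LQ'$-chain, so the (possibly) newly added element $j$ is added to the $LQ'$-chain whenever it is added to the $LQ$-chain and inclusion is preserved. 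If $j<j'$, then $J_{j}$ precedes $J_{j'}$ in $srv(t)$, hence is already received in the $LQ'$-chain; the only element that can be newly added to $\mathrm{succ}(s^{LQ}_t)$ is $j$, which already lies in $\mathrm{succ}(s^{LQ'}_t)$, so inclusion is preserved regardless of the outcome of $U_t$. Because the invariant holds at slot $t$, it gives $\Pr[s^{LQ}_t[i]=\success]\le\Pr[s^{LQ'}_t[i]=\success]$ after taking expectations over the $U_\bullet$, which is exactly the monotonicity statement; specializing the small profile to $\pmin$ proves Theorem~\ref{th:monotonic}.

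\textbf{Main obstacle.} The delicate case is $j<j'$, where the two coupled executions pull \emph{different} links in the same slot, so one cannot merely compare ``the same transmission'' under two link qualities. The reason the argument still closes is precisely the receiver-oriented semantics of \pn: a \pullop always serves the highest-priority instance not yet received, so a chain that is ``ahead'' can only ever be working on strictly lower-priority instances and can never retract a success the other chain is missing. This is exactly the property that fails for the pathological non-\pn policy of Section~\ref{sec:reliability-model} (where $F_1$ is attempted \emph{only if} $F_0$ failed): there a success \emph{suppresses} a later attempt, the target event is no longer an up-set of the driving randomness, and monotonicity breaks. I would therefore make the up-set / coupling framing explicit so that the role of the priority-ordered \servicelist\ is visible; an equivalent but more mechanical route is to check that each kernel $\tmat{srv(t)}$ is stochastically monotone in the state and monotone in every parameter $\linkqt{j}$ and then compose these two monotonicities, the coupling above being the hands-on version of that argument.
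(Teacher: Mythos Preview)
Your coupling argument is correct and proves the theorem, but the paper takes a different, matrix-algebraic route. The paper introduces the same partial order on states (set inclusion of successes), calls a vector $\boldsymbol{f}$ \emph{increasing} if $s_i\preceq s_j$ implies $\boldsymbol{f}[i]\le\boldsymbol{f}[j]$, decomposes each kernel as $\tmat{srv(t)}=\boldsymbol{I}+\sum_k LQ_k(t)\,E_k$ with structured $\{-1,0,+1\}$ matrices $E_k$, and proves two lemmas: (i) if $\boldsymbol{f}^T$ is increasing then so is $\boldsymbol{f}^T\tmat{srv(t)}^T$, and (ii) if $\boldsymbol{f}^T$ is increasing then $\boldsymbol{f}^T\tmat{srv(t)}^T\ge\boldsymbol{f}^T\tmatlb{srv(t)}^T$ componentwise. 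Starting from the increasing indicator $\boldsymbol{\chi}_i^T$ and peeling off one transposed matrix at a time yields $\reliability{i,t}\ge\reliabilitylb{i,t}$. Your coupling buys pathwise stochastic domination and the slightly stronger two-profile monotonicity with almost no machinery; the paper's route is more algebraic and makes the special structure of the transition matrices explicit. In fact you name the paper's approach in your closing sentence --- lemmas (i) and (ii) are exactly the ``stochastically monotone in the state'' and ``monotone in every parameter'' checks you allude to. One minor wording fix: since the paper notes the theorem also covers builders that allow priority inversions in the \servicelist, your ``$j\le j'$'' should be read as ``$J_j$ precedes $J_{j'}$ in the list order,'' which is what your case analysis actually uses.
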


\begin{proof}
See Section \ref{sec:proof}.
\end{proof}


\begin{figure}[ht]
    \centering
    \includegraphics[width=\textwidth]{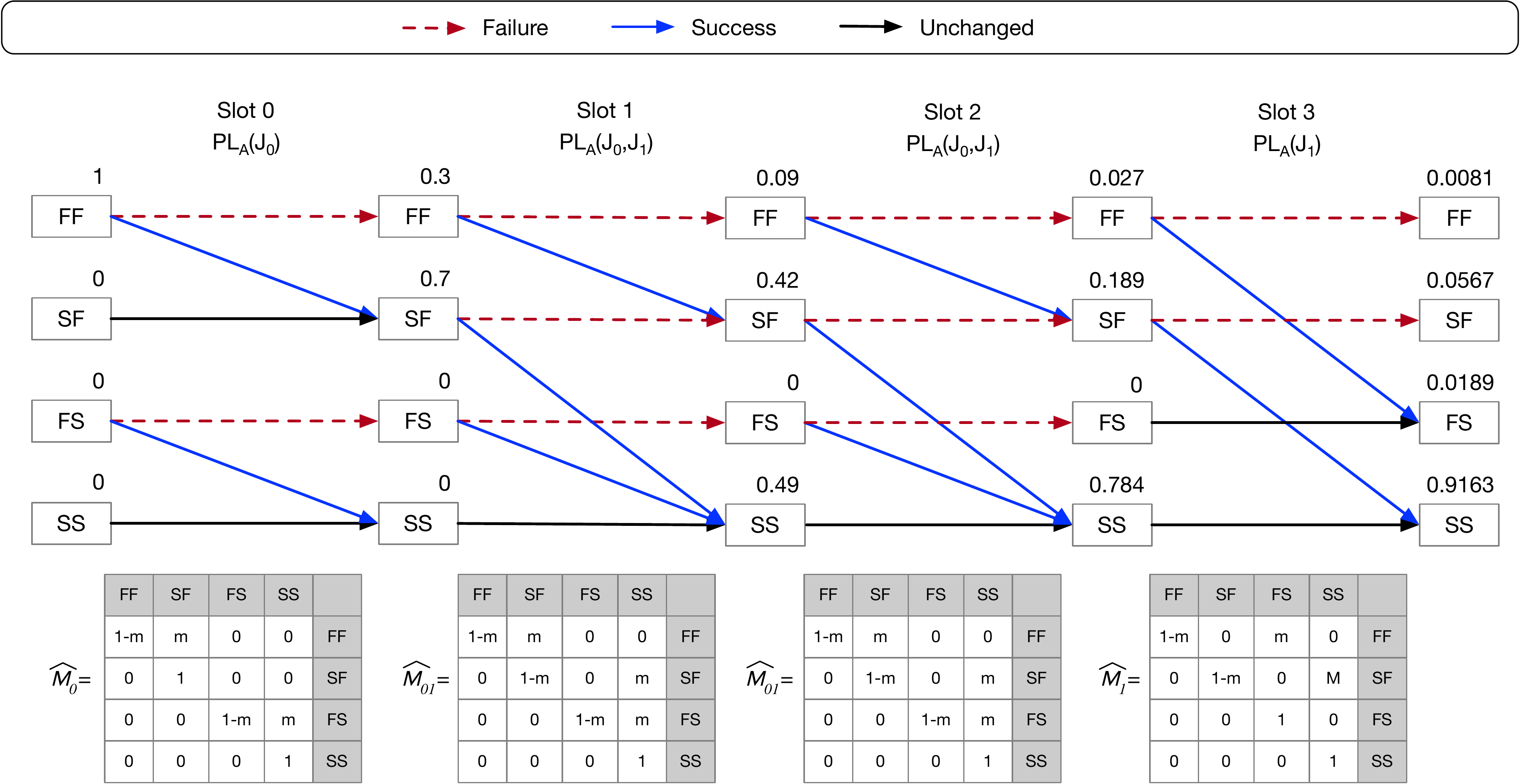}
    \caption{Estimating the state of the network and lower-bounds on the end-to-end reliability.}
    \label{fig:policy-mc}
\end{figure}

Let us return to our running example of the construction and execution of the policy shown in Figure \ref{fig:example:policy}.
In Figure \ref{fig:policy-mc}, we will illustrate how the end-to-end reliability of flows will be estimated for this example.
The workload includes two flows -- $F_0$ and $F_1$ -- with phases $\phi_{0} = 0$ and $\phi_{1} = 1$.
Accordingly, instances $J_0$ and $J_1$ are released in slots $0$ and $1$.
We will evaluate the estimated state of the network $\pstatelb{t}$ and the lower-bounds on the reliability of each flow as the policy is executed. 
Given that the workload involves only two flows,  the possible states of the systems are $\allstates = \{ \texttt{FF}, \texttt{SF}, \texttt{FS}, \texttt{SS}\}$.
Each state encodes whether the base station $A$ has received the data of $J_0$ and $J_1$.
In any slot $t$, the probability vector $\pstatelb{t}$ is the likelihood that the network is in a state \texttt{FF}, \texttt{SF}, \texttt{FS}, and \texttt{SS} (in that order).
The lower bound on the reliability of instance $J_0$ is $\reliabilitylb{0,t} = \pstatelb{t}[\texttt{SF}] + \pstatelb{t}[\texttt{SS}] = \pstatelb{t}\chi_0 $, where $\boldsymbol{\chi}_0 = [0,1,0,1]$.
Similarly, $\reliabilitylb{1,t} = \pstatelb{t}[\texttt{FS}] + \pstatelb{t}[\texttt{SS}] = \pstatelb{t}\boldsymbol{\chi}_1 $, where $\boldsymbol{\chi}_1 = [0,0,1,1]$.

Initially, the system is in state $\boldsymbol{s}_0 = [1,0,0,0]^{T}$ i.e., $\boldsymbol{s}_{0}[\texttt{FF}] = 1$ and the likelihood of the remaining states is zero. 
The action \pull{A}{$J_0$} is executed in slot $0$.
The \evaluator constructs the matrix $\tmatlb{0}$ to account for the impact of executing the \pullop on the state of the system.
After executing the \pullop, the state of the network is 
$\pstatelb{0} = \boldsymbol{s}_0^T\tmatlb{0}$.
The reliability of $J_0$ after executing \pull{A}{$J_0$} is $\reliabilitylb{0,0} = \pstatelb{0} \boldsymbol{\chi}_0 = \pstatelb{0}[\texttt{SF}] +  \pstatelb{0}[\texttt{SS}] = 0.7$.
Figure \ref{fig:policy-mc} shows the states of the MC after the execution of each \pullop.
The transition matrices associated with each \pullop are included at the bottom of the figure.
The reliability of flows is evaluated in a similar manner in the remaining slots.

\subsection{Synthesizing \pn Policies for General Topologies}
\label{sec:multi-hop}

In this section, we extend the results from the previous subsection to general workloads and topologies.
Doing so requires that we determine both a coordinator and a service list for each \pullop.
The \builder must assign coordinators and service lists such that no transmission and channel conflicts occur.
The \evaluator must provide lower bounds on the reliability of the flows as they interact across multiple hops.
A naive evaluation that simply keeps track of when a coordinator received packets from all combinations of flows does not scale.
We will start by discussing how a scalable \evaluator may be built and then extend the \builder.

\subsubsection{The Multi-hop Evaluator}

\begin{wrapfigure}{r}{0.3\textwidth}
    \centering
    \includegraphics[width=\linewidth]{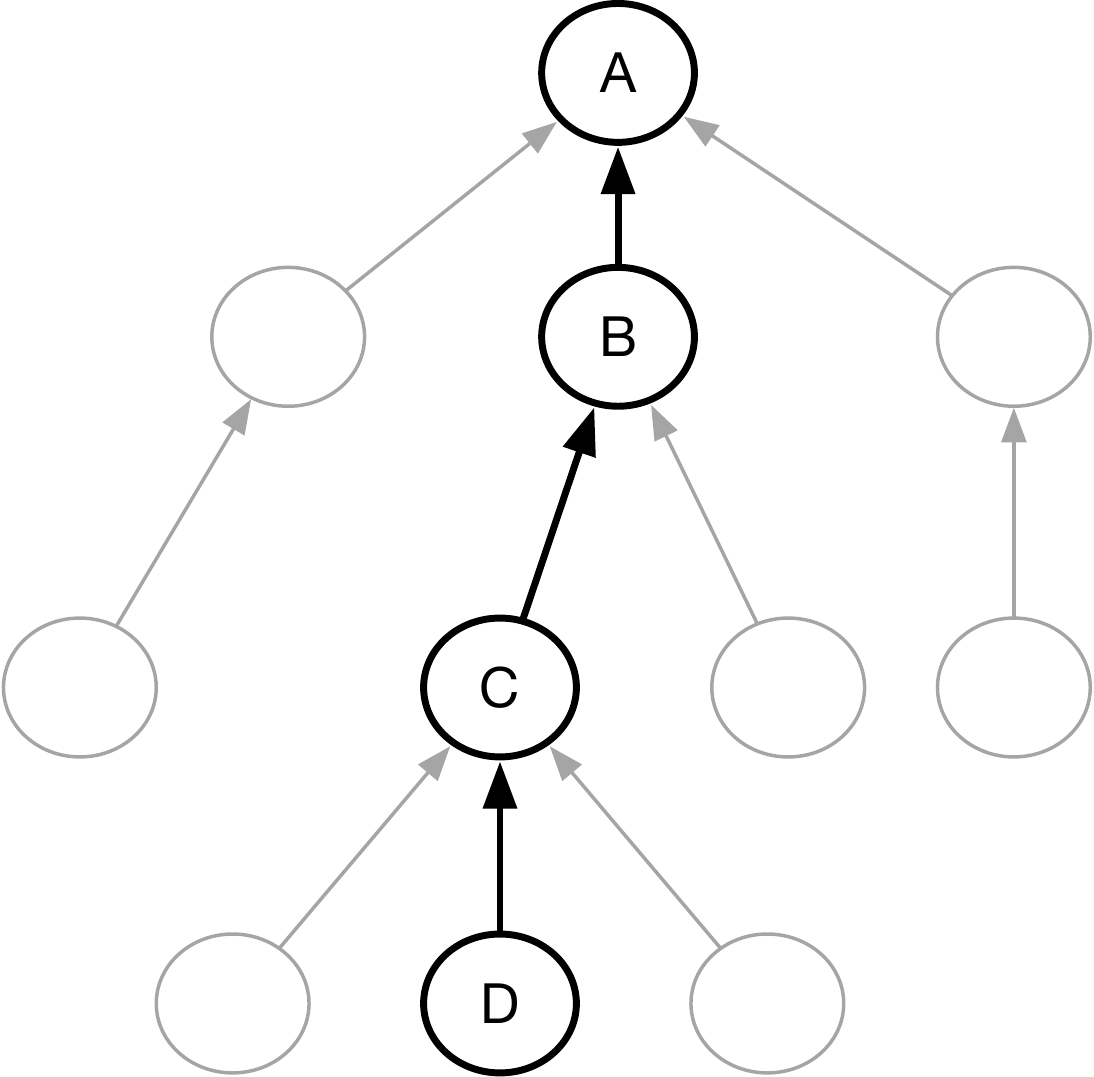}
    \caption{Multi-hop example.}
    \label{fig:multihop-flow}
\end{wrapfigure}

The key insight to building a scalable \evaluator is to require coordinator nodes to \emph{operate independently}.
Consider a multi-hop flow $F_2$ shown in Figure \ref{fig:multihop-flow} whose data is forwarded using the path $\flowpath{2}=\{D, C, B, A\}$.
To forward $F_2$'s data, a policy must include a sequence of \pullops that have the nodes $C$, $B$, and $A$ as coordinators and include $F_2$ as part of their service lists.
A simple approach to ensure that coordinators operate independently is to use an approach similar to the Phase Modification Protocol \cite{bettati1994end}, where a multi-hop flow is divided into single-hop subflows flows and allocate $\delta_2=D_2 / |\flowpath{2}|$ slots for the execution of each flow.
The first subflow $F_{2,1}$ from $D$ to $C$ is released $\phi_{2,1} = \phi_2$ and must complete with $\delta_2$ slots.
The second subflow $F_{2,2}$ from $C$ to $B$ is released at $\phi_{2,2} = \phi_2 + \delta_2$ and it must complete within $\delta_2$ slots.
The remainder of the subflows are set up in a similar fashion.
To ensure that coordinators operate independently, it is essential that each subflow releases a packet regardless of whether the previous subflow delivered it successfully or unsuccessfully to the next hop.
By taking advantage of the independence, we can use the single-hop \evaluator described in the previous section to evaluate the reliability of each subflow.
Then, the end-to-end reliability of the original flow is simply the product of the reliability of each subflow (due to independence).


The drawback of this approach is that each subflow is allocated an equal number of slots which can be problematic when the workload of nodes is not uniform.
To address this issue, we first convert the end-to-end target reliability of $T_i$ into a local reliability target that each subflow must meet: 
\begin{equation}
    T_i^{\frac{1}{|\flowpath{i}|}}
    \label{eq:local-reliability}
\end{equation}
\noindent where $|\flowpath{i}|$ is the length of $F_i$'s path measured in hops.
Each subflow is then allowed to release the earliest slot in which all subflows associated with the previous hops of the original flow have met the local reliability target.
Notably, different subflows may need to be executed a different number of times to meet their local target reliability to handle non-uniform workloads effectively. 



\subsubsection{The Multi-hop Builder}

The optimization problem can be formulated as an Integer Linear Program (ILP).
The ILP includes three types of variables.
For each node $R$ ($R \in \nodes$), the variable $N_R$ ($N_R \in \{0, 1\}$) indicates whether $R$ is the coordinator of a \pullop.
For each released instance $J_i$, the variable $I_i$ ($I_i \in \{0, 1\}$) indicates whether its associated active link will be added to a service list.
Finally, variable $C_{R,ch}$ ($C_{R,ch} \in \{0, 1\}$) indicates whether $R$ will use channel $ch$ to communicate.
The ILP solution is converted into a set of \pullops as follows:
for each node $R$ such that $N_R=1$, we add a \pullop that has $R$ as the coordinator and a service list with all instances $J_i$
where $I_i = 1$ and $R$ is the receiver of the active link of $J_i$.
The \pullop is assigned to the entry in the matrix for the current slot and the channel $ch$ for which $C_{R,ch}=1$.
We let $\mathcal{A}$ be the union of the \activelist of all nodes.

A well-formed policy must ensure that no transmission conflicts will be introduced at run-time.
Consider a \pullop that has $R$ as a coordinator and services instance $J_i$.
Let $(SR)$ be the active link of $J_i$, where $S=src(J_i)$ and $R=dst(J_i)$. 
If $J_i$ will be assigned in the current slot (i.e., $I_i = 1$), then $S$ cannot be a coordinator for any other instance since this would require $S$ to transmit and receive in the same slot.
We enforce this using the following constraint:

\begin{IEEEeqnarray}{c}
N_S \le (1 - I_i) ~~~~\forall I_i \in \mathcal{A}: ~~S=src(J_i) 
\label{eq:constraint-sender}
\end{IEEEeqnarray}

\noindent A similar constraint must also be included for the receiver $R$. 
If node $R$ is not a coordinator (i.e., $N_R = 0$), then $J_i$ cannot be assigned and $I_i = 0$.
Conversely, if $R$ is selected as a coordinator, instance $J_i$ may (or may not) be assigned (i.e., $I_i \le N_R = 1)$ depending on the objective of the optimization, which we will discuss later in this section.
These aspects are captured by the following constraint:

\begin{IEEEeqnarray}{c}
I_i \le N_R   ~~~~\forall I_i \in \mathcal{A}: ~~R=dst(J_i) 
\label{eq:constraint-receiver}
\end{IEEEeqnarray}

\begin{figure}[t!]
    \centering
    \begin{subfigure}[b]{0.18\textwidth}
        \centering
        \includegraphics[width=\textwidth]{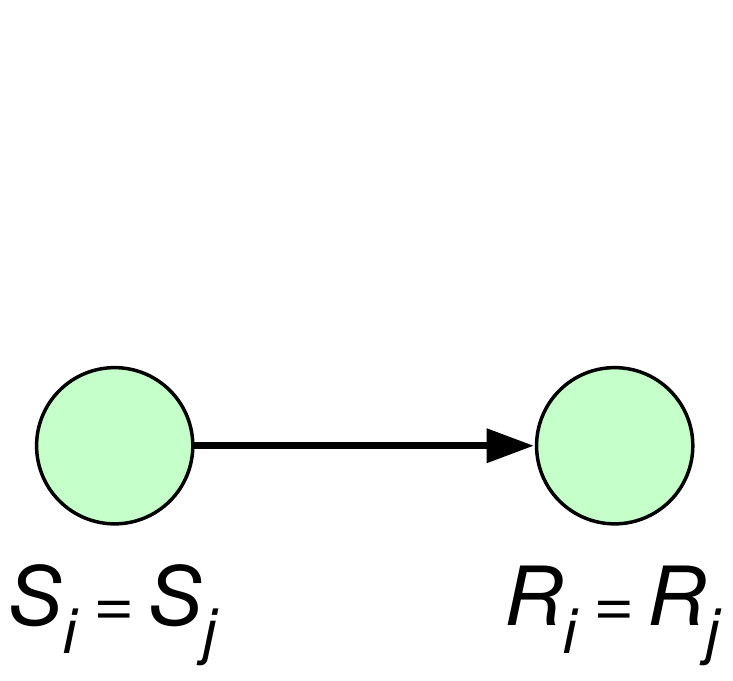}
        \caption{\footnotesize{Same link}}
        \label{fig:ilp-total-common}
    \end{subfigure}
    \begin{subfigure}[b]{0.18\textwidth}  
        \centering 
        \includegraphics[width=\textwidth]{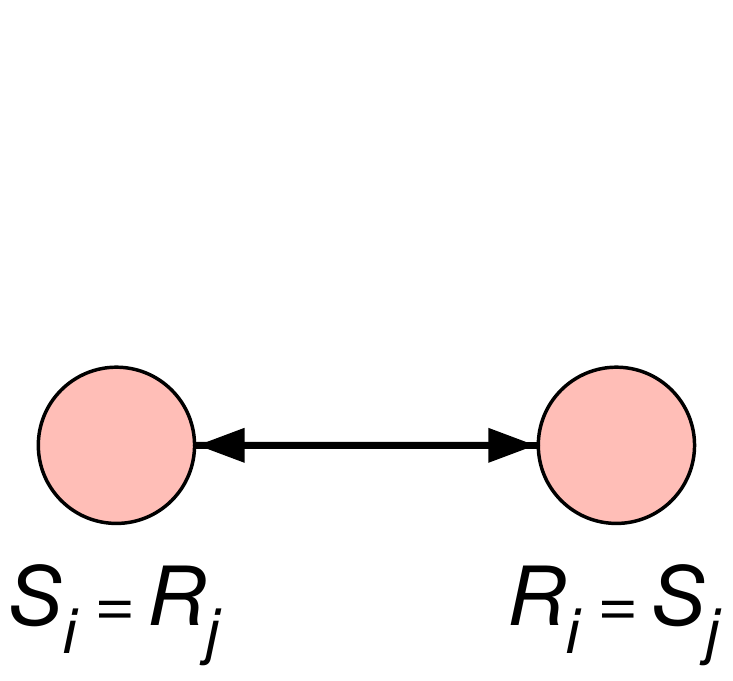}
        \caption{\footnotesize{Opposite link}}
        \label{fig:ilp-reverse}
    \end{subfigure}
    \begin{subfigure}[b]{0.18\textwidth}   
        \centering 
        \includegraphics[width=\textwidth]{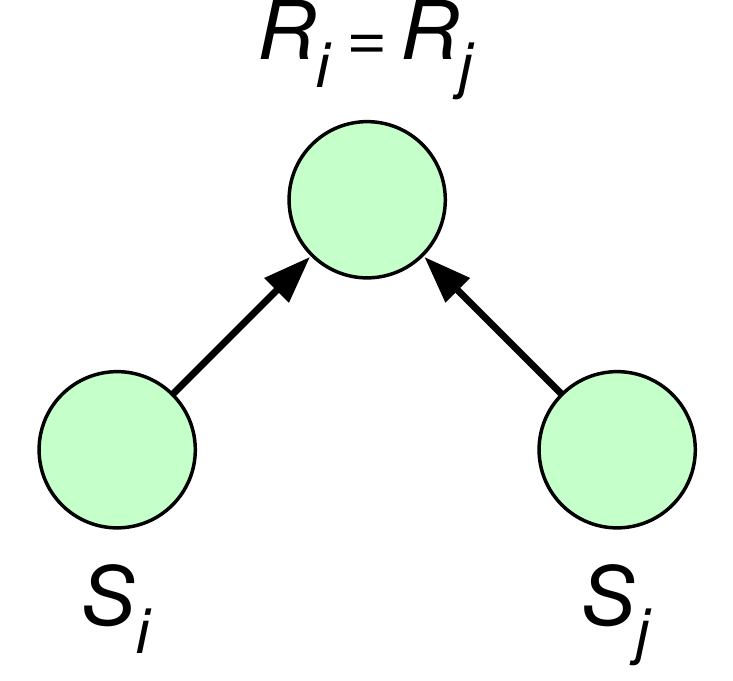}
        \caption{\footnotesize{Common Recv}}
        \label{fig:ilp-common-rec}
    \end{subfigure}
    \begin{subfigure}[b]{0.18\textwidth}   
        \centering 
        \includegraphics[width=\textwidth]{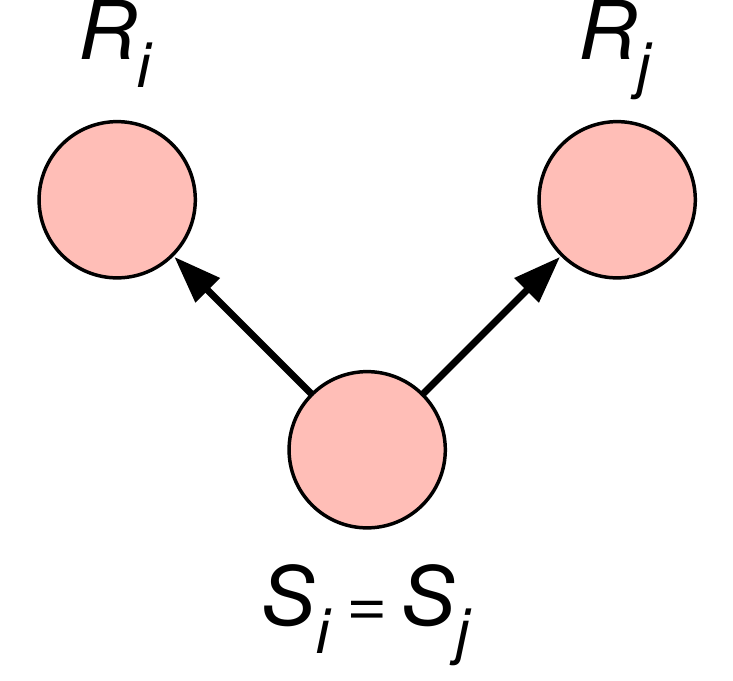}
        \caption{\footnotesize{Common Sender}}
        \label{fig:ilp-common-send}
    \end{subfigure}
   \vskip\baselineskip
    \begin{subfigure}[b]{0.18\textwidth}   
        \centering 
        \includegraphics[width=\textwidth]{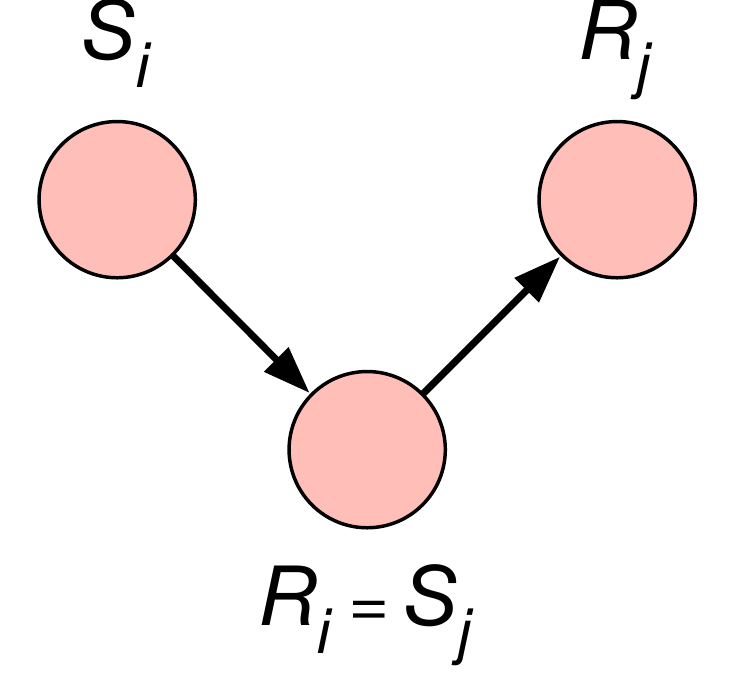}
        \caption{\footnotesize{Recv/sender}}
        \label{fig:ilp-rec-send}
    \end{subfigure}
    \begin{subfigure}[b]{0.18\textwidth}   
        \centering 
        \includegraphics[width=\textwidth]{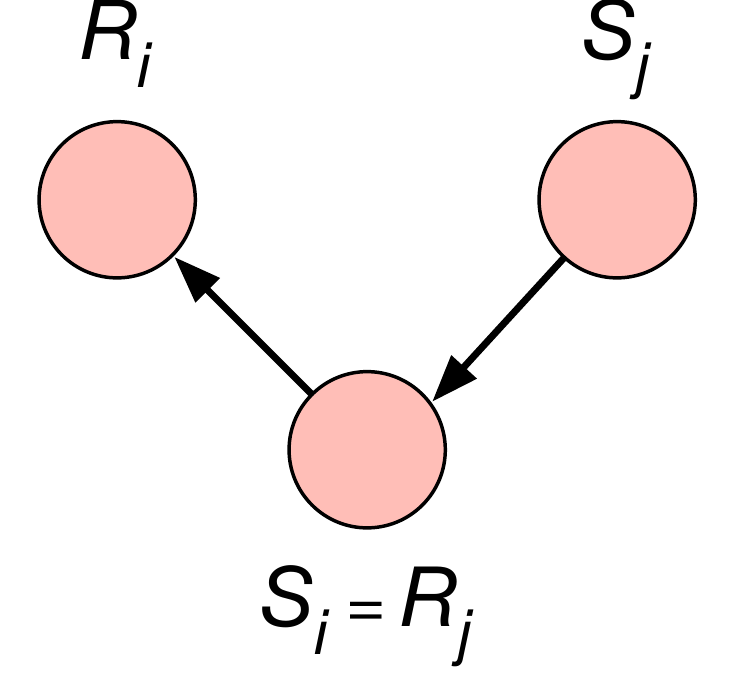}
        \caption{\footnotesize{Sender/Recv}}
        \label{fig:ilp-send-rec}
    \end{subfigure}
    \caption{Possible ways two instances may share at least one node.  Green cases have no transmission conflict while red cases do.} 
    \label{fig:ilp-cases}
\end{figure}

The above constraints avoid all transmission conflicts with one exception.
Consider the case when two instances $J_i$ and $J_j$ share the same sender but have different receivers.
An assignment that respects constraints \ref{eq:constraint-sender} and \ref{eq:constraint-receiver} is for both instances to be assigned in the current slot (i.e., $I_i = I_j = 1$). 
However, this would result in a conflict, since the common sender can only transmit one packet in a slot.
To avoid this situation, we introduce the following constraint:

\begin{IEEEeqnarray}{rCl}
 & I_i + I_j - 1 \le N_{S} & \label{eq:constraint-pairs}\\
 & \forall I_i, I_j \in \mathcal{A}: ~~S = src(J_i) = src(J_j) ~~\& ~~dst(J_i) \ne dst(J_j) & \nonumber
\end{IEEEeqnarray}

\begin{theorem}
    Constraints \ref{eq:constraint-sender}, \ref{eq:constraint-receiver}, and \ref{eq:constraint-pairs} ensure that the execution of \pullops will result in no node receiving or transmitting more than once in a time slot.
    \label{th:partially-wellformed}
\end{theorem}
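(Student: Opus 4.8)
The plan is to prove that the three constraints jointly rule out every ``bad'' way in which two instances sharing a node could cause a conflict, by a case analysis that mirrors Figure~\ref{fig:ilp-cases}. First I would fix notation: suppose $J_i$ and $J_j$ are any two released instances in $\mathcal{A}$ that are both assigned in the current slot (i.e.\ $I_i = I_j = 1$), with active links $(S_i R_i)$ and $(S_j R_j)$ where $S_i = src(J_i)$, $R_i = dst(J_i)$, and similarly for $J_j$. By the conversion rule from the ILP solution to \pullops, $R_i$ is a coordinator, so $N_{R_i} = 1$; likewise $N_{R_j} = 1$. A transmission conflict at run-time can only arise if some node is asked to play two incompatible roles in the same slot --- transmit twice, receive twice, or transmit and receive simultaneously --- so it suffices to show none of these can happen when a single instance is serviced per \pullop and the constraints hold.

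Next I would enumerate the ways $J_i$ and $J_j$ can share a node, following the six cases in Figure~\ref{fig:ilp-cases}. \emph{Same link} ($S_i = S_j$, $R_i = R_j$): both instances are serviced by the same coordinator $R_i$, and since a \pullop executes at most one instance from its service list (the invariant established in Section~\ref{sec:policies}), at most one of the two is actually pulled, so no conflict. \emph{Opposite link} ($S_i = R_j$, $R_i = S_j$): then $R_i$ must be a coordinator ($N_{R_i} = 1$) but $R_i = S_j = src(J_j)$, and constraint~\ref{eq:constraint-sender} applied to $J_j$ gives $N_{S_j} \le 1 - I_j$, i.e.\ $1 \le 1 - I_j$, forcing $I_j = 0$ --- contradiction. \emph{Common receiver} ($R_i = R_j$, $S_i \ne S_j$): the single coordinator $R_i$ again executes at most one of $J_i, J_j$, so the receiver listens to at most one sender and neither sender is scheduled to transmit twice (each appears in one active link). \emph{Common sender} ($S_i = S_j = S$, $R_i \ne R_j$): this is exactly the residual case, and constraint~\ref{eq:constraint-pairs} gives $I_i + I_j - 1 \le N_S$; by constraint~\ref{eq:constraint-sender} applied to either instance, $N_S \le 1 - I_i \le 0$ and $N_S \le 1 - I_j \le 0$, so $N_S = 0$, whence $I_i + I_j \le 1$, again a contradiction with $I_i = I_j = 1$. \emph{Receiver/sender} ($R_i = S_j$) and \emph{sender/receiver} ($S_i = R_j$): these are symmetric to the opposite-link argument --- $R_i = S_j$ is a coordinator, so constraint~\ref{eq:constraint-sender} for $J_j$ forces $I_j = 0$, contradiction (and symmetrically for the other).

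Having shown that in every shared-node configuration either the serviced-instance invariant already precludes a conflict or one of the three constraints forces one of $I_i, I_j$ to zero, I would conclude that any feasible ILP assignment yields a set of \pullops in which no node transmits or receives more than once per slot. I would also note the boundary observation that a node $R$ that is a coordinator cannot simultaneously be the source of any assigned instance (a direct corollary of constraint~\ref{eq:constraint-sender}), which is what makes the receiver-initiated pull well-defined; and that the channel variables $C_{R,ch}$ play no role here since this theorem concerns only transmit/receive multiplicity, not frequency reuse.

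The main obstacle I anticipate is not any single algebraic step --- each case is a one-line implication --- but rather making the case enumeration provably \emph{exhaustive}: I must argue carefully that once two instances share \emph{at least} one node, the pair $(\{S_i,R_i\}, \{S_j,R_j\})$ falls into exactly one of the six patterns of Figure~\ref{fig:ilp-cases} (up to relabeling $i \leftrightarrow j$), and that the degenerate sub-possibilities (e.g.\ three of the four endpoints coinciding, or a node appearing as its own neighbor) are either impossible in a valid routing graph or already covered. A clean way to do this is to case on $|\{S_i,R_i\} \cap \{S_j,R_j\}|$: if the intersection has size two the links coincide or are reversed (cases \emph{same link}/\emph{opposite link}); if it has size one, identify which endpoint of each link is the shared node, giving the four remaining cases. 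I would present that dichotomy explicitly so the reader sees the coverage is complete.
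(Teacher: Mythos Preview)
Your proposal is correct and follows essentially the same case analysis as the paper's own proof: both enumerate the six shared-node configurations of Figure~\ref{fig:ilp-cases}, dispose of the same-link and common-receiver cases via the one-instance-per-\pullop invariant, and derive contradictions from constraints~\ref{eq:constraint-sender}--\ref{eq:constraint-pairs} in the remaining cases. Your explicit treatment of exhaustiveness (casing on $|\{S_i,R_i\}\cap\{S_j,R_j\}|$) is a welcome addition that the paper leaves implicit in the figure.
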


\begin{proof}
    To prove Theorem \ref{th:partially-wellformed} holds it is sufficient to consider whether two arbitrary flow instances may conflict.
    Accordingly, there are six cases to be considered as depicted in Figure \ref{fig:ilp-cases} where two instances $J_i$ and $J_j$ share at least a node.

    \emph{\textbf{Case 1 --} Same link (see Figure \ref{fig:ilp-total-common}:)}
    If $I_i = I_j = 1$, then $N_{R_i} = N_{R_j} = 1$ due to constraint \ref{eq:constraint-receiver}.
    In this case, both $J_i$ and $J_j$ will be serviced as part of the same \pn operation that is coordinated by node $R = R_i = R_j$. 
    At run-time, the coordinator $R$ will pull either $J_i$ or $J_j$ (but not both) depending on its local state. 
    Note that this is one the cases \pn exploits to adapt and improve performance.
    
    \emph{\textbf{Case 2 --} Opposite link (see Figure \ref{fig:ilp-reverse}:)}
    Executing $J_i$ and $J_j$ in the same slot would result in a conflict since one of the common nodes would have to be both a sender and a receiver.
    We will prove by contradiction that $J_i$ and $J_j$ will not be assigned in the same slot.
    Assume that $I_i = I_j = 1$ and, without loss of generality, that the common node is $N = S_i = R_j$.
    Since 
    Since $I_j = 1$, then $N_{R_j} = 1$ due to constraint \ref{eq:constraint-receiver}. 
    Also, since $I_i = 1$, then $N_{S_i} = 0$ due to constraint \ref{eq:constraint-sender}.
    This is a contradiction since $N_{R_j} = N_{S_i}$ and $R_j$ and $S_i$ refer to the same node.
    The proofs for the cases given in Figures \ref{fig:ilp-rec-send} and \ref{fig:ilp-send-rec} are similar.
    
    \emph{\textbf{Case 3 --} Common receiver (see Figure \ref{fig:ilp-common-rec}):}
    The common receiver case is similar to the same link case with the exception that the senders for both $J_i$ and $J_j$ are different.  
    Note that this is one the cases \pn exploits to adapt and improve performance.

    \emph{\textbf{Case 4 --} Common sender (see Figure \ref{fig:ilp-common-send}):}
    Executing $J_i$ and $J_j$ in the same slot would result in a conflict since $S = S_i = S_j$ would have to transmit two packets in the same slot. We will prove by contradiction that this cannot happen. Assume that $I_i = I_j = 1$. Since $I_i = 1$, then $N_{S_i} = 0$ according to constraint \ref{eq:constraint-sender}. However, $I_i + I_j - 1 = 1 \le N_{S_i}$ due to constraint \ref{eq:constraint-pairs},  which is a contraction.
\end{proof}

The next set of constraints ensures that each \pullop is assigned a unique channel.  
We accomplish this by introducing $C_{R,ch}$ to indicate whether coordinator $R$ uses channel $ch$ ($ch = 1 \dots K$), where $K$ is the number of channels.
The selection of channels is subject to the constraints:

\begin{IEEEeqnarray}{c}
 \sum_{R \in \nodes}{C_{R,ch} } \le 1  ~~~ \forall ch \in 1 \dots K  \\
 \sum_{ch=1}^{K}{C_{R,ch}} = N_{R}
\end{IEEEeqnarray}

A requirement of the TLR model described in Section \ref{sec:reliability-model} is that coordinators must switch channels between \pullops to ensure independence between transmissions. We enforce this property by introducing additional constraints to prevent coordinators from using the same channel.

To enforce the prioritization of instances, we set the optimization objective to be:
\begin{IEEEeqnarray}{c}
\sum_{i = 0}^{i < |\mathcal{A}|}{2^{|\mathcal{A}| - i}  I_i}
\end{IEEEeqnarray}

\noindent The objective function ensures that a flow $F_i$ will be assigned over lower priority flows unless there is a higher priority flow with a conflict with $F_i$.


\section{Experiments}
\label{sec:experiments}

Our experiments demonstrate the efficacy of \pn to support higher performance and agility than traditional scheduling approaches.
We focus on the next generation of smart factories that will use sophisticated sensors that are grid-powered and require higher data rates than current IIoT systems.
Specifically, we are interested in answering the following questions:

\begin{itemize}[topsep=-8pt]
    \item Does \pn improve the real-time capacity in typical IIoT workloads?
    \item Does \pn provide safe reliability guarantees as the quality of links varies significantly?
    \item Can \pn synthesize policies in a timely manner?
\end{itemize}

\subsection{Methodology}
We compare \pn policies against three baselines.
First, we compare two scheduling approaches that do not share entries.
To provide a fair comparison between schedules and policies, we first construct schedules (\textsf{Sched}) using the same ILP formulation as \pn policies but without allowing entries to be shared.
This is accomplished by adding to the ILP an additional constraint that the size of the service list is one .
We also compare against the conflict aware least laxity first scheduler (\textsf{CLLF}) \cite{conflictAwareLeastLaxityFirst}.
\textsf{CLLF} has been shown to produce near-optimal schedules and constitutes the current state-of-the-art scheduler.
Similar to \textsf{Sched}, \textsf{CLLF} also does not share entries.
Second, we compare against the Flow Centric Policy (\textsf{FCP})~\cite{brummet2018flexible}, which allows entry sharing only among the links of \emph{a single flow}, whereas \pn can share entries \emph{across multiple flows}.
\textsf{Sched}, \textsf{CLLF}, and \textsf{FCP} utilize sender-initiated transmissions, while \pn utilizes receiver-initiated pulls.

Unless stated otherwise, we use $\pmin = 70\%$ as suggested by Emerson's guide to deploying WirelessHART networks.
In simulations, we set the probability of a successful transmission to equal \pmin.
The number of retransmissions used by \pn, \textsf{Sched}, \textsf{CLLF}, and \textsf{FCP} is configured to achieve a 99\% end-to-end reliability for all flows.
The period and deadline are equal, and the phases are 0 in all workloads.
Flow priorities are assigned such that flows with shorter deadlines have higher priority.
To break ties, flows with longer routes are assigned a higher priority.
The remaining ties are broken arbitrarily.

We quantify the performance of protocols using \emph{max flows scheduled}, \emph{real-time capacity}, and \emph{response time}.
The max flows scheduled measures the maximum number of flows that can be supported without missing the deadlines or reliability requirements of any flows.
The real-time capacity is the highest rate at which flows can release packets without missing deadlines or reliability constraints.
The response time is the maximum latency of all instances of a flow as measured from the time when an instance is released until it is delivered to its destination.


\subsection{Simulations}
We use a discrete event simulator to control \pmin in the TLR model precisely, which is impractical on a testbed.
The simulator determines the success or failure of transmitting a packet and receiving the acknowledgment over a link by drawing from a Binomial distribution whose change of success can be configured.
Unless stated otherwise, all links are configured to have the same success chance of \pmin.
All simulations are either single-hop or performed on one of the following two topologies: a 41-node, 6-hop diameter topology with an average of 5.5 links per node derived from a testbed deployed at Washington University in St. Louis (WashU topology) \cite{washUTopology} and an 85-node, 6-hop diameter topology with an average of 10.4 links per node derived from the Indriya testbed (Indriya topology) \cite{indriyaTopology}.
In simulations, we used settings consistent with 802.15.4: the number of channels was set to sixteen and we used 10 ms slots sufficiently large to transmit a packet and receive an acknowledgment.

\subsubsection{Star Topology}
We compare \pn and \textsf{Sched} in the practically important case of star topologies.
In star topologies, for the workloads we consider, \textsf{Sched}, \textsf{CLLF}, and \textsf{FCP} perform identically and, therefore, we only report the results of \textsf{Sched}.
In this experiment, we consider workloads consisting of flows that have a period and deadline of 100 slots.
We increase the number of flows until the workload becomes unschedulable under both \pn and \textsf{Sched}.

\textbf{Performance in Star Topologies:}
Figure \ref{fig:one-hop-performance} plots the max response time of all scheduled flows as the number of flows in the workload is increased.
We configure \textsf{Sched} and \pn to have an end-to-end reliability of 99\% for each flow when $\pmin= 60\%$ and $\pmin= 70\%$.
The figure indicates the max response time increased until each protocol reached its real-time capacity, as indicated by the vertical line in the figure.
When $\pmin = 70\%$, \pn supports 63 flows without missing deadlines compared to only 25 flows supported by \textsf{Sched}.
This represents a real-time capacity improvement of 2.52 times at $\pmin = 70\%$ and 3.25 times at $\pmin = 60\%$. 


\begin{figure}
 \centering
\begin{subfigure}[b]{0.45\linewidth}
\vskip 0pt
\centering
\includegraphics[width=\linewidth]{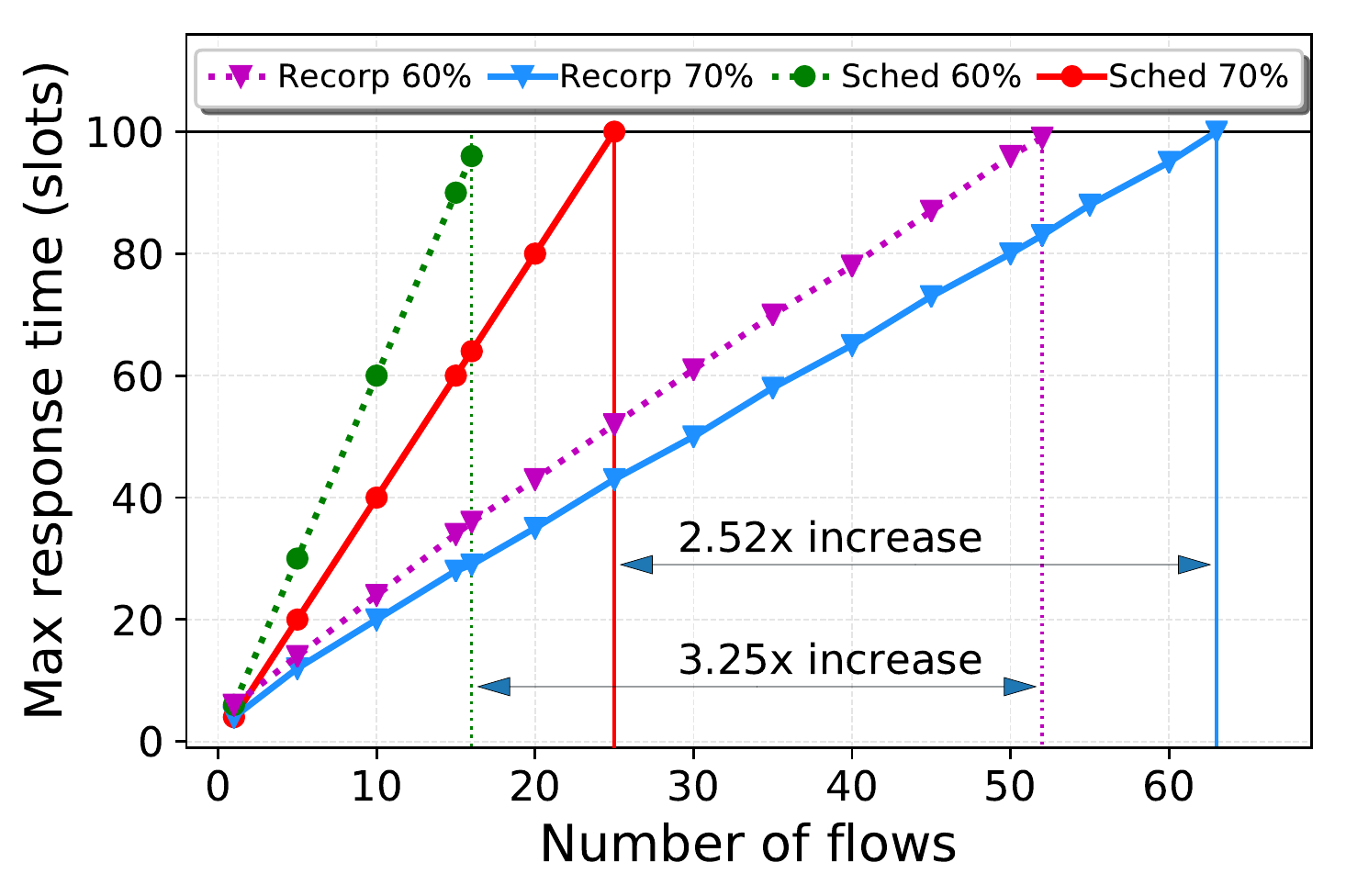}
\caption{Max response time and real-time capacity}
\label{fig:one-hop-performance}
\end{subfigure}
%
%
\begin{subfigure}[b]{0.45\linewidth}
\vskip 0pt
\centering
\includegraphics[width=\linewidth]{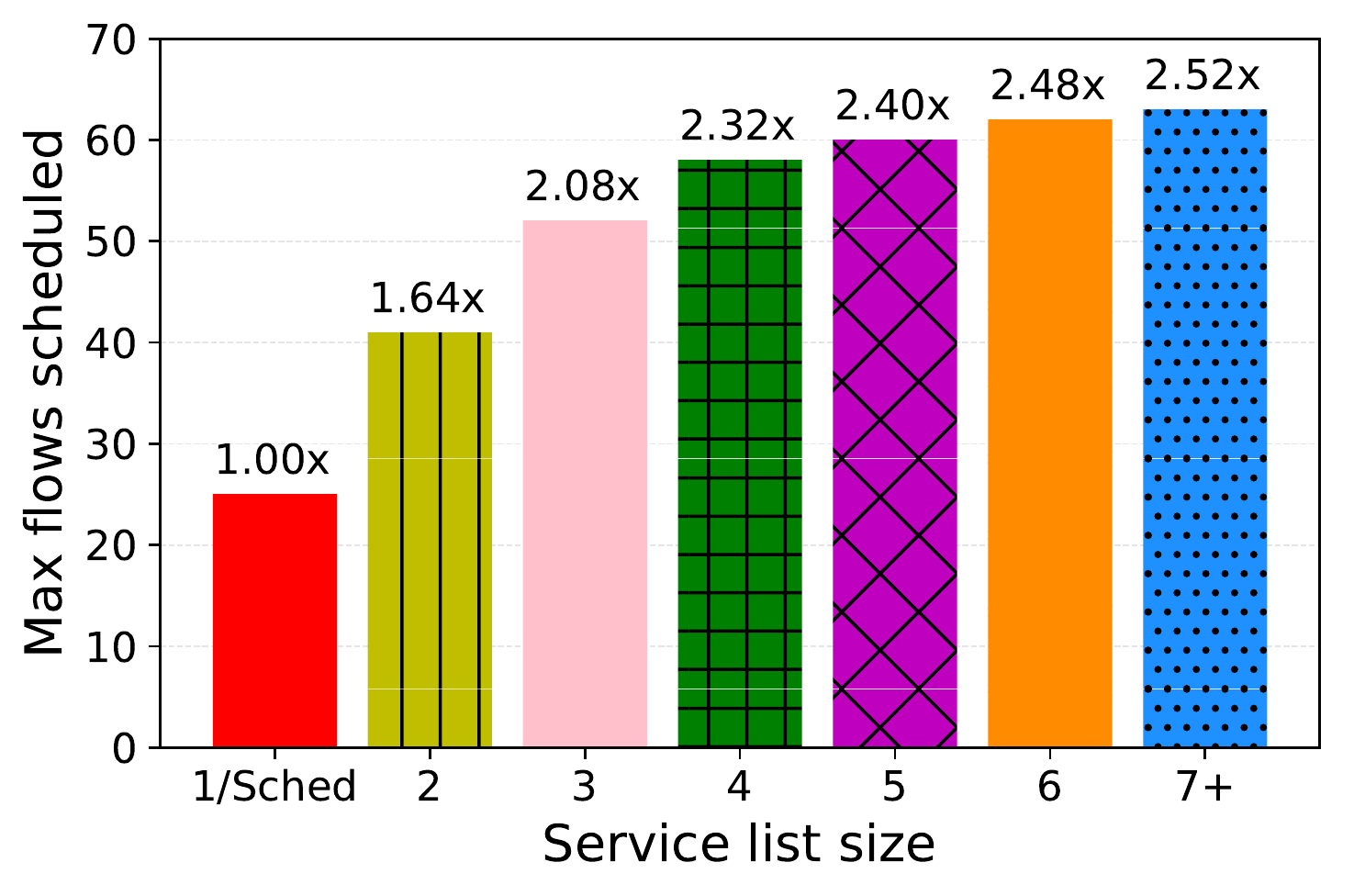}
\caption{Size Impact of service list on schedulability}
\label{fig:servicelist}
\end{subfigure}
\caption{Simulations on star topologies.}
\end{figure}

\textbf{Impact of the Service List Size:}
Schedules and \pn policies differ in how many instances can share an entry, which can be controlled by constraining the size of the service list.
Schedules provide no sharing and are limited to a service list size of one.
In contrast, \pn policies allow multiple flows to be included in the service list to share an entry.
Figure \ref{fig:servicelist} plots the maximum number of flows scheduled as the service list size is varied when $\pmin=70\%$.
When the size of the service list is one, \pn behaves like \textsf{Sched}.
As we allow more flows to potentially share a slot, the number of flows scheduled increases.
However, there are diminishing returns; most of the benefit is observed when the service list is capped at 4 to 6 flows.
No meaningful improvement in the real-time capacity may be observed after increasing the service list size beyond 7 flows.
Based on this result, we set the maximum service list size to 4 for all remaining experiments.
These results indicate that \emph{it is sufficient to share slots across only a few flows to gain most of the benefits of using \pn policies}.

\subsubsection{Multihop Topology}
To provide a comprehensive comparison between \pn, \textsf{Sched}, \textsf{CLLF}, and \textsf{FCP}, we consider four typical workloads: data collection, data dissemination, a mix of data collection and dissemination, and route through the base station. 
The results presented in this section are obtained from 100 simulation runs for each workload type on each multihop topology.
In all runs, the node closest to the center of the target topology is selected as the base station.
In each run, we generate 50 flows whose sources and destinations are picked as follows:
\begin{itemize}
    \item{\textbf{Data Collection (COL)}: Flows are randomly generated from the nodes to the same base station.}
    \item{\textbf{Data Dissemination (DIS)}: Flows are randomly generated from the same base station to nodes.}
    \item{\textbf{Data Collection and Dissemination (MIX)}: Each flow is randomly selected to use either COL or DIS}
    \item{\textbf{Route Through the Base Station (RTB)}: The source and destination of flows are selected at random}, but the routes are constrained to pass through the base station.
\end{itemize}
\noindent
Each flow is assigned at random to one of three flow classes whose periods and deadlines maintain a 1:2:5 ratio.
For example, if \textsf{Class 1} has a period of 100~$ms$, then \textsf{Class 2} has a period of 200~$ms$, and \textsf{Class 3} has a period of 500~$ms$.
We refer to the period of \textsf{Class 1} as the base period. 
In a run, the base period of the flows is decreased until the workload is unschedulable.
The results of a run are obtained for the smallest base period for which the workload is schedulable. 


\begin{figure}
    \centering
    \begin{subfigure}[b]{0.45\linewidth}
    \centering
    \includegraphics[width=\linewidth]{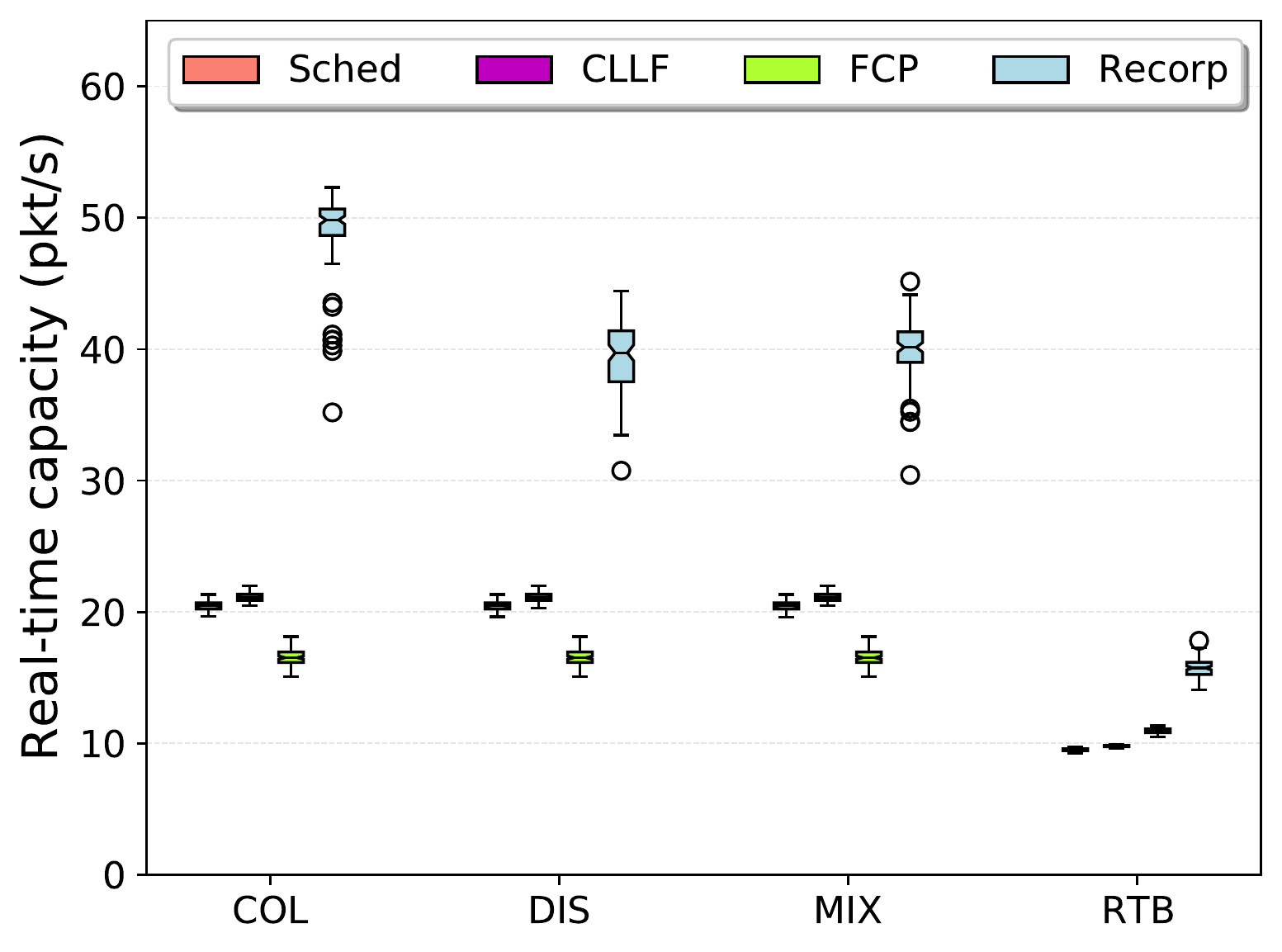}
    \caption{41-node WashU topology}
    \label{fig:multi-hop-capacity-washu}
    \end{subfigure}
    \begin{subfigure}[b]{0.45\linewidth}
    \centering
    \includegraphics[width=\linewidth]{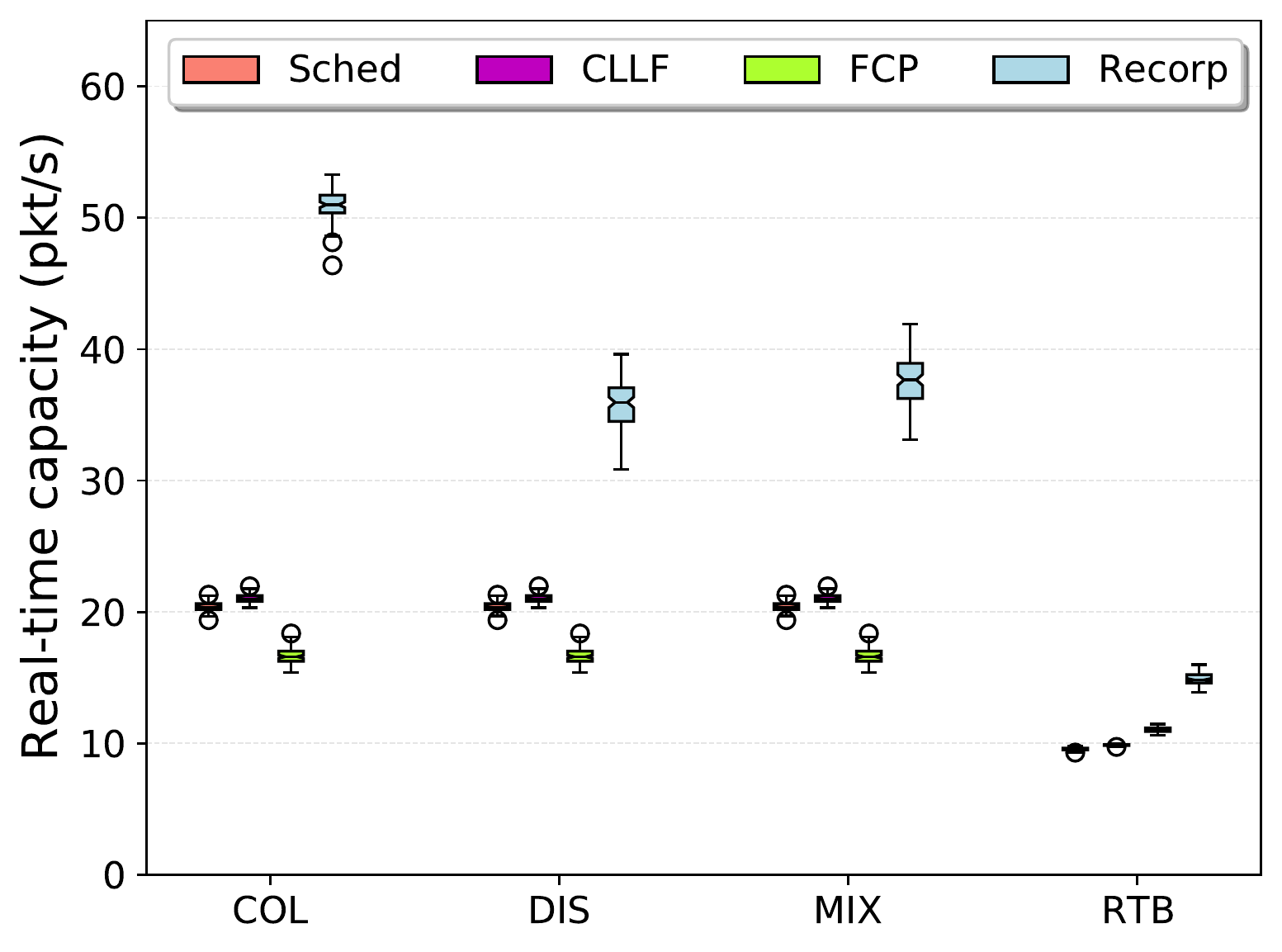}
    \caption{85-node Indriya topology}
    \label{fig:multi-hop-capacity-washu-indriya}
    \end{subfigure}
    \caption{Real-time capacity results.}
\end{figure}

\textbf{Real-time Capacity and Response Time:}
Figures \ref{fig:multi-hop-capacity-washu} and \ref{fig:multi-hop-capacity-washu-indriya} plot the distribution of the observed real-time capacities for the WashU and Indriya topologies, respectively.
\textsf{FCP} provides a median improvement over \textsf{Sched} and \textsf{CLLF} only for the RTB workload.
Moreover, the improvement is minor, with \textsf{FCP} increasing real-time capacity by only 1.15 pkt/s and 1.16 pkt/s over \textsf{CLLF} for the RTB workload in the WashU and Indriya topologies, respectively.
The improvement over \textsf{Sched} was similar.
For the other workloads where the base station is the source/destination, \textsf{FCP} has worse performance since sharing within a flow reduces only the utilization of the intermediary nodes on a flow's path, but not on the source and destination nodes.
In contrast, \pn outperforms all other protocols.
For example, \pn outperforms the overall next best protocol, \textsf{CLLF}, by a median margin of 28.74 pkt/s, 18.63 pkt/s, 19.05 pkt/s, and 5.94 pkt/s in the WashU topology and 30.00 pkt/s, 14.96 pkt/s, 16.67 pkt/s, and 4.97 pkt/s in the Indriya topology for the COL, DIS, MIX, and RTB workloads, respectively.
Together these results correspond to a median increase in real-time capacity over \textsf{CLLF} of between 
50\% and 142\% across each workload and topology.
Moreover, \pn outperforms both \textsf{Sched} and \textsf{FCP} by similar amounts across all workloads and topologies.

Figures \ref{fig:multi-hop-response-washu} and \ref{fig:multi-hop-response-washu-indriya} show the distribution of the response times of each flow class from the previous experiment for the MIX workload (including all runs).
Consistent with the real-time capacity for the MIX workload, \textsf{FCP} underperforms both \textsf{Sched} and \textsf{CLLF} with one exception.
For both topologies, \textsf{FCP} provides a slightly lower median response time than \textsf{CLLF} for \textsf{Class 2}.
The reason for this, and the reason that \textsf{CLLF} has a higher response time than \textsf{Sched} across all workloads and topologies, is due \textsf{CLLF} making scheduling decisions as a function of remaining conflict-aware laxity.
The consequence of this approach is that \textsf{CLLF} occasionally allows lower priority flows to preempt higher priority flows.
In contrast, \pn maintains deadline-monotonic prioritization and reduces the response time for all classes
in both topologies, with particularly good performance for the middle and lowest priority flow classes.
Specifically, \pn decreased the median response time in the WashU topology by 0.11 s, 0.40 s, and 2.50 s and in the Indriya topology by 0.13 s, 0.48 s, and 2.32 s over the next best protocol, \textsf{Sched}, for flow \textsf{Class 1}, \textsf{Class 2}, and \textsf{Class 3}, respectively.
Similar trends and performance differences were observed for the other workloads under all topologies, with one exception. \textsf{FCP} slightly outperformed \textsf{Sched} in the RTB workload across flow classes and topologies.
However, \pn still significantly outperformed \textsf{Sched}, \textsf{CLLF}, and \textsf{FCP}.
These results indicate \emph{\pn policies can significantly improve real-time capacity and response times for common IIoT workloads.}

\begin{figure}
    \centering
    \begin{subfigure}[b]{0.45\linewidth}
    \centering
    \includegraphics[width=\linewidth]{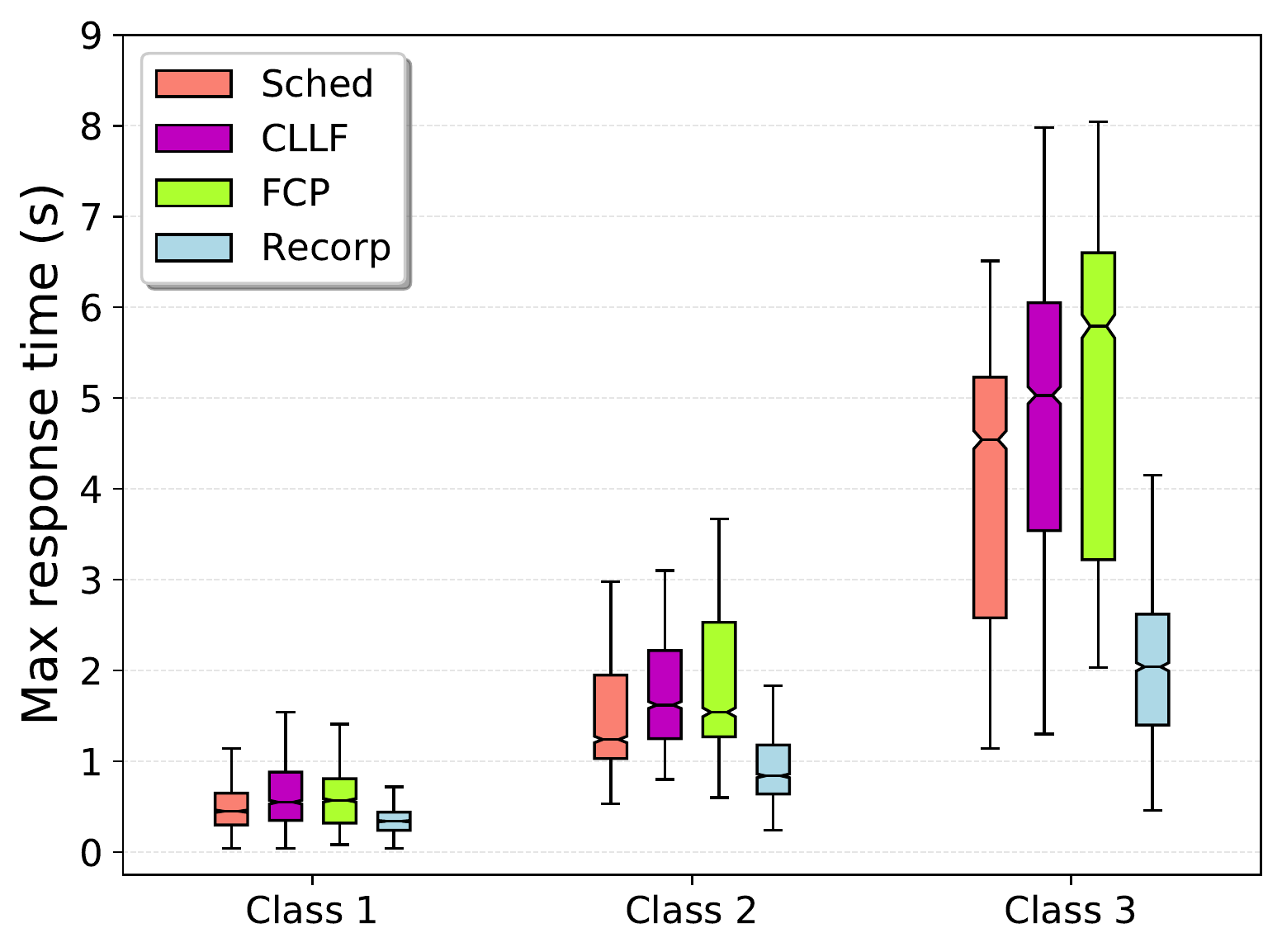}
    \caption{41-node WashU topology}
    \label{fig:multi-hop-response-washu}
    \end{subfigure}
    \begin{subfigure}[b]{0.45\linewidth}
    \centering
    \includegraphics[width=\linewidth]{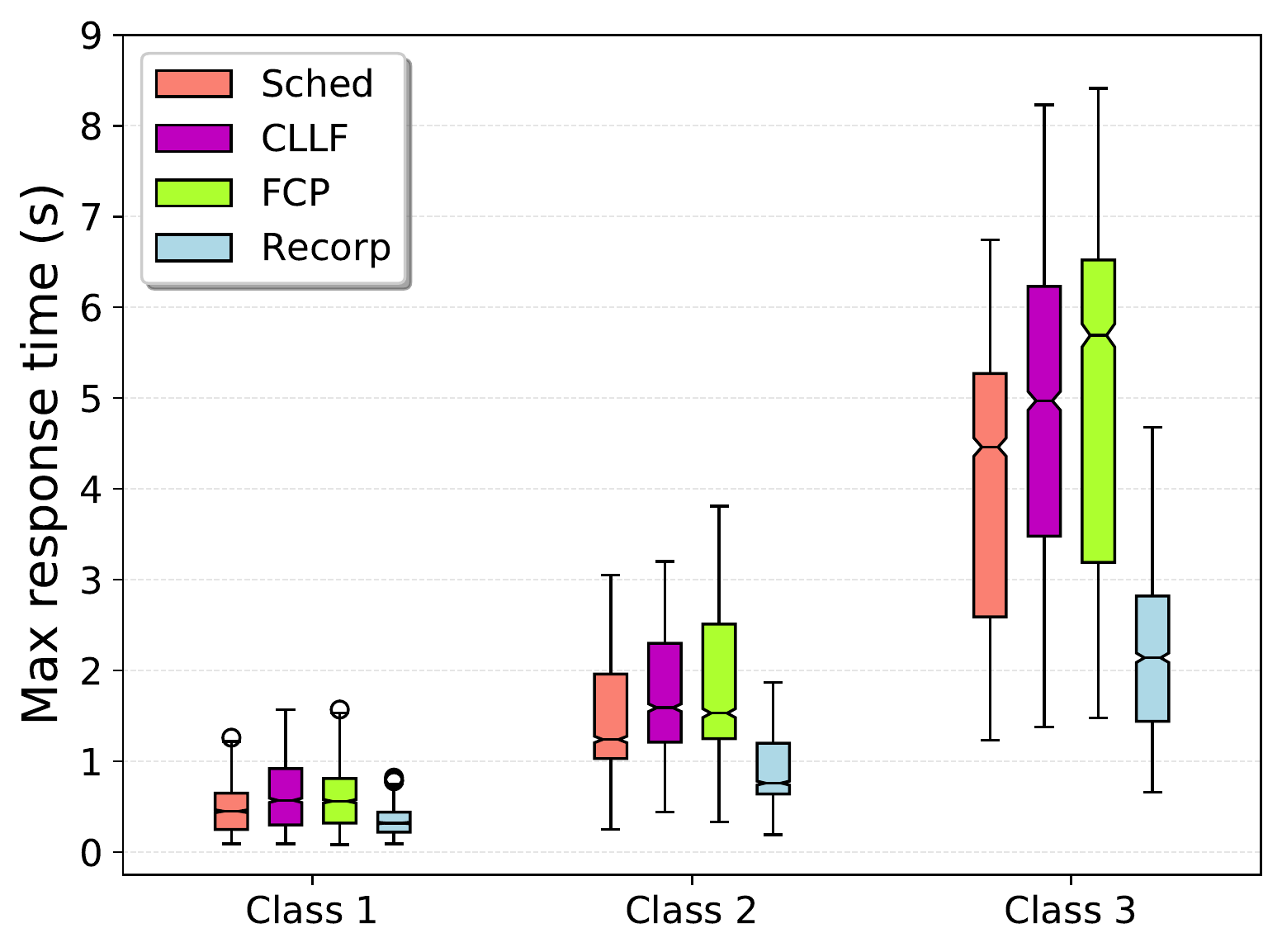}
    \caption{85-node Indriya topology}
    \label{fig:multi-hop-response-washu-indriya}
    \end{subfigure}
    \caption{Response time per flow class.}
\end{figure}

\begin{figure}
    \centering
    \begin{subfigure}[b]{0.45\linewidth}
    \centering
    \includegraphics[width=\linewidth]{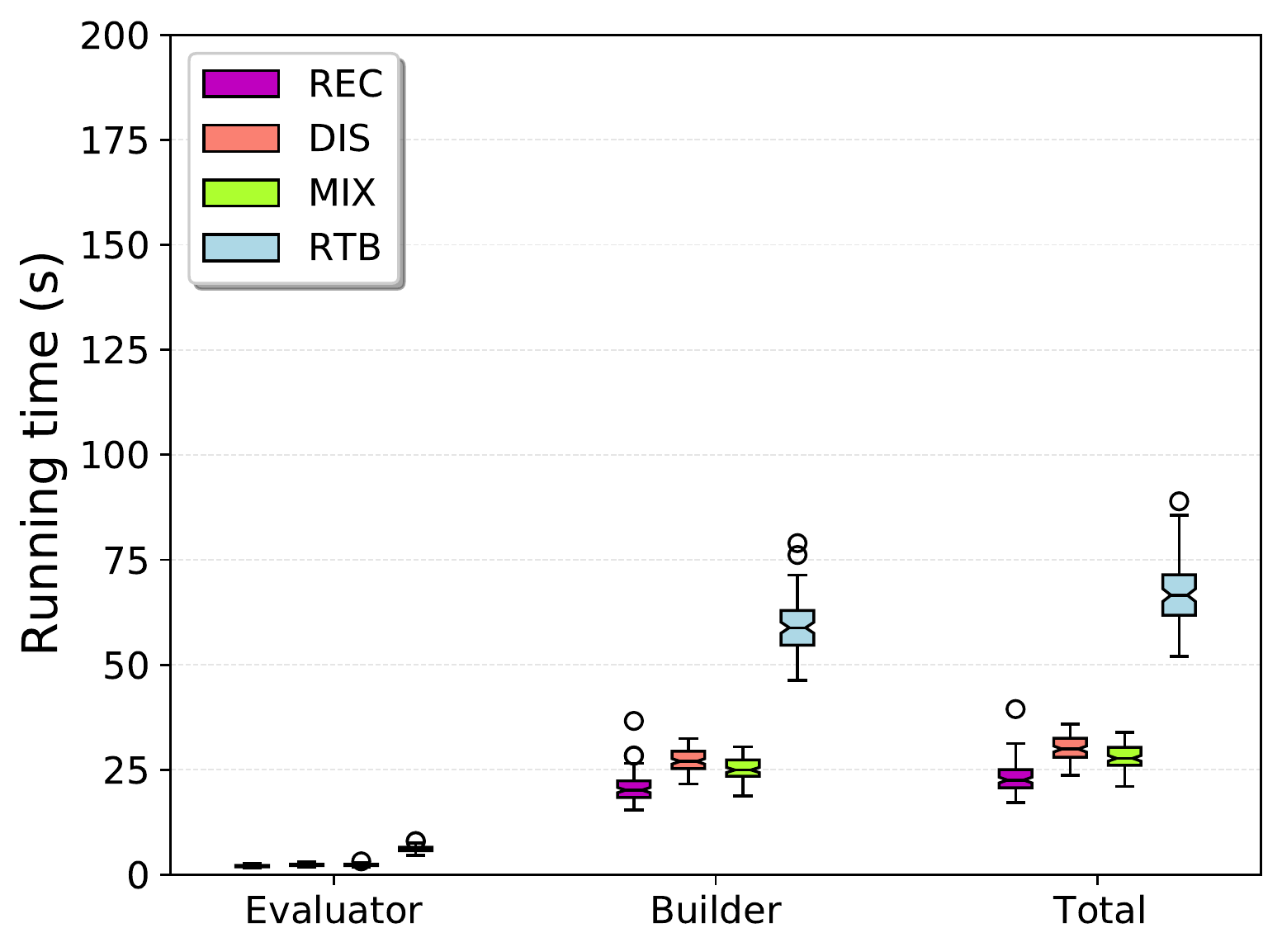}
    \caption{41-node WashU topology}
    \label{fig:multi-hop-synthesis-washu}
    \end{subfigure}
    \begin{subfigure}[b]{0.45\linewidth}
    \centering
    \includegraphics[width=\linewidth]{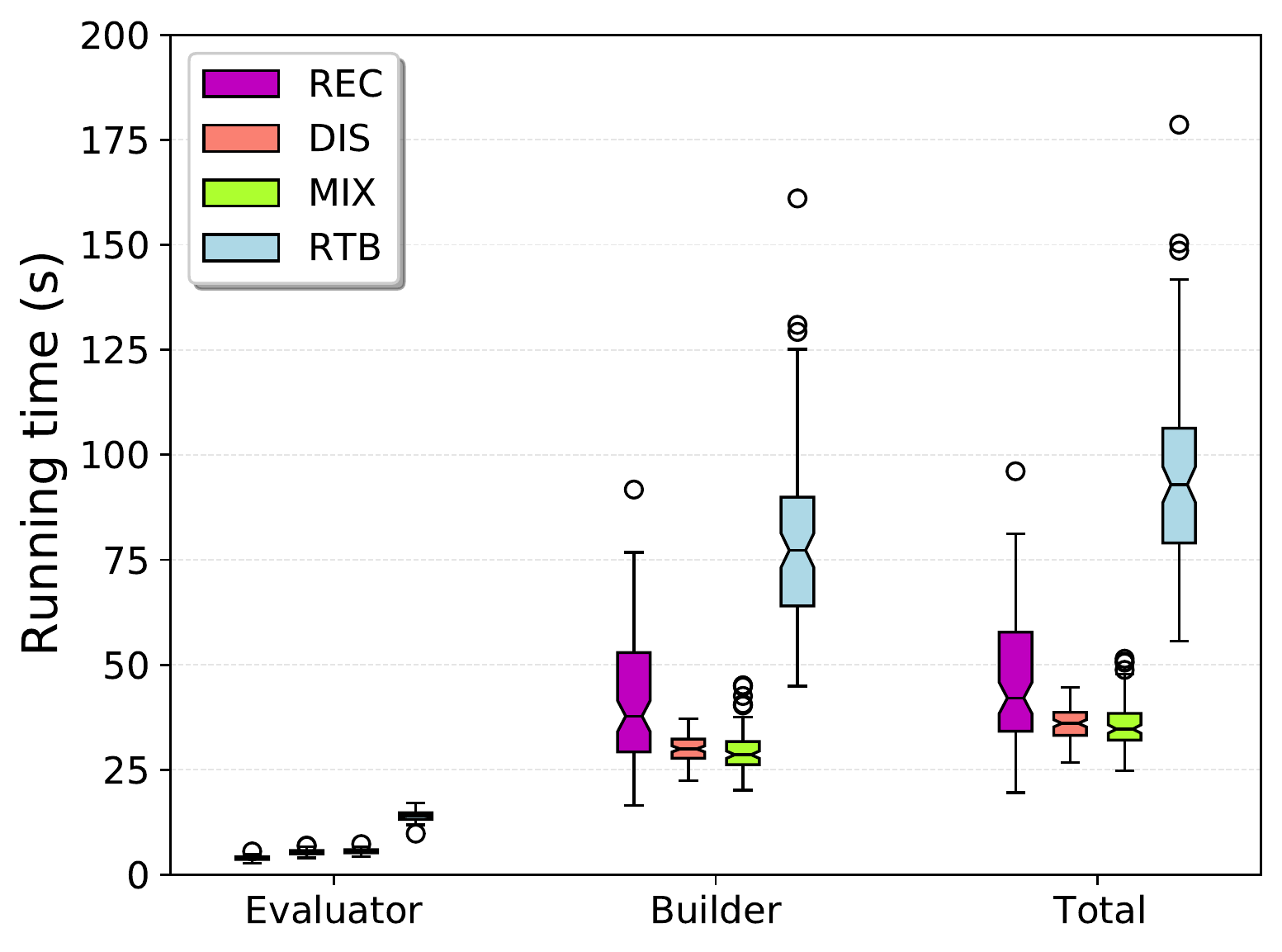}
    \caption{85-node Indriya topology}
    \label{fig:multi-hop-synthesis-washu-indriya}
    \end{subfigure}
    \caption{Synthesis time.}
\end{figure}

\textbf{Synthesis Time:}
Next, we turn our attention to the feasibility of synthesizing policies.
Typical IIoT systems have workloads that are stable for tens of minutes, which justifies synthesizing \pn policies.
We divided the total time to synthesize a policy into two categories:
the time the \evaluator spends managing the system state and the time the \builder spends solving ILPs to determine the \pullops in each slot.
Figures \ref{fig:multi-hop-synthesis-washu} and \ref{fig:multi-hop-synthesis-washu-indriya} plot the distribution of the execution times for each workload under the WashU and Indriya topologies, respectively.
The median total synthesis time is below $93$~$s$ for all workloads and both topologies.
The synthesis time of the route through the base station is significantly higher than the other workloads, as flows tend to have longer paths.
This results in more states to be managed and longer schedules.
The \builder tends to be the most expensive, followed by the \evaluator.
We plan to explore ways to reduce the synthesis time further.  
These results indicate that \emph{it is feasible to synthesize policies within 1--3 minutes for realistic networks.}

\textbf{Threshold Link Reliability Model Evaluation:}
Next, consider \pn's reliability guarantees.
\pn uses a safe lower bound on a flow's end-to-end reliability under the TLR model (i.e., when the link quality of a flow exceeds \pmin) described in Theorem \ref{th:monotonic}.
We are interested in providing simulated and empirical evidence that the lower bound is safe.
Additionally, when the link quality degrades below \pmin, \pn provides no performance guarantees.
However, the end-to-end reliability of flows should degrade gracefully as link quality falls below \pmin.

To this end, we simulated a representative MIX workload \pn policy on the WashU topology with $\pmin = 70\%$.
We varied the link quality from 50\% to 100\% in increments of 5\%. 
For each setting, we simulated 1,000,000 hyperperiods and recorded each flow instance's outcome in each hyperperiod, delivering their data successfully or otherwise. 
For each instance, we computed the probability of delivering its data and plotted the distribution of all instances as ``Simulated'' in Figure \ref{fig:washUMixSimulated}.
Additionally, we used the \evaluator to compute the lower bound on each instance's reliability.
We plotted the worst-case reliability across all instances in the same figure as ``Predicted worst case''.
The worst-case bounds computed by the \evaluator are smaller than those predicted through simulations for all test link qualities indicating that they are safe (i.e., Theorem \ref{th:monotonic} holds).
As expected, when link quality exceeds $\pmin=70\%$, all instances had reliability above their target end-to-end reliability of 99\%. 
When the link quality is below $\pmin = 70\%$, \pn provides no guarantees regarding the reliability of flows.
Nevertheless, the results indicate that the reliability of flows degrades gracefully as link quality deteriorates.
In the next section, we additionally validate the safety of the TLR model on a real testbed.

\begin{figure}[ht]
    \centering
    \includegraphics[width=.45\linewidth]{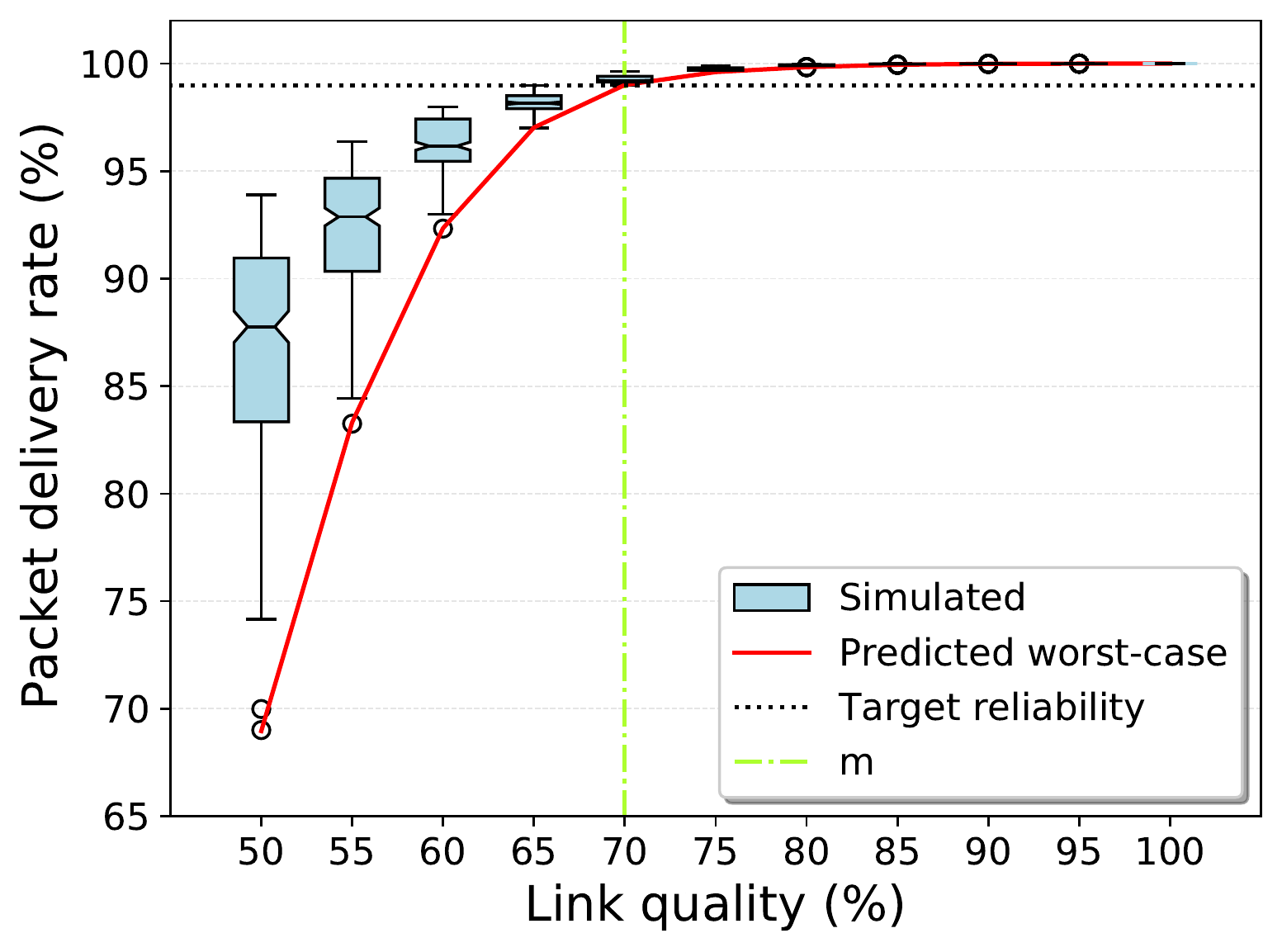}
    \caption{PDR at different link qualities for a representative MIX workload in the WashU topology for $\pmin = 70\%$}
    \label{fig:washUMixSimulated}
\end{figure}



\subsection{Testbed Results}
\label{sec:testbed-resuls}

We evaluated \pn and the baselines on a testbed of 16 TelosB motes deployed at the Univesity of Iowa (See Figure \ref{fig:realTopology}).
At the beginning of each protocol's hyperperiod 3 slots are reserved for a broadcast graph that is used to control traffic and time synchronization.
When a parent broadcasts a packet, it includes its current time in the packet. The children detect the start-of-frame delimiter upon receiving the packet and adjust their clocks to match their parent’s.
We consider a data collection workload that involves ten flows with equal periods whose routes are included in Figure \ref{fig:realWorkload}.
We configured \pn, \textsf{Sched}, and \textsf{FCP} to provide an end-to-end reliability of $99\%$ when $\pmin = 70\%$.
We did not consider \textsf{CLLF} in this experiment since \textsf{CLLF} provided nearly identical performance to \textsf{Sched}.
The experiments use 802.15.4 channels 11, 12, 13, and 14, which overlap with the 802.11g WiFi network co-located in the building.
We have evaluated the performance of \pn with and without additional interference generated by a laptop near the base station, which transmitted ping packets at a rate of 1.5 Mbps.
When no interference was present, all flows met their end-to-end reliability, and the quality of the links exceeded $\pmin=70\%$.
In the following, we will focus on when interference was present to evaluate \pn's ability to adapt in an environment with significant link quality variation.
We organized our experiments into multiple runs, each run 
consisting of running the schedule/policy of each protocol for one hyperperiod and storing the outcome of each transmission to flash at the end of the run.
The reported results were obtained from releasing 10,000 packets for each protocol (i.e., 10,000 runs) over approximately 6 hours.

\begin{figure}
    \centering
    \begin{subfigure}[b]{.144\textwidth}
        \centering
        \includegraphics[width=\textwidth]{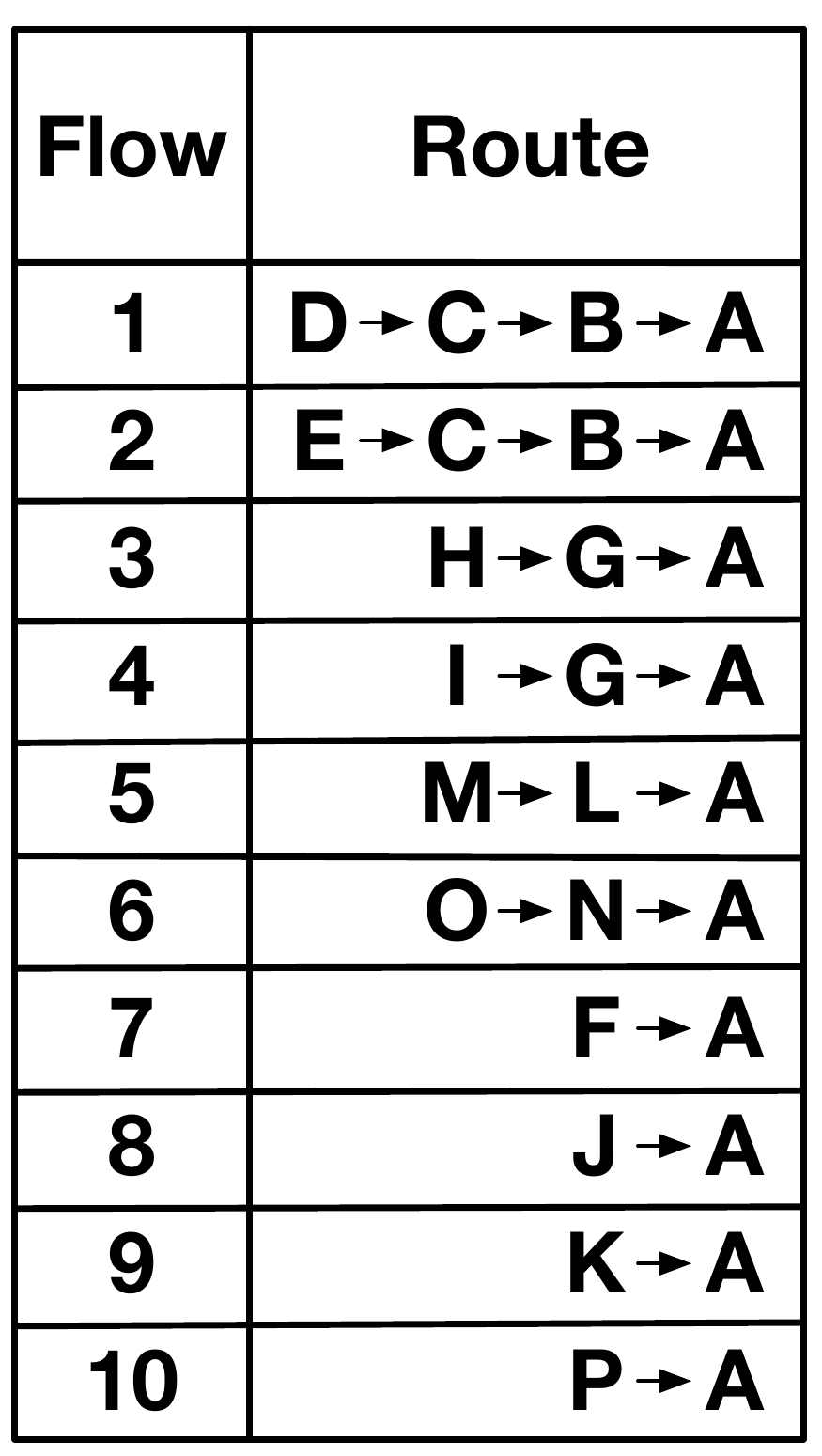}
        \caption[Network2]%
        {{\small Flows}}    
        \label{fig:realWorkload}
    \end{subfigure}
    \begin{subfigure}[b]{.59\textwidth}  
        \centering 
        \includegraphics[width=\textwidth]{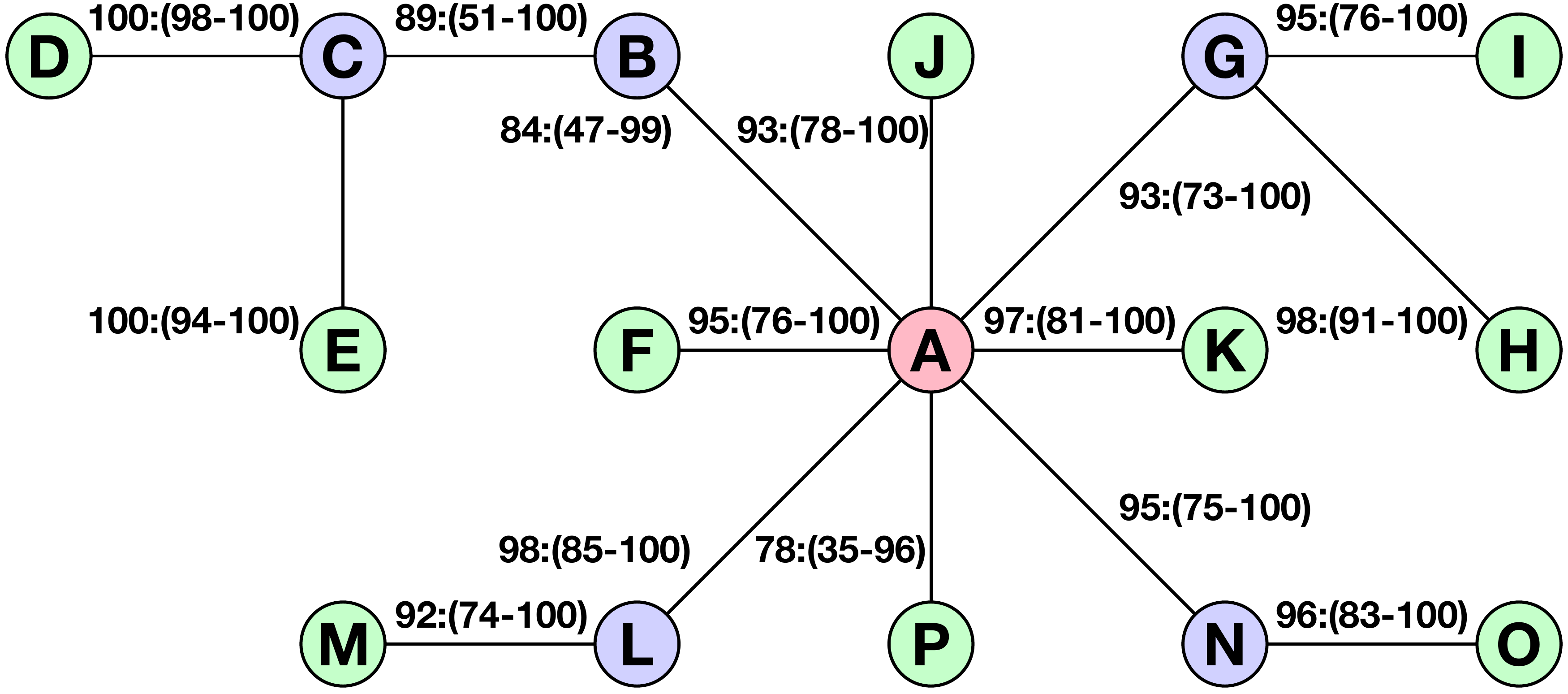}
        \caption[]%
        {{\small Topology}}    
        \label{fig:realTopology}
    \end{subfigure}
    \caption[]
    {\small Testbed topology and flow routes.  The green, purple, and red nodes indicate the flow sources, intermediary nodes, and the base station, respectively. Link quality (with interference) was calculated over a sliding interval of 100 runs (about 200s).  The min, median, and max observed link quality over all intervals for each link is given as median:(min-max).} 
    \label{fig:realTopologyWorkload}
\end{figure}

\textbf{Real-time Capacity and Reliability:}
We determined the maximum rates of the ten data collection flows that can be supported using \pn, \textsf{Sched}, and \textsf{FCP}.
\pn provides a real-time capacity of 38.46 pkt/s compared to 19.6 pkt/s and 18.2 pkt/s provided by \textsf{Sched} and \textsf{FCP}, respectively.
The real-time capacity of \pn is 96\% higher than \textsf{Sched}.
This result is consistent with the multihop experiments where \pn significantly outperforms the baselines. 
Next, we will evaluate whether the improved capacity comes at the cost of lower reliability.

\begin{figure}
 \centering
\begin{subfigure}[b]{.45\linewidth}
\vskip 0pt
\centering
\includegraphics[width=\linewidth]{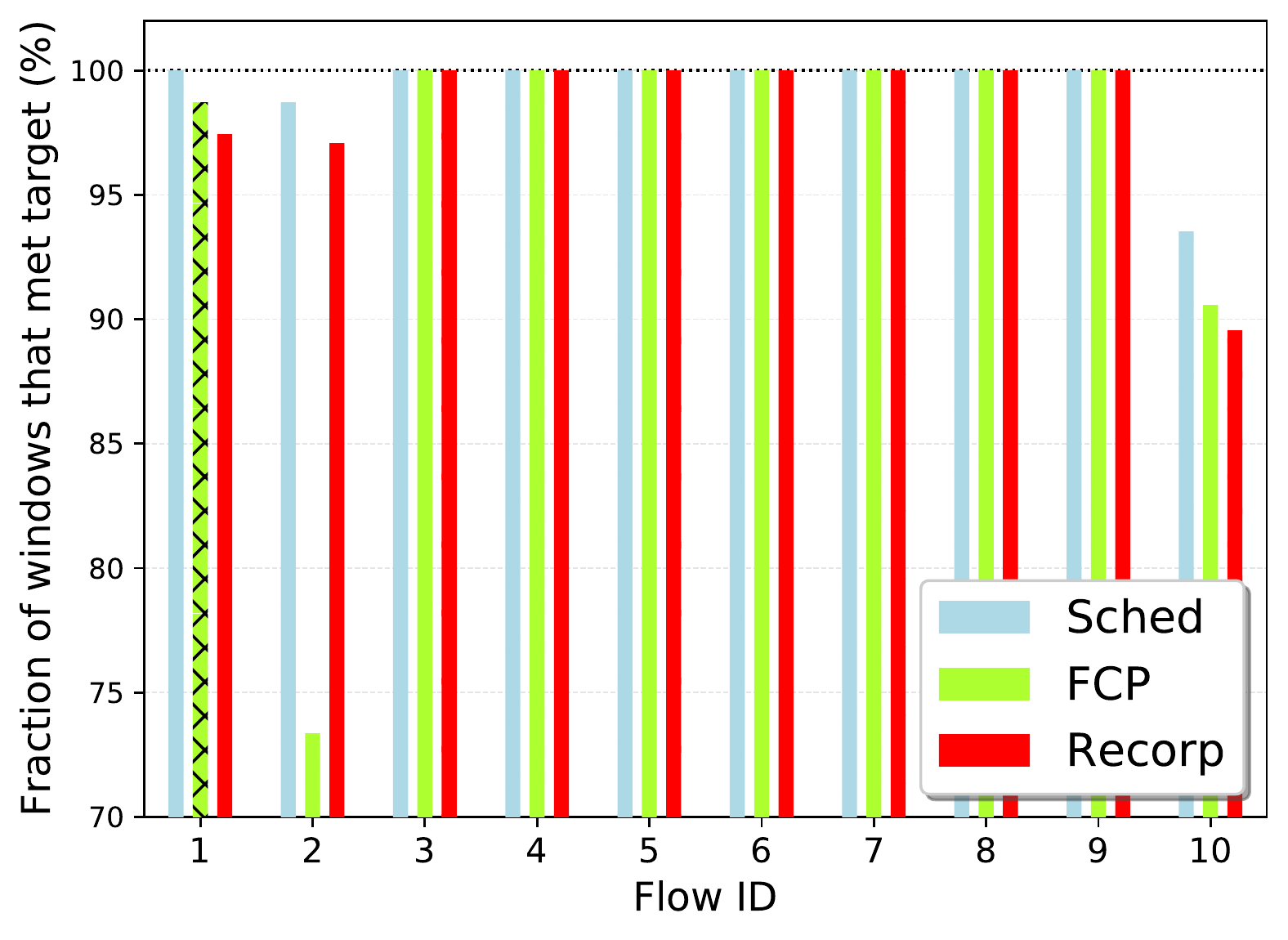}
\caption{End-to-end flow reliability}
\label{fig:totalE2E}
\end{subfigure}
\begin{subfigure}[b]{.45\linewidth}
\vskip 0pt
\centering
\includegraphics[width=\linewidth]{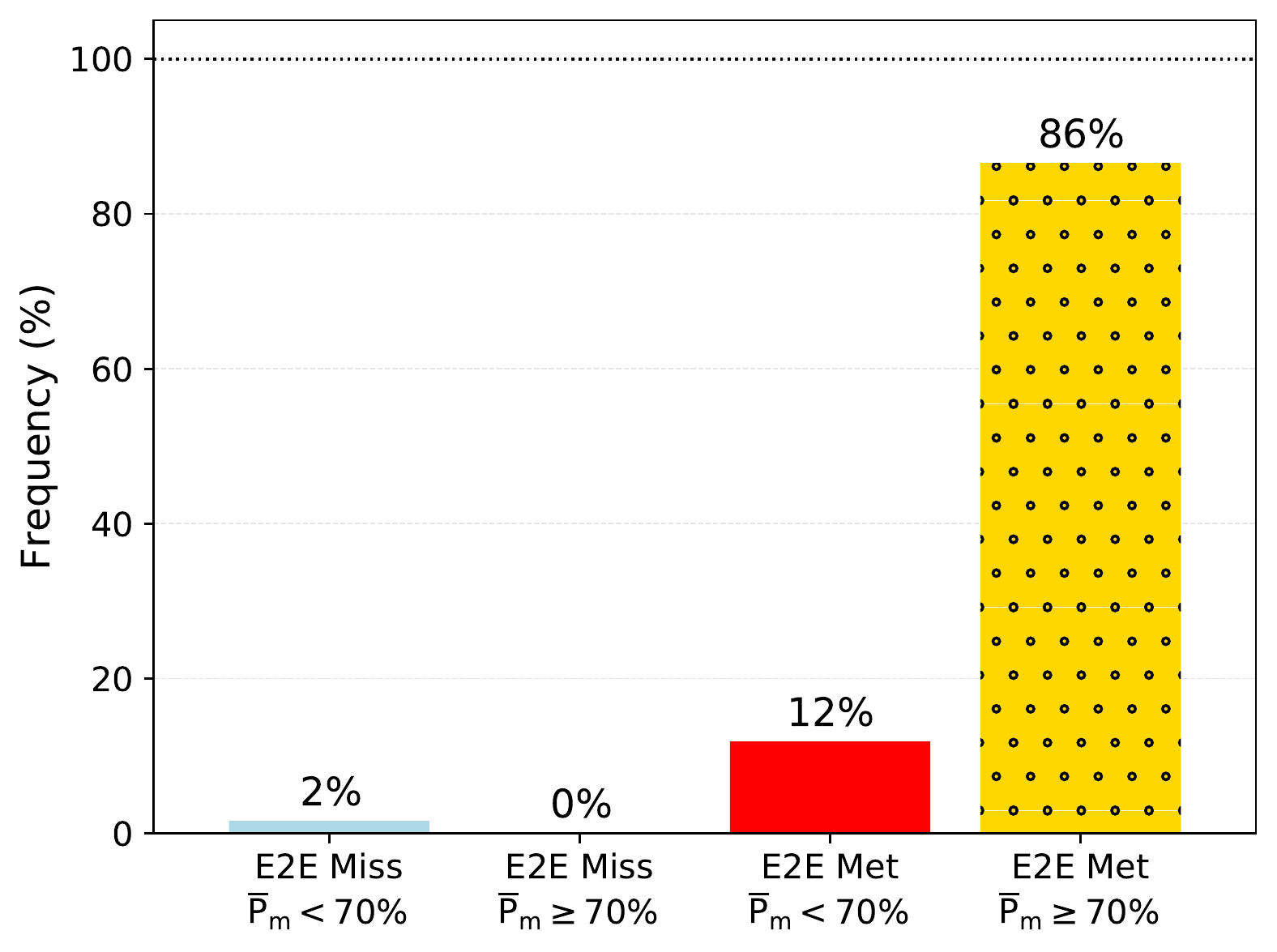}
\caption{Safety of reliability bound}
\label{fig:mprrSafety}
\end{subfigure}
\caption{Evaluating the safety of the reliability bound.}
\end{figure}

We computed the packet delivery rate (PDR) over sliding windows of 100 runs.
In Figure \ref{fig:totalE2E}, we plot the fraction of windows that met the end-to-end reliability target of 99\% for each protocol.
The lowest reliability was observed for \textsf{FCP}'s flow 2.
We found that the root cause behind the lower performance of \textsf{FCP} is the contention-based mechanism used to arbitrate access to the entries shared by the links of a flow.
\textsf{FCP} prioritizes the transmission of nodes closer to the flow's destination by having them transmit at the beginning of the slot while the other nodes only transmit after clear channel assessment (CCA) indicates the slot is not used.
In the presence of WiFi interference, CCA was not a robust indicator of transmissions.
This experience highlights the potential advantage of using receiver-initiated pulls over contention-based approaches that rely on CCA.


\pn policies guarantee probabilistically that the end-to-end reliability constraints are met as long as the quality of all used links exceeds a minimum packet reception rate $\pmin$.
When the quality of the links falls below $\pmin$, we provide no guarantees on the end-to-end reliability of flows.
We evaluate whether our guarantee holds as follows.
Based on the trace of successes and failures observed during the experiment, we fit a Bernoulli $\bar{P}_m$ random variable to lower bound the observed failure distributions.
Accordingly, \pn's analytical bounds on flow reliability hold only if $\bar{P}_m \ge P_m = 70\%$.
Figure \ref{fig:mprrSafety} classifies each window of 100 runs into the following cases:
\begin{enumerate}

    \item Case $\bar{P}_m \ge 70\%$, E2E Met: For 86\% of the windows, the minimum link quality met or exceeded 70\% (i.e.,  $70\% = \pmin \le \bar{P}_m$). Over all these windows, \pn policies indeed guaranteed that the end-to-end reliability of all flows exceeded the 99\% target.
    \item Case $\bar{P}_m \ge 70\%$, E2E Miss: There are no cases where the minimum link quality exceeds 70\%, and the flows do not meet the target 99\% reliability. These first two cases demonstrate that the TLR model is safe since no flows miss their end-to-end reliability targets when the minimum link quality is met.
    \item Case $\bar{P}_m < 70\%$: When the actual link quality falls below the minimum link quality of $\pmin=70\%$, we provide no guarantees on the flow's reliability. Out of the 14\% of windows where $\bar{P}_m < 70\%$, in 12\%, the end-to-end reliability is met, while for the other 2\%, it is not.



\end{enumerate}
\noindent
These experiments show that \emph{\pn policies can significantly improve real-time capacity while meeting the end-to-end reliability of flows as the quality of links fluctuates above the minimum link quality \pmin}. 

\textbf{Effective Adaptation:}
To analyze \pn's ability to adapt to variations in link quality, we consider the trace of \textsf{Sched} and \pn for flow 10, which exhibits the lowest link reliability and highest variability in our experiments.
%
%
Figure \ref{fig:realAdaptability} plots the end-to-end reliability (after retransmissions), 
the parameter $\bar{P}_m$ of a Bernoulli distribution that is fitted to account for the burst of failures observed empirically in each window, and the maximum number transmissions used by \textsf{Sched} and \pn over a trace of 4000~$s$.
Notably, the end-to-end reliability of \textsf{Sched} and \pn is similar during this time frame (Figures \ref{fig:lcE2E} and \ref{fig:ncE2E}).
\pn achieves a similar level of end-to-end reliability by performing more retransmissions, as it is clear from comparing Figures \ref{fig:lcTx} and \ref{fig:ncTx}.
\textsf{Sched} uses 3 -- 4 maximum retransmissions over the course of the hour but notably still briefly missed the end-to-end PDR target.
In contrast, \pn uses between 3 -- 7 retransmissions to combat a slightly lower link quality it experienced and did not miss the end-to-end PDR target over the interval.
Remarkably, \pn can (almost) double the number of retransmissions that may be used for flow 10 over \textsf{Sched} without degrading the performance of other flows.
These results indicate \emph{that \pn can provide higher agility than schedules by using its lightweight and local run-time adaptation mechanism to reallocate retransmissions in response to variations in link quality.}

\begin{figure}
    \centering
    \begin{subfigure}[b]{0.226\textwidth}
        \centering
        \includegraphics[width=\textwidth]{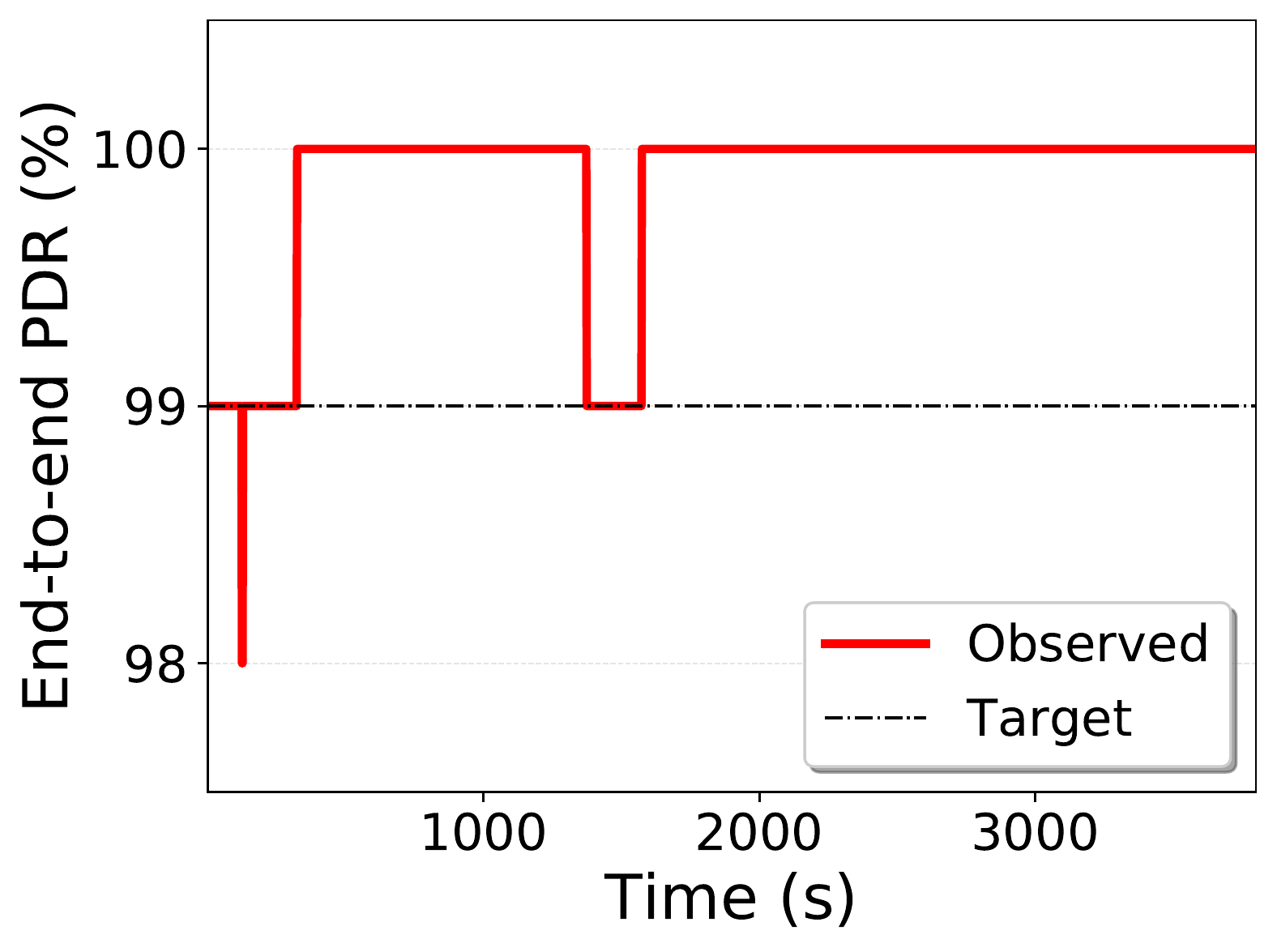}
        \caption[Network2]%
        {{\small \textsf{Sched} reliability }}    
        \label{fig:lcE2E}
    \end{subfigure}
    \begin{subfigure}[b]{0.226\textwidth}  
        \centering 
        \includegraphics[width=\textwidth]{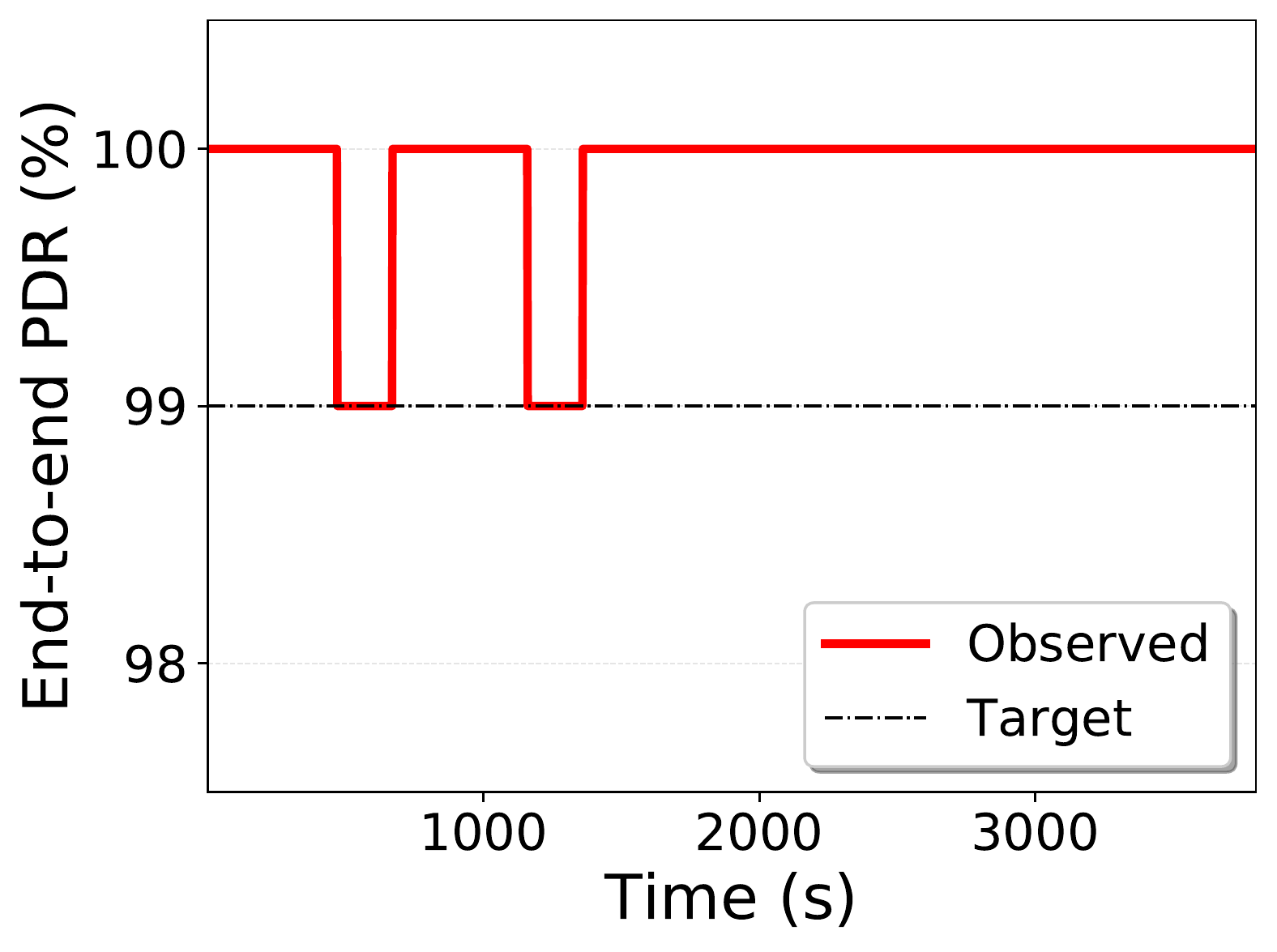}
        \caption[]%
        {{\small \pn reliability}}    
        \label{fig:ncE2E}
    \end{subfigure}
    \begin{subfigure}[b]{0.226\textwidth}   
        \centering 
        \includegraphics[width=\textwidth]{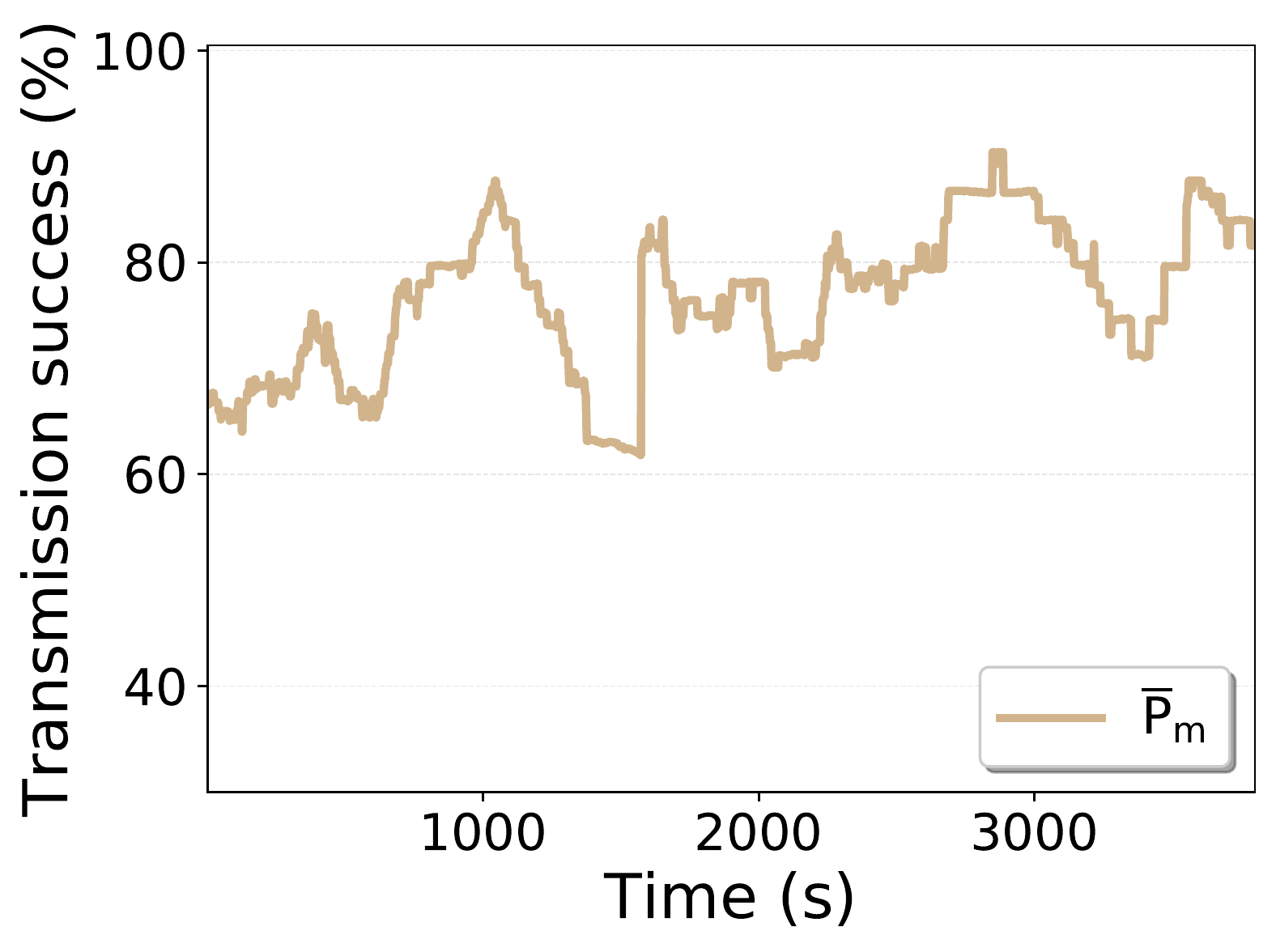}
        \caption[]%
        {{\small Link $\bar{P}_m$ for \textsf{Sched}}}    
        \label{fig:lcPRR}
    \end{subfigure}
    \begin{subfigure}[b]{0.226\textwidth}   
        \centering 
        \includegraphics[width=\textwidth]{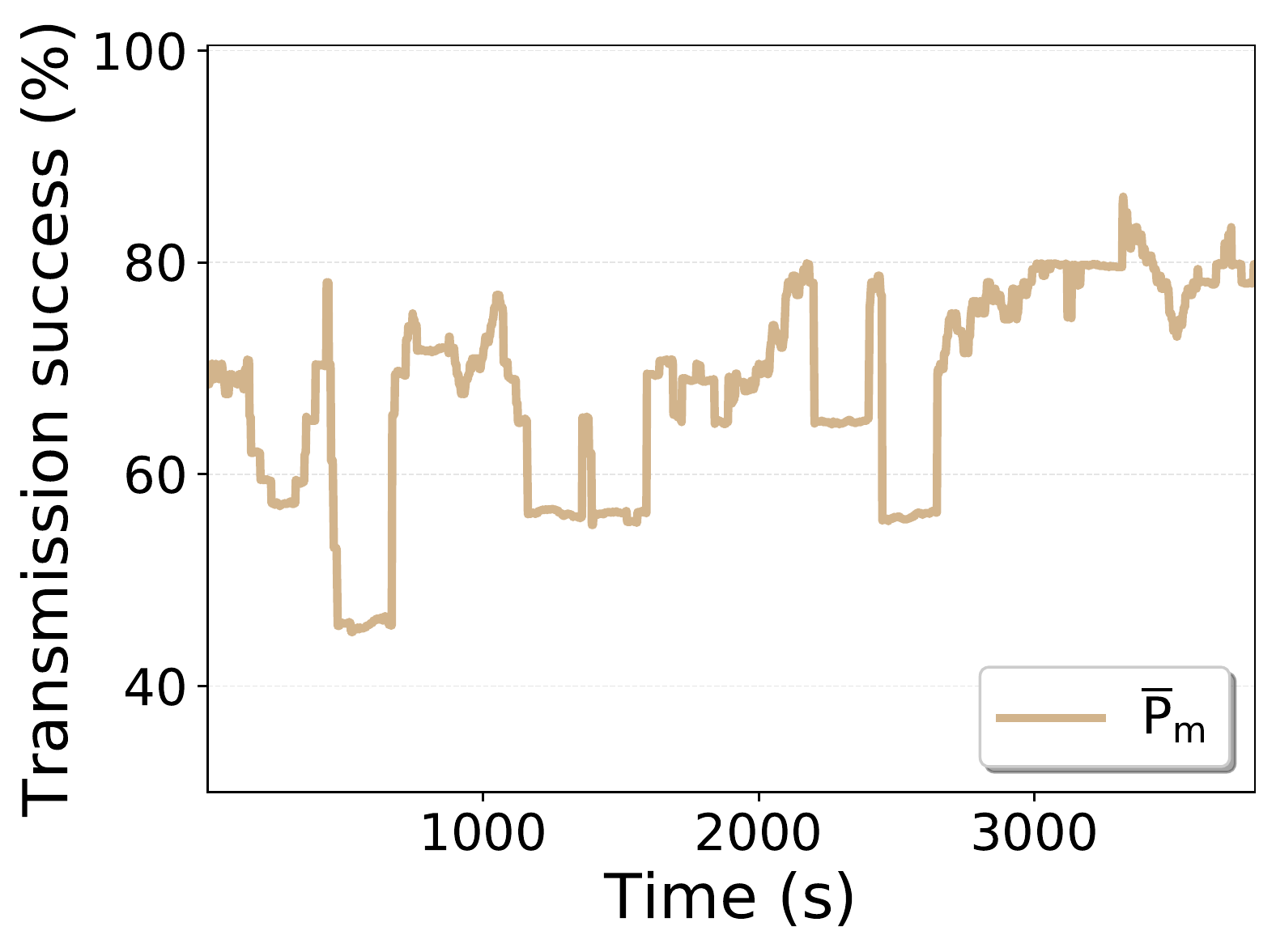}
        \caption[]%
        {{\small Link $\bar{P}_m$ for \pn}}    
        \label{fig:PPRR}
    \end{subfigure}
    \begin{subfigure}[b]{0.226\textwidth}   
        \centering 
        \includegraphics[width=\textwidth]{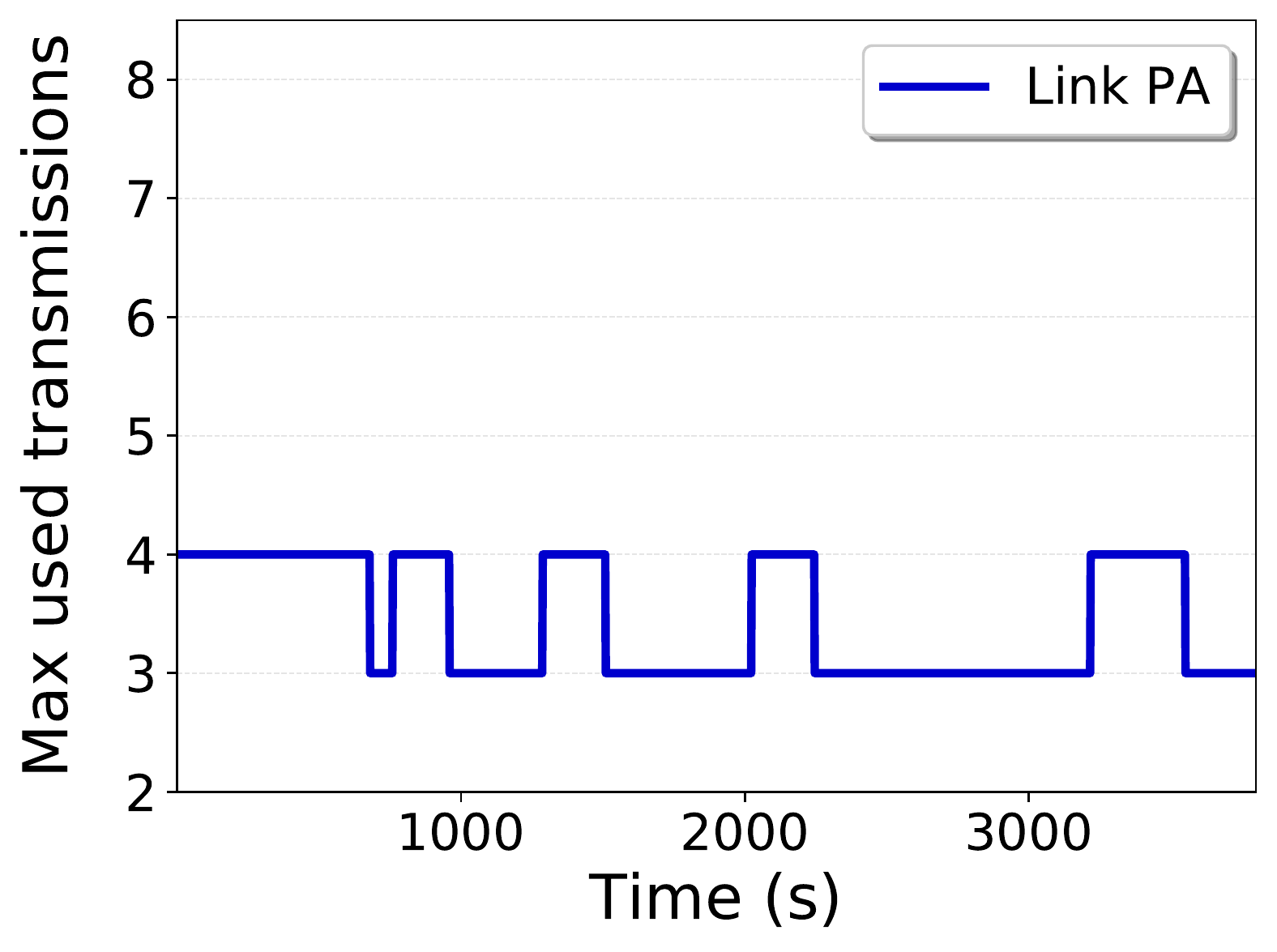}
        \caption[]%
        {{\small \textsf{Sched} max tx. }}    
        \label{fig:lcTx}
    \end{subfigure}
    \begin{subfigure}[b]{0.226\textwidth}   
        \centering 
        \includegraphics[width=\textwidth]{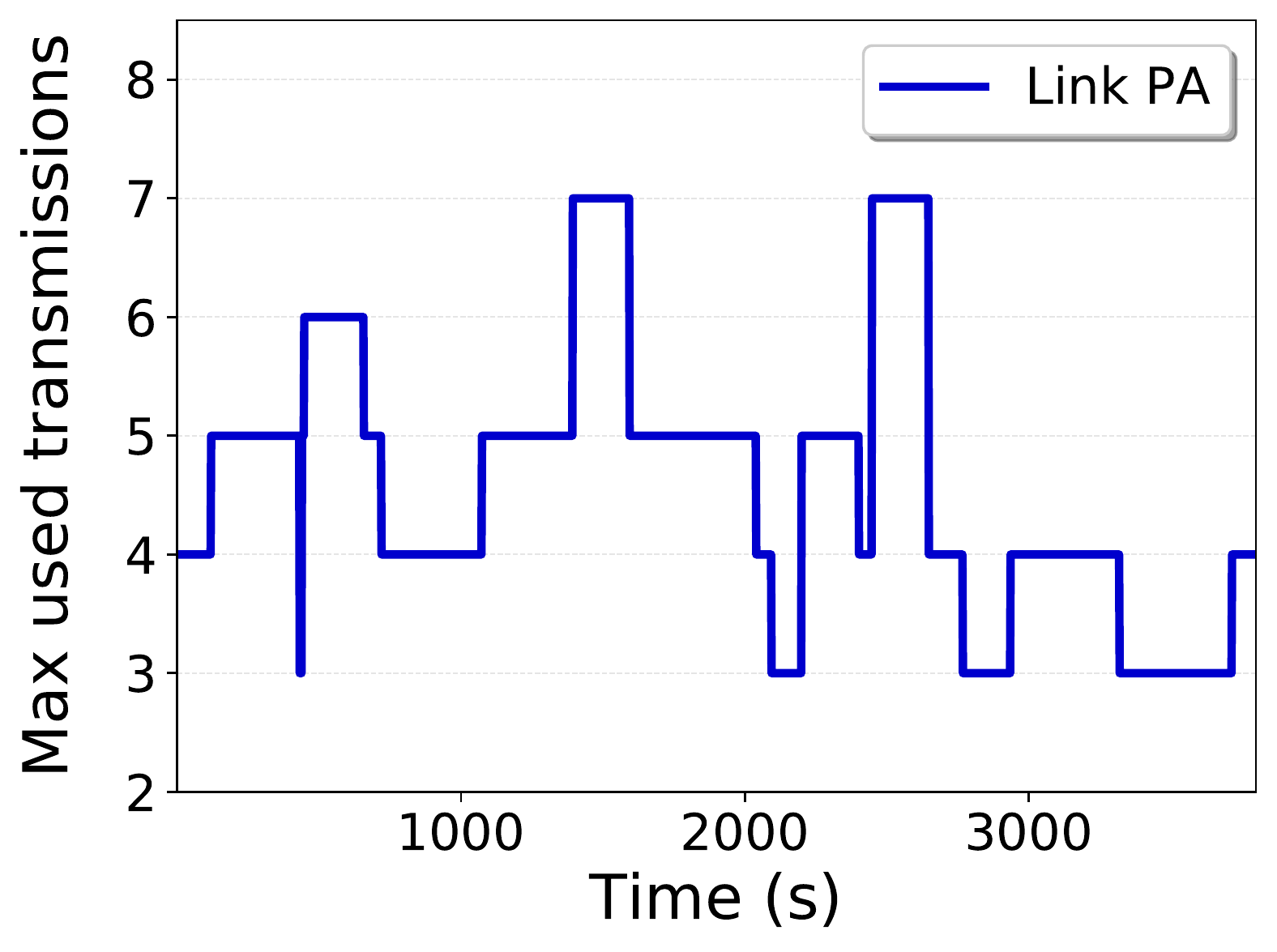}
        \caption[]%
        {{\small \pn max tx.}}    
        \label{fig:ncTx}
    \end{subfigure}
    \caption[ The average and standard deviation of critical parameters ]
    {\small Performance of \pn and \textsf{Sched} on flow 10 over time.} 
    \label{fig:realAdaptability}
\end{figure}

\section{Discussion}
\label{sec:discussion}

\subsection{Deployment}
Wireless networks that support IIoT applications require careful planning and deployment.
The deployment process usually involves profiling the quality of links and the interference on all 16 channels.
The collected statistics are used to ensure that there are redundant routes that connect each node to the base station whose link quality exceeds the \pmin threshold of the TLR model.
Consistent with Emerson's guide for deploying WirelessHART networks, the value of \pmin is usually set to 60\% -- 70\%.
Additionally, the channels that have consistently poor reliability are usually blacklisted~\cite{channel_blacklisting}.

\subsection{Handling Network Dynamics}
\pn's design focuses on supporting the communication needs of IIoT applications with long-running real-time flows.
The network manager uses the current set of flows to build a policy that meets a flow's end-to-end reliability as long as the link quality exceeds \pmin.
This approach makes it feasible to run the same policy for prolonged periods of time without modification.
However, the industrial environment may change and lead to node and link failures.
The primary mechanism used by \pn to adapt to topology changes and node failures is to synthesize new policies.
However, the frequency with which new policies need to be synthesized can be reduced by integrating \pn with multi-hop routing techniques (e.g.,~\cite{dcqs-journal,saifullah2015schedulability}) to allow \pn to tolerate some link or node failures without having to reconstruct policies. 
In the following, we describe a centralized management and control plane that can detect and adapt to node failures and topology changes using an approach similar to WirelessHART.

The network manager uses three different types of specialized flows to implement the control plane.
These specialized flows are periodic, but unlike regular flows, they have a higher priority the regular flows and require different mechanisms to allocate their slots.
A \emph{dissemination flow} is used to disseminate policies to all nodes after its synthesis.
A dissemination flow reserves a single slot during which the base station floods a packet to all the nodes. 
An efficient approach to handle this type of communication is to use GLOSSY floods~\cite{ferrari2011efficient}.
A \emph{join flow} reserves a single slot during which all nodes wait to listen to a fixed channel for nodes to request joining the network.
A node wanting to join will use CSMA techniques to broadcasts its request to join the network.
All nodes receiving the request will forward it to the base station using a \emph{report flow}.
A \emph{report flow} is used to inform the network manager about the status of nodes and links.
A report flow is set up from each leaf node in the upstream graph to the base station.
As the packet of a report flow is forwarded, nodes along the paths may append to the payload node and link health information to be delivered to the network manager. 
Each node along the path is provided a fixed number of bytes that they may use.
As described next, node and link failure reports are prioritized while the remaining space is used for the quality of links that are not currently part of the routing tree.
Node and link health reports can also be piggy-backed onto periodic traffic to improve network agility.

Each node collects statistics about the bursts of packet losses within a window of slots to assess the quality of links currently in use. 
As described in Section \ref{sec:testbed-resuls}, a node uses this information to fit a Binomial distribution whose chance of success $P_{s}$ is sufficiently large to account for the observed burst.  
On the one hand, if $P_{s} < \pmin$, the TLR assumptions are violated, and the network manager is notified immediately about the link failure.
Accordingly, this information is included in the report flow.
On the other hand, if $P_{s} \ge \pmin$, the TLR assumptions are not violated.
This information is not as urgent to the network's operation and is included in a control packet only if there is room available.
This approach can also be extended to collect statistics of links that are currently not in use.
However, a node should only use the reception of the data packets (ignoring the pull requests) to estimate the one-way link quality between itself and the packet's sender.
This information should be included infrequently in control packets to allow the network manager to update the upstream and downstream graphs.

\pn allows nodes to join or leave the network dynamically.
A node wanting to join the network first listens to a fixed channel until it receives the packet of a report flow that allows it to synchronize with the network and learn the join flows' parameters. 
When the join flow is released, the node broadcasts a request to join the network, which is routed to the network manager using the next available report flow.
Upon receiving a join request, the network manager updates the upstream and downstream graphs to include the new node and starts the synthesis of a new policy.
When the synthesis is complete, the policy is disseminated to all nodes, including the new node. 
A node leaving the network uses a report flow to send its request to the base station.

\subsection{Handling Other Traffic}
\pn is optimized for improving the performance of real-time flows that are expected to carry the bulk of the traffic in IIoT applications.
However, other types of traffic may also exist.
For example, IIoT applications may benefit from supporting even-triggered emergency communication in response to unsafe situations or failures. 
\pn can support emergency communication using techniques proposed in \cite{emergency_alarms}:
each slot is modified such that emergency traffic is transmitted at the beginning of the slot. 
In contrast, regular traffic is transmitted after a short delay. 
Other examples of traffic include aperiodic communication.
The simplest solution for handling these transmissions is to dedicate slots for their transmission periodically. 
Transmissions during these slots are done using typical CSMA/CA techniques. This approach reserves a portion of the bandwidth for other types of traffic. Moreover, it is straightforward to account for these additional slots in our analysis.

\subsection{High Data-Rates and Energy Efficiency}
\pn is designed for IIoT applications that require high data rates and usually use grid-powered nodes.
Our simulation and testbed results are performed using the IEEE 802.15.4 physical layer.
The standard supports a maximum packet size of 128 bytes and a maximum data rate of 250 kbps.
Under these settings, the maximum real-time capacity from the testbed experiments equates to only approximately 39 Kbps.
While this data rate may be able to meet the real-time and reliability requirements of some high data rate sensors such as torque and temperature sensors with update rates on the order of 10 -- 500 ms \cite{highDataRateExamples}, it is unlikely to be sufficient for microphones and cameras.
For this type of application, \pn can be used unmodified with IEEE 802.15.4a UWB.
IEEE 802.15.4a provides a significantly higher data rate of 27.24 Mbps and some UWB radios (e.g., DWM1001) support packets as large as 1024 bytes.
In future work, we will explore using other physical layers to extend \pn's applicability further.

One of the limitations of \pn is its potentially high energy usage.
Indeed, an entry in the scheduling matrix using the \textsf{Sched} protocol will involve at most two nodes using their radios.
In contrast, a \pn \pullop may involve as many as five nodes for a service list of size four.
The nodes in the service list size must turn on their radio for a short duration to determine whether the coordinator will request information from them.
As a result, the real-time capacity and response time improvements offered by \pn come with the cost of additional energy consumption.
However, industrial applications that require higher data rates usually use grid-powered sensors.
Additionally, many applications use a powered backbone to carry high data-rate traffic while including some battery-powered nodes (e.g., \cite{burns2018airtight,agarwal2011duty,chipara2010reliable}).
\pn can be configured in such scenarios to use service lists size of size one on battery-operated devices to achieve the same energy consumption level as existing scheduling approaches.
Moreover, \pn could use larger service list sizes for the powered nodes to provide higher throughput and lower latency.


\section{Related Work}
\label{sec:related-work}


Due to its predictability, TDMA has become the de facto standard for IIoT systems.
There are many scheduling algorithms to construct TDMA schedules (e.g., \cite{deadlineScheduling,conflictAwareLeastLaxityFirst,Pottner2014,tschSchedulerReviewerWanted,TELESHERMETO201784}).
However, a common weakness of TDMA protocols is their lack of adaptability to network dynamics. 
To address this limitation, various techniques to handle variations in link quality, topology changes, and fluctuations in workloads have been proposed (e.g. \cite{bmax, rewimo, rdpas}).
In this paper, we focus on handling variations in link quality, as they are common in harsh industrial environments~\cite{candell2017industrial, emInterference2}.
Our work is complementary to and may be integrated with techniques designed to handle other types of network dynamics.

Researchers have considered various approaches to combining CSMA and TDMA into hybrid protocols, ultimately sacrificing either flexibility or predictability.
A common approach to combine CSMA and TDMA is to have each protocol run in different slots.
This approach is adopted in industrial standards such as WirelessHART \cite{wirelesshart} and ISA100.11a \cite{isa100}. 
However, predictable performance cannot be provided for the traffic carried in CSMA slots.
Another alternative is to dynamically reuse slots (e.g., \cite{rhee2008z}) or transmit high-priority traffic (e.g., \cite{li2015incorporating}) by selecting primary and secondary slot owners.  
In this approach, slot owners are given preference to transmit and send data using a short initial back-off.
If a slot owner does not have any data to transmit, other nodes may contend for its use after some additional delay.
A generalization of this scheme is prioritized MACs that divide a slot into sub-slots to provide different levels of priority \cite{shen2013prioritymac}. 
However, none of these protocols provide analytical bounds on their performance.
In contrast to the above approaches that involve carrier sensing, our policies rely on receiver-initiated polling and the local state of nodes to adapt. 
We expect policies to be less brittle in practice than solutions that use carrier sense as they do not require tight time synchronization for adaptation.

Several distributed protocols for constructing TSCH schedules that support best-effort  \cite{duquennoy2015orchestra, tinka2010decentralized} and real-time \cite{FD-PaS} traffic have been proposed.
Our work is complementary since these works focus primarily on handling workload changes while we focus on adapting to variations in link quality over short time scales.
These protocols can't adapt at the time scales required to handle link quality variations due to their communication overheads.
Our approach combines offline policy synthesis with local adaptation performed at run-time.
This approach can effectively handle changes over short time scales as the adaptation process is local and lightweight.


Transient link failures are common in wireless networks~\cite{betafactor, temporalprops_cerpa} and even more prevalent in harsh industrial environments~\cite{candell2017industrial, emInterference2}.
The state-of-the-art is to  schedule a fixed number of retransmissions for each link, potentially using different channels.
Little consideration is usually given to selecting the correct number of retransmissions based on link quality.
Recently, some work has been done to tune the number of retransmissions based on the burstiness of links~\cite{bmax, yang2017reliable}.
While this is a step in the right direction, the fundamental problem is that links are treated in isolation and provisioned to handle worst-case behavior in a fixed manner.
As a result, retransmissions cannot be redistributed across links as needed at run-time.
A notable exception is our prior work \cite{brummet2018flexible}, which proposes a technique to share transmissions among the links of a flow at run-time.
However, this technique's performance benefits are sensitive to the length of flows, with the most benefit occurring in large multi-hop networks uncommon in practice. 
Our experiments show that this approach is only effective when flows are routed through the base station and not for the more common data collection and dissemination scenarios.
By enabling entries to be \emph{shared across flows}, we can significantly reduce the number of slots needed by flows to meet their end-to-end reliability, resulting in significant performance improvements.

\section{Conclusions}
\label{sec:conclusions}

\pn is a practical and effective solution for IIoT applications that require predictable, real-time, and reliable communication in dynamic wireless environments.
We leverage the stability of IIoT workloads and the improving resources of wireless nodes to build a solution that combines offline policy construction and run-time adaptation.
A \pn policy assigns a \pn operation to each slot and channel, which specifies a coordinator that will arbitrate channel access and a list of flows that may be serviced.
At run-time, the coordinator dynamically executes the flows in the service list from which it has not received a packet.
The advantage of \pn is that nodes can locally reallocate the retransmissions of flows in response to variations in link quality and, as a result, provide higher performance than scheduling approaches.

The synthesis of policies required us to address two key challenges: handling the state explosion problem and providing predictable performance as the quality of links varies.
We developed a practical approach to synthesize policies iteratively.
In each slot, the \builder employs an ILP program to determine the \pn operations that will be performed in the current slot.
Based on the selected operations, the \evaluator determines a lower-bound on the end-to-end reliability of each flow to determine if it met its target end-to-end reliability. 
A key advantage of \pn is that it provides guarantees when slots are shared under a realistic model of wireless communication.
Specifically, we guarantee that a constructed \pn policy will meet a user-specified reliability and deadline constraint for each flow as long as the quality of all (used) links exceeds a minimum link quality.


We have extensively evaluated the performance of \pn through both simulations and testbed experiments.
Our results indicate that due to their increased agility, \pn policies can significantly improve real-time capacity (median 50\% -- 142\%) and reduce worst-case response time (median 27\% -- 70\%) while meeting a specified end-to-end reliability.
These trends hold across typical IIoT workloads, including data collection, data dissemination, and route through the base station.
Additionally, we showed empirically that our theoretical guarantees of real-time performance and reliability hold even in the presence of significant interference.

\section*{Acknowledgement}
This work is funded in part by NSF under CNS-1750155.

\bibliographystyle{ACM-Reference-Format}
\bibliography{main}

\newpage
\section{Proof of Theorem \ref{th:monotonic}}
\label{sec:proof}

In this section, we prove Theorem \ref{th:monotonic}.
Before proving the theorem though, we will introduce some definitions and lemmas.
We will illustrate their use using a single-hop scenario with two flows $F_0$ and $F_1$ ($\flows = \{F_0, F_1\}$) that relay data to the base station (see Figure \ref{fig:topology-example}).
In the following, we let $N = |\flows|$.
We consider the execution of two generic instances -- $J_0$ and $J_1$ -- of these flows .

Under the considered example, the state of the system is represented as a vector where the $i^{th}$ entry indicates whether the currently released instance of flow $i$  was received successfully (\success) or not (\failure) by the base station.
Accordingly, the states of our example are \texttt{FF}, \texttt{SF}, \texttt{FS}, and \texttt{SS}.
There are four possible \pullops that may be performed in a slot $t$: \pull{A}{$J_0$}, \pull{A}{$J_1$}, \pull{A}{$J_0, J_1$}, and \pull{A}{$J_1, J_0$}.
Note that the \builder described in Section \ref{sec:single-hop} would never assign \pull{A}{$J_1, J_0$} as it strictly enforces prioritization among flows.
Nevertheless, the theorem and lemmas presented in this section apply to a broader class of builders that allow priority inversions and may assign \pull{A}{$J_1, J_0$}. 
For each \pullop, we construct an associated transition matrix according to Algorithm \ref{algo:matrix}:
\begin{itemize}
    \item \tmat{0} -- the transition matrix associated with \pull{A}{$J_0$}
    \item \tmat{1} -- the transition matrix associated with \pull{A}{$J_1$}
    \item \tmat{0,1} -- the transition matrix associated with \pull{A}{$J_0, J_1$}
    \item \tmat{1,0} -- the transition matrix associated with \pull{A}{$J_1, J_0$}
\end{itemize}
\noindent Each of the matrices for the considered example are included in Figure \ref{fig:example-transition-matrices}.
Note that all of the transition matrices depend on the quality of the links \linkqt{0} and \linkqt{1} at time $t$.

\begin{figure}[h!]
\tiny
\centering
\begin{subfigure}[b]{0.45\textwidth}
\begin{displaymath}
    \begin{blockarray}{ccccc}
    \texttt{FF} & \texttt{SF} & \texttt{FS} & \texttt{SS} \\
    \begin{block}{(cccc)c}
      1 - \linkqt{0} & \linkqt{0} & 0          & 0 & \texttt{FF} \\
      0           & 1       & 0          & 0 & \texttt{SF} \\
      0           & 0       & 1 - LQ_{0}(t) & LQ_{0}(t) & \texttt{FS} \\
      0           & 0       & 0          & 1 & \texttt{SS} \\
    \end{block}
\end{blockarray}
\end{displaymath}
\caption{$\mathcal{M}_{0}$}
\label{fig:proof:m0}
\end{subfigure}
\quad
\begin{subfigure}[b]{0.45\textwidth}
\begin{displaymath}
    \begin{blockarray}{ccccc}
    \texttt{FF} & \texttt{SF} & \texttt{FS} & \texttt{SS} \\
    \begin{block}{(cccc)c}
      1 - \linkqt{1} & 0   & \linkqt{1}          & 0 & \texttt{FF} \\
      0           & 1 - \linkqt{1} & 0          & \linkqt{1} & \texttt{SF} \\
      0           & 0         & 1          & 0 & \texttt{FS} \\
      0           & 0         & 0          & 1 & \texttt{SS} \\
    \end{block}
\end{blockarray}
\end{displaymath}
\caption{$\mathcal{M}_{1}$}
\label{fig:proof:m1}
\end{subfigure}
\quad
\begin{subfigure}[b]{0.45\textwidth}
\begin{displaymath}
    \begin{blockarray}{ccccc}
    \texttt{FF} & \texttt{SF} & \texttt{FS} & \texttt{SS} \\
    \begin{block}{(cccc)c}
      1 - \linkqt{0} & \linkqt{0}     & 0          & 0 & \texttt{FF} \\
      0           & 1 - \linkqt{1} & 0          & \linkqt{1} & \texttt{SF} \\
      0           & 0           & 1 - \linkqt{0} & \linkqt{0} & \texttt{FS} \\
      0           & 0           & 0          & 1 & \texttt{SS} \\
    \end{block}
\end{blockarray}
\end{displaymath}
\caption{$\mathcal{M}_{0,1}$}
\label{fig:proof:m01}
\end{subfigure}
\quad
\begin{subfigure}[b]{0.45\textwidth}
\begin{displaymath}
    \begin{blockarray}{ccccc}
    \texttt{FF} & \texttt{SF} & \texttt{FS} & \texttt{SS} \\
    \begin{block}{(cccc)c}
      1 - \linkqt{1} & 0     & \linkqt{1}          & 0 & \texttt{FF} \\
      0           & 1 - \linkqt{1} & 0          & \linkqt{1} & \texttt{SF} \\
      0           & 0           & 1 - \linkqt{0} & \linkqt{0} & \texttt{FS} \\
      0           & 0           & 0          & 1 & \texttt{SS} \\
    \end{block}
\end{blockarray}
\end{displaymath}
\caption{$\mathcal{M}_{1,0}$}
\label{fig:proof:m10}
\end{subfigure}
\caption{Possible transition matrices when two flows are active}
\label{fig:example-transition-matrices}
\end{figure}

According to Equation \ref{eq:state-transition}, the network state after executing $t$ \pullops is:
\begin{displaymath}
    \pstate{t} = s_0^{T} \tmat{srv(0)} \tmat{srv(1)} \cdots \tmat{srv(t)} 
\end{displaymath}
\noindent where $s_0$ is an initial state and $\tmat{srv(t')}$ is the transition matrix associated with the \pullop performed in slot $t'$, $0 \leq t' \leq t$, and in our example is equal to either $\tmat{0}$, $\tmat{1}$, $\tmat{0, 1}$ or $\tmat{1, 0}$. 
This equation describes the state evolution of a Markov Chain (MC) over time.
Note that unlike traditional MCs, the transition matrix of this MC is parametric and the value of those parameters change over time.

The transition matrices have a special structure which we will characterize next. 
We impose a partial order on the states that reflects how the network changes its state in response to a successful \pullops (see procedure \onSuccess{} of Algorithm \ref{algo:matrix}).

\begin{definition}
We say the states $s_1$ and $s_2$ are partially ordered, $s_1 \preceq s_2$, if and only if the following is true:
\begin{displaymath}
 s_1[k] = \success \Rightarrow s_2[k] = \success~~~~~\forall k \in [0, N) 
\end{displaymath}
\end{definition}

\noindent
The partial order induced by $\preceq$ in our example is: 
$\texttt{FF} \preceq \texttt{SF} \preceq \texttt{SS}$ and $\texttt{FF} \preceq \texttt{FS} \preceq \texttt{SS}$.
The states \texttt{SF} and \texttt{FS} are not comparable. 
Relating $\preceq$ to the \onSuccess{} method, the ordering $\texttt{FF} \preceq \texttt{SF}$ implies that there is a service list $srv$ (e.g., $srv=\{J_0\}$ or $srv=\{J_0, J_1\}$) such that \onSuccess{\texttt{FF}, $J_0$} = \texttt{SF}.
We make two observations of this partial order:

\begin{lemma}
$s_1 \preceq \onSuccess{$s_1, J_k$}$ for all instances $J_k$.
\label{lemma:next-order}
\end{lemma}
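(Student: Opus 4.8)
The plan is to prove this by unwinding the two definitions involved --- the \onSuccess{} procedure of Algorithm~\ref{algo:matrix} and the partial order $\preceq$ --- and then checking the defining implication coordinate by coordinate. Write $s_2 = \onSuccess{s_1, J_k}$ and let $k$ denote the flow id of $J_k$. By the specification of the procedure, $s_2$ agrees with $s_1$ in every coordinate except the $k$-th, where $s_2[k] = \success$; formally $s_2[j] = s_1[j]$ for all $j \neq k$ and $s_2[k] = \success$. To establish $s_1 \preceq s_2$ it then suffices, by definition, to verify that $s_1[j] = \success \Rightarrow s_2[j] = \success$ for every index $j \in [0, N)$.

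First I would dispatch the coordinate $j = k$: here the consequent $s_2[k] = \success$ holds unconditionally, so the implication is vacuously true no matter the value of $s_1[k]$ (this also covers the degenerate case where $J_k$ was already delivered in $s_1$, so that $s_2 = s_1$ and the conclusion reduces to reflexivity of $\preceq$). Next I would handle every coordinate $j \neq k$: since $s_2[j] = s_1[j]$, the hypothesis $s_1[j] = \success$ gives $s_2[j] = \success$ immediately. These two cases are exhaustive, so the implication holds at every coordinate, which is precisely the statement $s_1 \preceq s_2$.

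The argument is entirely definitional, so there is no substantive obstacle; the only points that warrant care are (i) keeping the index conventions of Algorithm~\ref{algo:matrix} straight --- the updated entry is indexed by the \emph{flow id} of $J_k$, not by the position of $J_k$ in the service list --- and (ii) stating the $j = k$ case as a genuine vacuous implication rather than silently assuming $s_1[k] = \failure$. I would keep the write-up to two or three sentences mirroring the case split above.
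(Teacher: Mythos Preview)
Your proposal is correct and follows essentially the same definitional approach as the paper: both unwind \texttt{onSuccess} and the partial order $\preceq$ and verify the implication directly. The paper organizes the case split on the value of $s_1[k]$ (\success\ versus \failure) rather than on the coordinate index $j$, but this is a cosmetic difference with no substantive gap between the two.
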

\begin{proof}
\onSuccess can change only the $k^{th}$ entry in $s_1$ to \texttt{S}.
If $s_1[k] =$ \texttt{S} then the partial order holds as the state will not change (i.e. $s_1 = \onSuccess(s_1,J_k)$).
If $s_1[k] =$ \texttt{F}, then the $k^{th}$ entry in $s_1$ will change to \texttt{S} and all other entries will stay the same.
This also does not violate the partial order.
\end{proof}

\begin{lemma}
If $s_1 \preceq s_2$, then $\onSuccess{$s_1, {J_k}$} \preceq \onSuccess{$s_2, J_k$}$, for all instances $J_{k}$. 
\label{lemma:next-state}
\end{lemma}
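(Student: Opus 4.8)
The plan is to unfold the definition of \onSuccess{} and verify the defining condition of $\preceq$ one coordinate at a time. Recall from Algorithm~\ref{algo:matrix} that \onSuccess{state, $i$} returns the state that agrees with \emph{state} on every coordinate $j \neq i$ and has \success in coordinate $i$. So write $t_1 = \onSuccess{$s_1, J_k$}$ and $t_2 = \onSuccess{$s_2, J_k$}$; the goal is to show $t_1[j] = \success \Rightarrow t_2[j] = \success$ for every $j \in [0, N)$, given the hypothesis $s_1 \preceq s_2$.

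First I would dispatch the modified coordinate $j = k$: by construction $t_2[k] = \success$, so the implication holds vacuously, irrespective of the value of $t_1[k]$ and without using the hypothesis. Next, for each untouched coordinate $j \neq k$, I would use the fact that \onSuccess{} leaves those entries unchanged, so $t_1[j] = s_1[j]$ and $t_2[j] = s_2[j]$. If $t_1[j] = \success$, then $s_1[j] = \success$, and applying the hypothesis $s_1 \preceq s_2$ at coordinate $j$ gives $s_2[j] = \success$, hence $t_2[j] = \success$. Since these two cases exhaust all coordinates, we conclude $t_1 \preceq t_2$, which is exactly the claim.

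There is essentially no real obstacle here — the lemma is a routine monotonicity statement. The only point requiring a little care is recognizing that the argument must split on whether the coordinate under consideration is the one being overwritten: the hypothesis $s_1 \preceq s_2$ is neither needed nor available at coordinate $k$ (both outputs are forced to \success there), and it is used only at the unchanged coordinates. This lemma, together with Lemma~\ref{lemma:next-order}, will be the building block for showing that each transition matrix $\tmat{srv(t)}$ respects the partial order $\preceq$ on states, which is the structural fact driving the monotonicity argument behind Theorem~\ref{th:monotonic}.
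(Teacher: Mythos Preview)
Your proof is correct and follows essentially the same coordinate-wise verification as the paper's own proof. The paper organizes the case split by the four possible value pairs $(s_1[k], s_2[k])$ rather than by whether the coordinate under inspection equals $k$, but the underlying argument is identical and your decomposition is, if anything, slightly cleaner.
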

\begin{proof}
\onSuccess can change only the $k^{th}$ entry of a state so there are four possibilities.
(1) If $s_1[k] =$ \texttt{S} and $s_2[k] =$ \texttt{S} then $s_1 = \onSuccess(s_1, J_k)$ and $s_2 = \onSuccess(s_2, J_k)$.
Therefore, $\onSuccess{$s_1, {J_k}$} \preceq \onSuccess{$s_2, J_k$}$.
(2) If $s_1[k] =$ \texttt{F} and $s_2[k] =$ \texttt{F} the $k^{th}$ entry of $s_1$ and $s_2$ will change to \texttt{S} and all other entries will stay the same.  
Therefore, $\onSuccess{$s_1, {J_k}$} \preceq \onSuccess{$s_2, J_k$}$.
(3) If $s_1[k] =$ \texttt{S} and $s_2[k] =$ \texttt{F} the assumed partial ordering is violated and therefore the lemma is not violated.
(4) If $s_1[k] =$ \texttt{F} and $s_2[k] =$ \texttt{S} then the  $k^{th}$ entry of $s_1$ will change to \texttt{S} with all other entries staying the same and $s_2 = \onSuccess(s_2, J_k)$.
Since $s_2[k] =$ \texttt{S}, $\onSuccess{$s_1, {J_k}$} \preceq \onSuccess{$s_2, J_k$}$.
\end{proof}

We will use the notation $\mathcal{M}_{srv(t)}[i,j]$ to refer to the $i,j$ element of the matrix and $\mathcal{M}_{srv(t)}[i, :]$ to refer to the i$^{th}$ row.
The values of $\mathcal{M}_{srv(t)}[i,:]$ include the likelihood of transitioning from $s_i$ to another state in \allstates.
The values of a row follow one of two patterns:
(1) 
If the current state is $s_i$, $J_k$ is an instance in the current service list to be executed such that $s_i[k] =$ \texttt{F}, and $s_j = \onSuccess(s_i, J_k)$, then all entries in
$\mathcal{M}_{srv(t)}[i,:]$ are zero except for $\mathcal{M}_{srv(t)}[i,i] = 1 - LQ_k(t)$ and $\mathcal{M}_{srv(t)}[i, j] = LQ_k(t)$.
(2) Otherwise if the current state is $s_i$ there is only one non-zero entry in $\mathcal{M}_{srv(t)}[i,:]$ and it is $\mathcal{M}_{srv(t)}[i,i] = 1$.
Based on these observations, we can rewrite $\mathcal{M}_{srv(t)}$ as:

\begin{equation}
    \mathcal{M}_{srv(t)} = \boldsymbol{I} + LQ_0(t) E_{0} + LQ_1(t) E_{1} + \dots + LQ_{N}(t)E_{N} = \boldsymbol{I} + \sum_{i=0}^{N}{LQ_i(t) E_i}
\end{equation}

\noindent where $\boldsymbol{I}$ is the identity matrix and matrix $E_i(t)$ has the following properties:
(1) $E_i(t)$ is upper-triangular,
(2) the entries of $E_i(t)$ are in $\{-1, 0, 1\}$ and
(3) in each row, $E_i(t)[i,:]$, there is either exactly one +1 entry off the diagonal and one -1 entry on the diagonal or all the entries of the row are zero.
As an example, the transition matrix $\tmat{0,1}$ may be rewritten as:

\begin{align*}
    \tmat{0,1} &= \boldsymbol{I} + \linkq{0}(t)E_0(t) + \linkq{1}(t)E_1(t) \\
               &=
               \begin{pmatrix}
 1 & 0 & 0 & 0 \\
 0 & 1 & 0 & 0 \\
 0 & 0 & 1 & 0 \\
 0 & 0 & 0 & 1
 \end{pmatrix}
 +
 \linkq{0}(t)
 \begin{pmatrix}
 -1 & 1  &  0 & 0 \\
  0 & 0  &  0 & 0 \\
  0 & 0  & -1 & 1 \\
  0 & 0  &  0 & 0 \\
 \end{pmatrix}
 +
 \linkq{1}(t)
 \begin{pmatrix}
 0  & 0  & 0  & 0  \\
 0  & -1 & 0  & 1  \\
 0  & 0  & 0  & 0  \\
 0  & 0  & 0  & 0  \\
 \end{pmatrix}
\end{align*}

We now create the following definition to relate the partial ordering to the actual state probabilities and make the following two observations.

\begin{definition}
A vector $\boldsymbol{f}$ given the partial order induced by $\preceq$, if $s_i \preceq s_j$ implies $\boldsymbol{f}[i] \le \boldsymbol{f}[j]$.
\label{definition:increasing-vector}
\end{definition}

\begin{lemma}
If $\boldsymbol{f}^T$ is an increasing vector and $\mathcal{M}_{srv(t)}$ is a transition matrix , then $\boldsymbol{g}^T = \boldsymbol{f}^T \mathcal{M}_{srv(t)}^T$ is also an increasing vector.
\label{lemma:increasing}
\end{lemma}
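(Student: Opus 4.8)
The plan is to first rewrite the conclusion in a more workable form. Since $\boldsymbol{g}^T=\boldsymbol{f}^T\mathcal{M}_{srv(t)}^T$ is equivalent to $\boldsymbol{g}=\mathcal{M}_{srv(t)}\boldsymbol{f}$, its $i$-th coordinate is $\boldsymbol{g}[i]=\sum_k \mathcal{M}_{srv(t)}[i,k]\,\boldsymbol{f}[k]$, i.e.\ the expectation of $\boldsymbol{f}$ one step after the chain sits in state $s_i$. Fixing an arbitrary comparable pair $s_i\preceq s_j$, it then suffices to prove $\boldsymbol{g}[i]\le\boldsymbol{g}[j]$. Throughout I would lean on the explicit row structure of $\mathcal{M}_{srv(t)}$ recalled just above: row $i$ either (i) puts mass $1-\linkqt{a}$ on $s_i$ and mass $\linkqt{a}$ on $\mathrm{onSuccess}(s_i,J_a)$, where $J_a$ is the \emph{first} instance of $srv(t)$ not yet received in $s_i$ (that is, $s_i[a]=\failure$), or (ii) puts all its mass on $s_i$, which happens exactly when every instance of $srv(t)$ is already a \success{} in $s_i$. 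Since $0\le\linkqt{a}\le 1$, in case (i) $\boldsymbol{g}[i]$ is a convex combination of $\boldsymbol{f}[i]$ and $\boldsymbol{f}[\mathrm{onSuccess}(s_i,J_a)]$.

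The one structural consequence I would draw from $s_i\preceq s_j$ is its contrapositive: every instance not yet received in $s_j$ is also not yet received in $s_i$. Using this I would case-split on which instance, if any, the coordinator pulls from each of $s_i$ and $s_j$. \emph{Case A (neither pulls anything):} $\boldsymbol{g}[i]=\boldsymbol{f}[i]\le\boldsymbol{f}[j]=\boldsymbol{g}[j]$ directly from Definition~\ref{definition:increasing-vector}. The combination ``$s_i$ pulls nothing while $s_j$ pulls something'' cannot occur, since if all of $srv(t)$ is already a \success{} in $s_i$ then, by the contrapositive above, the same holds in $s_j$. \emph{Case B (both pull the same instance $J_a$):} writing $i'=\mathrm{onSuccess}(s_i,J_a)$ and $j'=\mathrm{onSuccess}(s_j,J_a)$, Lemma~\ref{lemma:next-state} gives $s_{i'}\preceq s_{j'}$, hence $\boldsymbol{f}[i']\le\boldsymbol{f}[j']$; since also $\boldsymbol{f}[i]\le\boldsymbol{f}[j]$, the convex combinations $\boldsymbol{g}[i]=(1-\linkqt{a})\boldsymbol{f}[i]+\linkqt{a}\boldsymbol{f}[i']$ and $\boldsymbol{g}[j]=(1-\linkqt{a})\boldsymbol{f}[j]+\linkqt{a}\boldsymbol{f}[j']$ are ordered term by term.

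\emph{Case C (the remaining case: $s_i$ pulls $J_a$, while $s_j$ either pulls nothing or pulls some $J_b\neq J_a$):} this is the delicate one. Because the unreceived instances of $s_j$ form a subset of those of $s_i$ and $J_a$ is the \emph{first} unreceived instance of $s_i$ in $srv(t)$, here $J_a$ must already be a \success{} in $s_j$ --- otherwise the coordinator would pull $J_a$ from $s_j$ too, contradicting that it pulls $J_b\neq J_a$ or nothing. I would then check that $\mathrm{onSuccess}(s_i,J_a)\preceq s_j$: this state differs from $s_i$ only by setting coordinate $a$ to \success{}, and $s_j$ already has coordinate $a$ equal to \success{} and dominates $s_i$ on every other coordinate. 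Hence both $\boldsymbol{f}[i]$ and $\boldsymbol{f}[\mathrm{onSuccess}(s_i,J_a)]$ are $\le\boldsymbol{f}[j]$, so the convex combination $\boldsymbol{g}[i]$ is $\le\boldsymbol{f}[j]$. On the other side, $\boldsymbol{g}[j]$ is either $\boldsymbol{f}[j]$ itself (when $s_j$ pulls nothing) or the convex combination $(1-\linkqt{b})\boldsymbol{f}[j]+\linkqt{b}\boldsymbol{f}[\mathrm{onSuccess}(s_j,J_b)]$, and Lemma~\ref{lemma:next-order} gives $\boldsymbol{f}[j]\le\boldsymbol{f}[\mathrm{onSuccess}(s_j,J_b)]$, so $\boldsymbol{f}[j]\le\boldsymbol{g}[j]$. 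Chaining yields $\boldsymbol{g}[i]\le\boldsymbol{f}[j]\le\boldsymbol{g}[j]$, completing Case C and the lemma.

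The step I expect to be the main obstacle is Case C: arguing cleanly that the instance pulled from the smaller state $s_i$ has already been received in the larger state $s_j$, so that flipping that coordinate to \success{} does not push $s_i$'s successor past $s_j$ in the $\preceq$ order; everything else is convexity bookkeeping together with the monotonicity facts of Lemmas~\ref{lemma:next-order} and~\ref{lemma:next-state}. I would also remark that the argument never uses flow priorities or any particular ordering of $srv(t)$ --- only the deterministic ``first unreceived instance in the list'' pull rule --- so it applies to the broader class of builders (including those permitting priority inversions) mentioned earlier in this section. Equivalently, the whole proof can be repackaged as a one-step coupling of the chains started at $s_i$ and at $s_j$ that keeps the coupled states $\preceq$-ordered, but the case analysis above is self-contained.
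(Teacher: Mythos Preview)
Your proposal is correct, and in fact more careful than the paper's own argument. Both proofs follow the same overall strategy --- fix $s_i \preceq s_j$ and compare $\boldsymbol{g}[i]$ with $\boldsymbol{g}[j]$ using the explicit two-point form of each row of $\mathcal{M}_{srv(t)}$ --- but the paper effectively treats only your Case~B: it fixes a single instance $J_k$, writes $\boldsymbol{g}^T[i]=(1-\linkqt{k})\boldsymbol{f}^T[i]+\linkqt{k}\boldsymbol{f}^T[a]$ and $\boldsymbol{g}^T[j]=(1-\linkqt{k})\boldsymbol{f}^T[j]+\linkqt{k}\boldsymbol{f}^T[b]$ as though both states pull the same $J_k$, and then compares the two convex combinations term by term via Lemma~\ref{lemma:next-state}. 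Your Case~C --- where $s_i$ pulls $J_a$ while $s_j$ pulls a different $J_b$ or nothing at all --- is exactly the situation the paper's proof glosses over, and your treatment of it (arguing that $s_j[a]=\success$ and hence $\mathrm{onSuccess}(s_i,J_a)\preceq s_j$, then sandwiching $\boldsymbol{g}[i]\le\boldsymbol{f}[j]\le\boldsymbol{g}[j]$ via Lemma~\ref{lemma:next-order}) is the missing ingredient and is correct. Your observation that the argument depends only on the deterministic ``first unreceived instance'' rule and not on any particular priority ordering is also accurate and slightly sharpens what the paper states.
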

\begin{proof}
Consider an arbitrary instance $J_k$ and let $s_i \preceq s_j$, $s_a = \onSuccess(s_i, J_k)$, and $s_b = \onSuccess(s_j, J_k)$. 
Consider the $i^{th}$ and $j^{th}$ entries of $\boldsymbol{g}^T$:

\begin{align*}
    \boldsymbol{g}^T[i] &=
    \boldsymbol{f}^T[i](1 - LQ_k(t)) + \boldsymbol{f}^T[a]LQ_k(t) \\
    \boldsymbol{g}^T[j] &= \boldsymbol{f}^T[j](1 - LQ_k(t)) + \boldsymbol{f}^T[b]LQ_k(t)
\end{align*}

\noindent
Notice that $\boldsymbol{f}^T[i] \leq \boldsymbol{f}^T[j]$ by definition since $s_i \preceq s_j$ and $\boldsymbol{f}^T[a] \leq \boldsymbol{f}^T[b]$ by Lemma \ref{lemma:next-state}.
As a result, we can conclude $\boldsymbol{g}^T[i] \leq \boldsymbol{g}^T[j]$.
Since $\boldsymbol{g}^T[i] \leq \boldsymbol{g}^T[j]$ holds for an arbitrary instance $J_k$, $\boldsymbol{g}^T$ must be an increasing vector.
\end{proof}

\begin{lemma}
If $\boldsymbol{f}^T$ is an increasing vector, $\boldsymbol{g}^{T} = \boldsymbol{f}^T \mathcal{M}_{srv(t)}^T$, and $\boldsymbol{g'}^T = \boldsymbol{f}^{T} \mathcal{\widehat{M}}_{srv(t)}^T$ with 
$\mathcal{\widehat{M}}_{srv(t)}= (\boldsymbol{I} + \sum_{i=0}^{N}{\pmin E_i(t)})$ and $LQ_i(t) \ge \pmin$, then $\boldsymbol{g}^{T} \geq \boldsymbol{g'}^{T}$ component-wise.
\label{lemma:lowerbound}
\end{lemma}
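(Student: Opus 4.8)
The plan is to prove $\boldsymbol{g}^T[i] \ge \boldsymbol{g'}^T[i]$ separately for each row index $i$, using the structural description of $\mathcal{M}_{srv(t)}$ established just before Definition~\ref{definition:increasing-vector}. The key observation I would make at the outset is that $\mathcal{\widehat{M}}_{srv(t)}$ has exactly the same sparsity pattern as $\mathcal{M}_{srv(t)}$: both arise from the decomposition $\boldsymbol{I} + \sum_{i=0}^{N} w_i E_i(t)$ with the \emph{same} matrices $E_i(t)$, the only difference being that $\mathcal{M}_{srv(t)}$ uses the weights $w_i = \linkqt{i}$ while $\mathcal{\widehat{M}}_{srv(t)}$ uses $w_i = \pmin$. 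Hence, for a fixed row $i$, the entries $\boldsymbol{g}^T[i]$ and $\boldsymbol{g'}^T[i]$ are each governed by one of the two row patterns, and it is the \emph{same} pattern for both.

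First I would handle the inactive rows: if row $i$ of $\mathcal{M}_{srv(t)}$ has its unique nonzero entry on the diagonal (equal to $1$), the same holds for $\mathcal{\widehat{M}}_{srv(t)}$, so $\boldsymbol{g}^T[i] = \boldsymbol{f}^T[i] = \boldsymbol{g'}^T[i]$ and the inequality holds with equality. For an active row $i$ --- where some instance $J_k$ in the service list is executed from $s_i$ (so $s_i[k] = \failure$) and $s_a = \onSuccess(s_i, J_k)$ --- the two entries are the convex combinations
\[
\boldsymbol{g}^T[i] = (1 - \linkqt{k})\,\boldsymbol{f}^T[i] + \linkqt{k}\,\boldsymbol{f}^T[a],
\qquad
\boldsymbol{g'}^T[i] = (1 - \pmin)\,\boldsymbol{f}^T[i] + \pmin\,\boldsymbol{f}^T[a],
\]
so that $\boldsymbol{g}^T[i] - \boldsymbol{g'}^T[i] = (\linkqt{k} - \pmin)\,\big(\boldsymbol{f}^T[a] - \boldsymbol{f}^T[i]\big)$. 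The first factor is nonnegative by the hypothesis $\linkqt{i} \ge \pmin$; the second is nonnegative because $s_i \preceq s_a$ by Lemma~\ref{lemma:next-order} and $\boldsymbol{f}^T$ is an increasing vector in the sense of Definition~\ref{definition:increasing-vector}. Therefore $\boldsymbol{g}^T[i] \ge \boldsymbol{g'}^T[i]$, and since $i$ is arbitrary, $\boldsymbol{g}^T \ge \boldsymbol{g'}^T$ component-wise.

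There is no genuinely hard step here; the content is just monotonicity of a convex combination in its weight. The only point requiring care --- and the reason I would insist on arguing row-by-row rather than with a single matrix identity --- is that different rows are associated with different instances $J_k$, so one cannot factor out a single global weight; each row must be compared using the one weight that appears in it. Making the shared-support observation explicit is what lets the comparison reduce cleanly to ``the larger weight sits on the larger of $\boldsymbol{f}^T[i]$ and $\boldsymbol{f}^T[a]$.''
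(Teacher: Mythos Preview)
Your proof is correct and essentially the same as the paper's: both reduce the comparison to a row-by-row difference of the form $(\linkqt{k}-\pmin)(\boldsymbol{f}[a]-\boldsymbol{f}[i])$, invoke Lemma~\ref{lemma:next-order} to get $s_i\preceq s_a$, and then use that $\boldsymbol{f}$ is increasing. The only cosmetic difference is that the paper first writes the matrix identity $\boldsymbol{g}^T-\boldsymbol{g'}^T=\big(\sum_i(\linkqt{i}-\pmin)E_i(t)\big)\boldsymbol{f}$ before examining rows, whereas you go directly to the row expressions; the content is identical.
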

\begin{proof}
Consider $\boldsymbol{g}^{T} - \boldsymbol{g'}^{T}$:

\begin{align*}
    \boldsymbol{g}^{T} - \boldsymbol{g'}^{T} &= \boldsymbol{f}^{T} \mathcal{M}_{srv(t)}^T - \boldsymbol{f}^{T} \mathcal{\widehat{M}}_{srv(t)}^T\\
    &= \boldsymbol{f}^{T} \left( \boldsymbol{I} + \sum_{i=0}^{N}{LQ_i(t) E_i(t)}\right)^T - \boldsymbol{f}^{T} \left( \boldsymbol{I} + \sum_{i=0}^{N}{\pmin E_i(t)}\right)^T \\
    &= \left(\sum_{i=0}^{N}{(LQ_i(t) - m)E_i(t)}\right)\boldsymbol{f}
\end{align*}

\noindent
Consider now an arbitrary instance $J_k$ and state $s_i$ such that $s_a = \onSuccess(s_i, J_k)$.
By Lemma \ref{lemma:next-order},
$s_i \preceq s_a$.
Since $\boldsymbol{f}$ is an increasing vector (because $\boldsymbol{f}^T$ is an increasing vector), $\boldsymbol{f}[i] \leq \boldsymbol{f}[a] \implies 0 \leq \boldsymbol{f}[a] - \boldsymbol{f}[i]$.
Notice that either $E_i(t)[i,i] = E_i(t)[i,a] = 0$ or $E_i(t)[i,i] = -1$ and $E_i(t)[i,a] = 1$.

\noindent
If $E_i(t)[i,i] = E_i(t)[i,a] = 0$, then

\begin{align*}
    \left(\sum_{i=0}^{N}{(LQ_i(t) - m)E_i(t)}\right)[i,:]\boldsymbol{f} = 0
\end{align*}

\noindent
If instead $E_i(t)[i,i] = -1$ and $E_i(t)[i,a] = 1$ then

\begin{align*}
    \left(\sum_{i=0}^{N}{(LQ_i(t) - m)E_i(t)}\right)[i,:]\boldsymbol{f} &=
    (LQ_i(t) - m)\boldsymbol{f}[a] - (LQ_i(t) - m)\boldsymbol{f}[i] \\
    &\geq 0
\end{align*}

\noindent
Since this result holds for an arbitrary instance $J_k$, $\boldsymbol{g}^{T} \geq \boldsymbol{g'}^{T}$ component-wise.
\end{proof}

We are now prepared to prove Theorem \ref{th:monotonic} which we reproduce below for convenience.

\begin{customthm}{2}
Consider a star topology that has node $A$ as a base station and a set of flows $\flows = \{F_0, F_1, \dots F_N\}$ that have $A$ as destination.
Let \linkqt{0}, \linkqt{1}, \dots \linkqt{N} be the quality of the links used by each flow in slot $t$ such that $\pmin \le LQ_i(t) \le 1$ for all flows $F_i$ ($F_i \in \flows$) and all slots $t$ ($t \in \mathbb{N}$).
Under these assumptions, the reliability $\reliability{i,t}$ of an instance $J_i$ after executing $t$ \pullops of the \pn policy $\pi$ is lower bounded by $\reliabilitylb{i,t}.$
\end{customthm}
\begin{proof}
The end-to-end reliability of flow instance $J_i$ after $t$ slots is:
\begin{displaymath}
    \reliability{i,t} =  \pstate{t} \boldsymbol{\chi}_{i} =  s_0^{T} \tmat{srv(0)} \tmat{srv(1)} \cdots \tmat{srv(t)} \boldsymbol{\chi}_{i}
\end{displaymath}

\noindent
Since $R_{i,t}$ is a number, we can apply the transpose to obtain:
\begin{align*}
    R_{i,t} &= (R_{i,t})^T  \nonumber \\
            &= \boldsymbol{\chi}_{i}^{T} \mathcal{M}_{srv(0)}^{T} \mathcal{M}_{srv(1)}^{T} \cdots \mathcal{M}_{srv(t)}^{T} s_0     ~~~~~~~~~~~~~
\end{align*}

\noindent
We observe that $\boldsymbol{\chi}_{i}$ is an increasing vector by construction, and by extension, $\boldsymbol{\chi}_{i}^T$.
By Lemma \ref{lemma:lowerbound} the following must be true as a result:

\begin{align*}
    R_{i,t} &= \boldsymbol{\chi}_{i}^{T} \mathcal{M}_{srv(0)}^{T} \mathcal{M}_{srv(1)}^{T} \cdots \mathcal{M}_{srv(t)}^{T} s_0\\
    &\geq
    \boldsymbol{\chi}_{i}^{T} \mathcal{\widehat{M}}_{srv(0)}^{T} \mathcal{M}_{srv(1)}^{T} \cdots \mathcal{M}_{srv(t)}^{T} s_0
\end{align*}

\noindent
As a consequence of Lemma \ref{lemma:increasing}, $\boldsymbol{\chi}_{i}^{T} \mathcal{\widehat{M}}_{srv(0)}^{T}$ is an increasing vector and therefore we can again apply Lemma \ref{lemma:lowerbound} to get the following:

\begin{align*}
    R_{i,t} &\geq
    \boldsymbol{\chi}_{i}^{T} \mathcal{\widehat{M}}_{srv(0)}^{T} \mathcal{\widehat{M}}_{srv(1)}^{T} \cdots \mathcal{M}_{srv(t)}^{T} s_0
\end{align*}

\noindent
Continuing in this way gives the desired result

\begin{align*}
    R_{i,t} &\geq
    \boldsymbol{\chi}_{i}^{T} \mathcal{\widehat{M}}_{srv(0)}^{T} \mathcal{\widehat{M}}_{srv(1)}^{T} \cdots \mathcal{\widehat{M}}_{srv(t)}^{T} s_0 \\
    &= \widehat{R}_{i,t}
\end{align*}
\end{proof}

\end{document}